\newtheorem{thm}{Theorem}
\newtheorem*{thmkp}{Theorem 1.1 in Kim and Pollard (1990)}
\newtheorem{lem}[thm]{Lemma}
\newtheorem{corollary}{Corollary}[thm]
\newtheorem{ass}{Assumption}
\newtheorem{rem}{Remark}
\newcommand{\R}{\mathbb{R}}
\newcolumntype{L}[1]{>{\raggedright\let\newline\\arraybackslash\hspace{0pt}}m{#1}}
\newcolumntype{C}[1]{>{\centering\let\newline\\arraybackslash\hspace{0pt}}m{#1}}
\newcolumntype{R}[1]{>{\raggedleft\let\newline\\arraybackslash\hspace{0pt}}m{#1}}
\newcommand{\E}{\mathbb{E}}
\newcommand{\ind}{\mathbf{1}}
\newcommand{\argmax}{\mathop{\rm arg~max}\limits}
\newcommand{\argmin}{\mathop{\rm arg~min}\limits}
\definecolor{green(pigment)}{rgb}{0.0, 0.65, 0.31}
\definecolor{cornellred}{rgb}{0.7, 0.11, 0.11}
\DeclareMathOperator{\plim}{\mathrm{plim}}
\DeclareMathOperator{\var}{\mathrm{Var}}
\begin{document}

\begin{titlepage}
\title{Regret Analysis in Threshold Policy Design\thanks{I am grateful to Ivan Canay, Joel Horowitz, and Charles Manski for their guidance in this project. I am also thankful to Anastasiia Evdokimova, Amilcar Velez, the Editor, the Associate Editor, the referees, participants of the Econometric Reading Group at Northwestern University, and attendees of the 2024 Annual Health Econometrics Workshop at John Hopkins University for their comments and suggestions.}}
\author{Federico Crippa \\ Department of Economics \\ Northwestern University\\
\href{mailto:federicocrippa2025@u.northwestern.edu}{federicocrippa2025@u.northwestern.edu}
}
\date{April 4, 2025}
\maketitle
\begin{abstract}

Threshold policies are decision rules that assign treatments based on whether an observable characteristic exceeds a certain threshold. They are widespread across multiple domains, including welfare programs, taxation, and clinical medicine. This paper examines the problem of designing threshold policies using experimental data, when the goal is to maximize the population welfare. First, I characterize the regret -- a measure of policy optimality -- of the Empirical Welfare Maximizer (EWM) policy, popular in the literature. Next, I introduce the Smoothed Welfare Maximizer (SWM) policy, which improves the EWM's regret convergence rate under an additional smoothness condition. The two policies are compared by studying how differently their regrets depend on the population distribution, and investigating their finite sample performances through Monte Carlo simulations. In many contexts, the SWM policy guarantees larger welfare than the EWM. An empirical illustration demonstrates how the treatment recommendations of the two policies may differ in practice.

\noindent \\
\vspace{0in}\\
\noindent\textit{Keywords:} Threshold policies, heterogeneous treatment effects, statistical decision theory, randomized experiments. \\
\noindent\textit{JEL classification codes:} C14, C44

\bigskip
\end{abstract}
\setcounter{page}{0}
\thispagestyle{empty}
\end{titlepage}
\pagebreak \newpage

\doublespacing

\section{Introduction}

Treatments are rarely universally assigned. When their effects are heterogeneous across individuals, policymakers aim to target those who would benefit the most from specific interventions. Scholarships, for example, are awarded to students with high academic performance or financial need; tax credits are provided to companies engaged in research and development activities; medical treatments are prescribed to sick patients. Despite the potential complexity and multidimensionality of heterogeneous treatment effects, treatment eligibility criteria are often kept quite simple. This paper studies one of the most common of these simple assignment mechanisms: threshold policies, where the decision to assign the treatment is based on whether a scalar observable characteristic — referred to as the \textit{index} — exceeds a specified \textit{threshold}.

Threshold policies are ubiquitous, ranging across multiple domains. In welfare policies, they regulate the qualification for public health insurance programs through age \citep{card2008impact,shigeoka2014effect} and anti-poverty programs through income \citep{crost2014aid}. In taxation, they determine marginal rates through income brackets \citep{taylor2003corporation}. In clinical medicine, the referral for liver transplantation depends on whether a composite of laboratory values obtained from blood tests is beyond a certain threshold \citep{kamath2007model}. Even criminal offenses are defined through threshold policies: sanctions for Driving Under the Influence are based on whether the Blood Alcohol Content exceeds specific values.

Economists frequently study the outcomes of threshold policies, with the regression discontinuity design (RDD) being a widely used tool for causal inference. The RDD focuses on an ex-post evaluation of treatment effects at discontinuity points. In this paper, my perspective is different: I consider the ex-ante problem faced by a policymaker seeking to implement a threshold policy and interested in maximizing the average social welfare, targeting individuals who would benefit from the treatment. Experimental data are available: how should they be used to design the threshold policy for the population?

Answering this question requires defining a criterion by which policies are evaluated. Since the performance of a policy depends on the unknown data distribution, the policymaker searches for a policy that behaves uniformly well across a specified family of data distributions (the state space). The regret of a policy is the (possibly) random difference between the maximum achievable welfare and the welfare it generates in the population. Policies can be evaluated considering their maximum expected regret \citep{manski2004statistical, hirano2009asymptotics, kitagawa2018should, athey2021policy, mbakop2021model, manski2023probabilistic}, or other worst-case statistics of the regret distribution \citep{manski2023statistical, kitagawa2022treatment}. Once the criterion has been established, optimal policy learning aims to pinpoint the policy that minimizes it. Rather than directly tackling the functional minimization problem, following the literature, I consider candidate threshold policy functions and characterize some properties of their regret.

The first contribution of this paper is to show how to derive the asymptotic distribution of the regret for a given threshold policy. The underlying intuition is simple: threshold policies use sample data to choose the threshold, which is hence a random variable with a certain asymptotic behavior. A Taylor expansion establishes a map between the regret of a policy and its threshold, allowing one to characterize the asymptotic distribution of the regret through the asymptotic behavior of the threshold. This shifts the problem to characterizing the asymptotic distribution of the threshold estimator, simplifying the analysis as threshold estimators can be studied with common econometric tools.

I start considering the Empirical Welfare Maximizer (EWM) policy studied by \cite{kitagawa2018should}. They derive uniform bounds for the expected regret of the policy for various policy classes, where the policy class impacts the findings only in terms of its VC dimensionality. My approach is more specific, considering only threshold policies, but also more informative: leveraging the knowledge of the policy class, I characterize the asymptotic distribution of the regret. As mentioned above, this requires the derivation of the asymptotic distribution of the threshold for the EWM policy, which is non-standard: it exhibits the ``cube root asymptotics'' behavior studied in \cite{kim1990cube}. The convergence rate is $n^{\frac{1}{3}}$, and the asymptotic distribution is of \cite{chernoff1964estimation} type. The non-standard behavior and the unusual convergence rate are due to the discontinuity in the objective function and are reflected in the asymptotic distribution of the regret, and in its $n^{\frac{2}{3}}$ pointwise convergence rate.

My second contribution is hence the introduction of a novel threshold policy, the Smoothed Welfare Maximizer (SWM) policy. This approach modifies the Empirical Welfare Maximizer (EWM) policy by replacing the indicator function in its objective function with a smooth kernel. Under certain smoothness regularity assumptions, the threshold estimator for the SWM policy is asymptotically normal, and its regret achieves a pointwise convergence rate of $n^{\frac{4}{5}}$. While this paper does not establish the optimality of this rate -- meaning it does not determine whether the SWM policy attains the fastest possible convergence rate under the given regularity conditions -- my results imply that, under the additional assumption, the SWM policy achieves a faster pointwise convergence rate than the EWM policy.

Building on these asymptotic results, I extend the comparison of the regrets with the EWM and the SWM policies beyond their convergence rates. My findings allow to compare the asymptotic distributions and investigate how differently they depend on the data distribution; theoretical results are helpful to inform and guide the Monte Carlo simulations, which confirm that the asymptotic results approximate the actual finite sample behaviors. Notably, the simulations confirm that the SWM policy may guarantee lower expected regret in finite samples.

To demonstrate the practical differences between the two policies, I present an empirical illustration considering a job-training treatment. In that context, the SWM threshold policy would recommend treating 66.2\% of unemployed workers, as opposed to 63.6\% with the EWM policy. This difference of almost 3 percentage points is economically non-negligible.

\subsection{Related Literature}

This paper relates to the statistical decision theory literature studying the problem of policy assignment with covariates \citep{manski2004statistical,stoye2012minimax,kitagawa2018should,athey2021policy,mbakop2021model,sun2021treatment,sun2021empirical,viviano2023fair}. My setting is mainly related to \cite{kitagawa2018should} and \cite{athey2021policy}, with some notable differences. \cite{kitagawa2018should} study the EWM policy for policy classes with finite VC dimension. They derive finite sample bounds for the expected regret without relying on smoothness assumptions. \cite{athey2021policy} consider a double robust version of the EWM and allow for observational data. Under smoothness assumptions analogous to mine, they derive asymptotic bounds for the expected regret for policy classes with finite VC dimensions. Conversely, results in this paper apply exclusively to threshold policies, relying on a combination of the assumptions in \cite{kitagawa2018should} and \cite{athey2021policy}. The narrower focus allows for more comprehensive results: I derive the asymptotic distribution of the regret, rather than providing some bounds for the expected regret. A critical distinction lies in the different nature of the convergence rates. My results are valid pointwise, derived by leveraging additional assumptions on the data distribution. As a result, the rates I obtain for the EWM and SWM policies are faster than the $\sqrt{n}$ rate reported as optimal by \cite{kitagawa2018should} and \cite{athey2021policy}. Their $\sqrt{n}$ rate is, in fact, uniformly valid for a broader family of data distributions, including extreme cases (e.g., where conditional ATE is flat at the threshold) that are excluded from my analysis. Their uniform results may be viewed as a benchmark: when more structure is imposed on the problem and certain data distributions are excluded, the rates can be improved.

Optimal policy learning finds its empirical counterpart in the literature dealing with targeting, especially common in development economics. Recent studies rely on experimental evidence to decide who to treat in a data-driven way \citep{hussam2022targeting,aiken2022machine}, even if \cite{haushofer2022targeting} pointed out the need for a more formalized approach to the targeting decision problem. The availability for the policymaker of appropriate tools to use the data in the decision process is probably necessary to guarantee a broader adoption of data-driven targeting strategies. Focusing on threshold policies, this paper explicitly formulates the decision problem, introduces implementable policies (the EWM and the SWM policy), and compares their asymptotic properties.

Turning to the threshold estimators, I already mentioned that the EWM policy exhibits the cube root of $n$ asymptotics studied by \cite{kim1990cube}, distinctive of several estimators in different contexts. Noteworthy examples are the maximum score estimator in choice models \citep{manski1975maximum}, the split point estimator in decision trees \citep{banerjee2007confidence}, and the risk minimizer in classification problems \citep{mohammadi2005asymptotics}, among others. Specific to my analysis is the emergence of the cube root asymptotic within a causal inference problem relying on the potential outcomes model, which is then mirrored in the regret's asymptotic distribution.

Addressing the cube root problem by smoothing the indicator in the objective function aligns closely with the strategy proposed by \cite{horowitz1992smoothed} for studying the asymptotic behavior of the maximum score estimator. Objective functions are nonetheless different, and in my context, I derive the asymptotic distribution for both the unsmoothed (EWM) and the smoothed (SWM) policies. This is convenient, as it allows me to compare not only the convergence rates but also the entire asymptotic distributions of the estimators and their regrets and study the asymptotic approximations in Monte Carlo simulations.

The rest of the paper is structured as follows. Section \ref{sec:overview} introduces the problem and outlines my analytical approach. Section \ref{sec:formal} derives formal results for the asymptotic distribution of the EWM and SWM policies and their regrets. In Section \ref{sec:mc}, I investigate finite sample performance of the EWM and SWM policies through Monte Carlo simulations, while in Section \ref{sec:empirical} I consider the analysis of experimental data from the National Job Training Partnership Act (JTPA) Study to compare the practical implications of the policies. Section \ref{sec:conclusion} concludes.

\section{Overview of the Problem} \label{sec:overview}

I consider the problem of a policymaker who wants to implement a binary treatment in a population of interest. Individuals are characterized by a vector of observable characteristics $\textbf{X} \in \R^d$, on which the policymaker bases the treatment assignment choice. A policy is hence a map $\pi(\textbf{x}): \R^d \rightarrow \{0,1\}$, from observable characteristics to the binary treatment status. The policymaker is utilitarian: its goal is to maximize the average welfare of the population. Indicating by $Y_1$ and $Y_0$ the potential outcomes with and without the treatment, population average welfare generated by a policy $\pi$ can be written as
\begin{gather}
    W(\pi) =  \E[Y_1 \pi(\textbf{X}) + Y_0 (1 - \pi(\textbf{X}) )].
\end{gather}
When treatment effects are heterogeneous, the same treatment can have opposite average effects across individuals with different $\textbf{X}$'s. For this reason, the policy assignment may vary with $\textbf{X}$: the policymaker wants to target only those who benefit from being treated, to maximize the average welfare.

The policy learning literature has considered several classes $\Pi$ of policy functions, such as linear eligibility indexes, decision trees, or monotone rules, discussed in \cite{kitagawa2018should,athey2021policy, mbakop2021model}. This paper focuses on threshold policies, a specific class of policy functions that can be represented as
\begin{gather}
    \pi(\textbf{X}) = \pi(X,t) = \mathbf{1} \{X > t \}.
\end{gather}
The threshold policy assigns the treatment whenever the scalar index $X \in \R$, one of the observable characteristics, exceeds a threshold $t$, the parameter to be chosen.

Threshold policies are widespread: they regulate, beyond others, organ transplants \cite{kamath2007model}, taxation \citep{taylor2003corporation}, and access to social welfare programs \citep{ card2008impact,crost2014aid}. Their key advantage seems to be simplicity: threshold policies are easy for eligible individuals to understand, simple for policymakers to implement and monitor, and transparent, with clearly defined eligibility criteria -- unlike more opaque black-box algorithms. These factors often justify the use of threshold policies even when a more structured alternative policy class may theoretically deliver higher welfare. In practice, these alternatives require additional resources for implementation, adoption, and monitoring -- potentially offsetting the welfare gains. Modeling this trade-off goes beyond the scope of this paper, where the restriction to the threshold policy class is taken as given and should not be interpreted as an endorsement of threshold policies. Nonetheless, it is worth noting that if the conditional average treatment effect (CATE) is monotone in $X$ and exhibits sign heterogeneity, as is often the case in applications, then the threshold policy is optimal among all policies that use only $X$ to assign the treatment.

I will focus on the case when the index $X$ is chosen before the experiment. Population welfare depends only on threshold $t$, and can be written as
\begin{gather}
    W(\pi) = W(t) = \E[Y_1\mathbf{1}\{X > t\} + Y_0\mathbf{1}\{X \leq t\}]. 
\end{gather}

Choosing the policy is equivalent to choosing the threshold. If the joint distribution of $Y_1$, $Y_0$, and $X$ were known, the policymaker would implement the policy with threshold $t^*$ defined as:
\begin{gather} \label{eq:estimand}
    t^* \in \argmax_{t} \E[Y_1\mathbf{1}\{X> t\} + Y_0\mathbf{1}\{X \leq t\}]
\end{gather}
which would guarantee the maximum achievable welfare $W(t^*)$.

The problem described in equation \eqref{eq:estimand} is infeasible since the joint distribution of $Y_1$, $Y_0$, and $X$ is unknown. The policymaker observes an experimental sample $Z=\{Z_i\}_{i=1}^n=\{Y_i,D_i,X_i\}$, where $Y$ is the outcome of interest, $D$ the randomly assigned treatment status, and $X$ the policy index. Experimental data, which allows to identify the conditional average treatment effect, are used to learn the threshold policy $\hat{t}_n = \hat{t}_n(Z)$, function of the observed sample.

Statistical decision theory deals with the problem of choosing the map $\hat{t}_n$. First, it is necessary to specify the decision problem the policymaker faces. For any threshold policy $\hat{t}_n$, define the regret $\mathcal{R}(\hat{t}_n)$:
\begin{gather}
    \mathcal{R}(\hat{t}_n) = W(t^*) - W(\hat{t}_n),
\end{gather}
a measure of welfare loss indicating the suboptimality of policy $\hat{t}_n$. The regret depends on the unknown data distribution: the policymaker specifies a state space, and searches for a policy that does well uniformly for all the data distributions in the state space. Following \cite{manski2004statistical}, statistical decision theory has mainly focused on the problem of minimizing the maximum expected regret, looking for a policy $\hat{t}_n$ that does uniformly well on average across repeated samples.

Directly solving the constrained minimization problem of the functional $\sup \E[\mathcal{R}(\hat{t}_n)]$ is impractical: the literature instead focuses on considering a specific policy map and studying its properties, for example showing its rate optimality, through finite sample valid \citep{kitagawa2018should} or asymptotic \citep{athey2021policy} arguments. Following this approach, I characterize and compare some properties for the regret of two different threshold policies, the Empirical Welfare Maximizer (EWM) policy, commonly studied in the literature, and the novel Smoothed Welfare Maximizer (SWM) policy.

\cite{kitagawa2018should} derive finite sample bounds for the expected regret of the EWM policy for a wide range of policy function classes. In their results, the policy class $\Pi$ affects the bounds only through its VC dimension, and the knowledge of $\Pi$ is not further exploited. Conversely, I leverage the additional structure from the knowledge of the policy class and characterize the asymptotic distribution of the regret for the EWM and the SWM threshold policies, comparing how their regrets depend on the data distribution. My results could hence be of interest also when decision problems not involving the expected regret are considered, as in \cite{manski2023statistical} and \cite{kitagawa2022treatment}: I characterize the asymptotic behavior of regret quantiles, and the asymptotic distributions can be used to simulate expectations of their non-linear functions.

To derive my results, I take advantage of the link between a threshold policy function $\hat{t}_n$ and its regret $\mathcal{R}(\hat{t}_n)$. Let $\{r_n\}$ be a sequence such that $r_n \rightarrow \infty$ for $n \rightarrow \infty$, and suppose that $r_n (\hat{t}_n - t^*)$ converges to a non degenerate limiting distribution, i.e $(\hat{t}_n - t^*) = O_p(r_n^{-1})$. 

Assume function $W(t)$ to be twice continuously differentiable, and consider its second-order Taylor expansion around $t^*$:
\begin{gather*}
    W(\hat{t}_n) = W(t^*) + \underbrace{W'(t^*)}_{=0} \left(\hat{t}_n-t^*\right) + \frac{1}{2} W''(\tilde{t}) \left(\hat{t}_n-t^*\right)^2
\end{gather*}
where $|\tilde{t}-t^*| \leq |\hat{t}_n - t^*|$, and $W'(t^*)=0$ by optimality of $t^*$. The previous equation can be written as
\begin{gather} \label{eq:asympt_regret_d}
    r_n^2 \mathcal{R}(\hat{t}_n) = \frac{1}{2} W''(\tilde{t}) \left[r_n\left(\hat{t}_n-t^*\right)\right]^2,
\end{gather}
establishing a relationship between the convergence rates of $\hat{t}_n$ and $\mathcal{R}(\hat{t}_n)$, and between their asymptotic distributions. Equation \eqref{eq:asympt_regret_d} therefore shows how the rate of convergence and the asymptotic distribution of regret $\mathcal{R}(\hat{t}_n)$ can be studied through the rate of convergence and the asymptotic distribution of policy $\hat{t}_n$. In the next section, I consider the EWM policy $\hat{t}^e_n$ and the SWM policy $\hat{t}^s_n$: through their asymptotic behaviors, I characterize the asymptotic distributions of their regrets $\mathcal{R}(\hat{t}^e_n)$ and $\mathcal{R}(\hat{t}^s_n)$.

\begin{rem}
{\normalfont (Ceteris Paribus Optimality)} Following the literature in statistical decision theory, I assume that the experiment is conducted in a population with the same distribution as the one where the policy will be implemented. This implicitly assumes that individuals in the target population do not change their behavior in response to the policy -- for instance, by altering their covariates to gain access to the treatment. In some contexts, this assumption may be unrealistic. Existing empirical studies on threshold policies highlight this issue: in regression discontinuity design, manipulation tests are specifically aimed at detecting such reactions to the policy. If manipulation occurs, it invalidates the optimality of the policy estimated in the experiment.
\end{rem}

\section{Formal Results} \label{sec:formal}
Let $Y_0$ and $Y_1$ be scalar potential outcomes, $D$ the binary treatment assignment in the experiment, and $X$ the observable index. $\{Y_0,Y_1, D, X \}$ are random variables distributed according to the distribution $P$. They satisfy the following assumptions, which guarantee the identification of the optimal threshold:

\begin{ass} \label{ass:identification}
{\normalfont (Identification)} 
\begin{enumerate}[label=1.\arabic*]
\item \label{ass:sutva}
    {\normalfont (No interference)} Observed outcome $Y$ is related with potential outcomes by the expression $Y= DY_1 + (1-D)Y_0$.
\item \label{ass:random}
    {\normalfont (Unconfoundedness)} Distribution $P$ satisfies $D \perp\!\!\!\perp (Y_0,Y_1) |X$.
\item \label{ass:overlap}
    {\normalfont (Overlap)} Propensity score $p(x)=\E[D|X=x]$ is assumed to be known and such that $p(x) \in (\eta,1-\eta)$, for some $\eta \in (0,0.5)$.
\item \label{ass:jointdist}
    {\normalfont (Joint distribution)} Potential outcomes $(Y_0, Y_1)$ and index $X$ are continuous random variables with joint probability density function $\varphi (y_0,y_1,x)$, and marginal densities $\varphi_0$, $\varphi_1$, and $f_x$ respectively. Expectations $\E[Y_0|x]$ and $\E[Y_1|x]$, for $x$ in the support of $X$, exist.
\end{enumerate}
\end{ass}

Assumptions \ref{ass:sutva}, \ref{ass:random}, and \ref{ass:overlap} are standard assumptions in many causal models. Assumption \ref{ass:sutva} requires the outcome of each unit to depend only on their treatment status, excluding spillover effects. Assumption \ref{ass:random} requires random assignment of the treatment, conditionally on $X$. Assumption \ref{ass:overlap} requires that, for any value of $X$, there is a positive probability of observing both treated and untreated units. Probabilities of being assigned to the treatment may vary with $X$, allowing for stratified experiments.

Assumption \ref{ass:jointdist} specifies the focus on continuous outcome and index. While it would be possible to accommodate discrete $Y_0$ and $Y_1$, maintaining the continuity of $X$ remains essential. The arguments developed in this paper, in fact, are not valid for a discrete index: my focus is on studying optimal threshold policies in contexts where the probability of observing any value on the support of the index $X$ is zero, and the threshold must be chosen from a continuum of possibilities. 

Under Assumption \ref{ass:identification}, optimal policy $t^*$ defined in \eqref{eq:estimand} can be written as
\begin{align}
    t^* \in & \argmax_{t} \E_P[Y_1\ind\{X> t\} + Y_0\ind\{X \leq t\}] \\
    = & \argmax_{t} \E_P\left[\left(\frac{D Y}{p(X)} - \frac{(1-D) Y}{(1-p(X))} \right) \ind\{X > t\}\right]
\end{align}
and is hence identified. This standard result specifies under which conditions an experiment allows to identify $t^*$.

\subsection{Empirical Welfare Maximizer Policy}

Policymaker observes an i.i.d. random sample $Z=\{Y_i,D_i,X_i\}$ of size $n$ from $P$, and considers the Empirical Welfare Maximizer policy $\hat{t}^e_n$, the sample analog of $t^*$ in equation \eqref{eq:estimand}\footnote{The objective function is piecewise constant, so solving the optimization problem requires evaluating the function $n+1$ times. The solution is the convex set of points in $\R$ that achieve the maximum of these values.}:
\begin{gather} \label{eq:estimator}
\hat{t}^e_n=\argmax_{t} \frac{1}{n} \sum_{i=1}^n \left(\frac{D_i Y_i}{p(X_i)}\ind\{X_i > t\} + \frac{(1-D_i) Y_i}{(1-p(X_i))}\ind\{X_i \leq t\} \right).
\end{gather}
Policy $\hat{t}^e_n$ can be seen as an extremum estimator, maximizer of a function not continuous in $t$.

\subsubsection{Consistency of $\hat{t}^e_n$}

First, I will prove that $\hat{t}^e_n$ consistently estimates the optimal threshold $t^*$, implying that $\mathcal{R}(\hat{t}^e_n) \rightarrow ^p 0$. To prove this result, I need the following assumptions on the data distribution.

\begin{ass} \label{ass:consistency}
{\normalfont (Consistency)} 
\begin{enumerate}[label=2.\arabic*]
\item \label{ass:uniq}
    {\normalfont (Maximizer $t^*$)} Maximizer $t^* \in \mathcal{T} $ of $\E[(Y_1-Y_0) \ind\{X> t\}]$ exists and is unique. It is an interior point of the compact parameter space $\mathcal{T} \subseteq \R$.
\item \label{ass:sqintegr}
    {\normalfont (Square integrability)} Conditional expectations $\E[Y_0^2|X]$ and $\E[Y_1^2|X]$ exist.
\item \label{ass:smoot}
    {\normalfont (Smoothness)} In a neighbourhood of $t^*$, density $f_x(x)$ is positive, and function $\E[(Y_1-Y_0) \ind\{X> t\}]$ is at least $s$-times continuously differentiable in $t$.
\end{enumerate}
\end{ass}

By requiring the existence of the optimal threshold in the interior of the parameter space, Assumption \ref{ass:uniq} is assuming heterogeneity in the sign of the conditional average treatment effect $\E[Y_1 - Y_0 | X]$. It is because of this heterogeneity that the policymaker implements the threshold policy, targeting groups that would benefit from being treated. The assumption neither excludes the multiplicity of local maxima, as long as the global one is unique, nor excludes unbounded support for $X$, but requires the parameter space to be compact. Uniqueness of the maximizer is not required in the standard analysis of the EWM policy, where partial identification of the optimal policy is allowed. A sufficient condition for Assumption \ref{ass:uniq}, easy to interpret and plausible in many applications, is that the conditional average treatment effect has negative and positive values, and crosses zero exactly once.

Assumption \ref{ass:sqintegr} requires that the conditional potential outcomes have finite second moments and is satisfied when $Y$ is assumed to be bounded (as in \cite{kitagawa2018should}).

Assumption \ref{ass:smoot} will be used with increasing values of $s$ to prove different results. To prove consistency, it needs to hold for $s=0$, requiring the continuity of the objective function $W(t)$ in a neighborhood of $t^*$. The derivative of $\E[(Y_1-Y_0) \ind\{X> t\}]$ with respect to $t$ is equal to $-f_x(t) \tau(t)$, where $\tau(x) = \E[Y_1-Y_0 | X=x]$ is the conditional average treatment effect. Assumption \ref{ass:smoot} with $s \geq 1$ hence requires smoothness of $f_x(x)$ and $\tau(x)$, in a neighborhood of $t^*$.

The following theorem proves the consistency of $\hat{t}^e_n$ for $t^*$.

\begin{thm} \label{thm:con}
Consider the EWM policy $\hat{t}^e_n$ defined in equation \eqref{eq:estimator} and the optimal policy $t^*$ defined in equation \eqref{eq:estimand}. Under Assumptions \ref{ass:identification} and \ref{ass:consistency} (with $s=0$),
\begin{gather*}
    \hat{t}^e_n \rightarrow^{a.s.} t^*
\end{gather*}
i.e. $\hat{t}^e_n$ is a strongly consistent estimator for $t^*$.
\end{thm}

\subsubsection{Asymptotic Distribution for $\hat{t}^e_n$}

The fact that $\hat{t}^e_n$ is the maximizer of a function not continuous in $t$ directly affects the convergence rate and the asymptotic distribution. The EWM policy $\hat{t}^e_n$ exhibits the ``cube root asymptotics'' behavior studied in \cite{kim1990cube}, the same as, beyond others, the maximum score estimator \citep{manski1975maximum}, and the split point estimator in decision trees \citep{banerjee2007confidence}.

The limiting distribution is not Gaussian, and its derivation requires two additional regularity conditions on $P$:

\begin{ass} \label{ass:asymptotic}
{\normalfont (Asymptotic Distribution)} 
\begin{enumerate}[label=3.\arabic*]
\item \label{ass:nonflattau}
    {\normalfont (Non-flat $\tau(X)$)} In a neighbourhood of $t^*$, the derivative of the conditional average treatment effect, $\frac{\partial \E\left[Y_1 - Y_0 | X \right]}{\partial X}$, is non-zero.
\item \label{ass:tail}
     {\normalfont (Tail condition)} Let $\varphi_1$ and $\varphi_0$ be the probability density functions of $Y_1$ and $Y_0$. Assume that, as $|y| \rightarrow \infty$, $\varphi_1(y) = o(|y|^{-(4+\delta)})$ and $\varphi_0(y) = o(|y|^{-(4+\delta)})$, for $\delta > 0$.
\end{enumerate}
\end{ass}

Assumption \ref{ass:nonflattau} requires that, close to the maximizer $t^*$, the conditional average treatment effect function $\tau(X)$ is not flat. If $\tau(X)$ were flat in the neighborhood of $t^*$, it would be harder for the estimator to find the exact maximizer, leading to a slower rate of convergence. The excluded flat $\tau(X)$ corresponds to a situation where estimating the threshold precisely is less critical, as the CATE remains zero even in a neighborhood of the optimal threshold.

Assumption \ref{ass:tail} requires that the tails of the distributions of the potential outcomes are not too fat. It is generally satisfied by any bounded distribution, and by distributions in the exponential family, while is violated, for example, by the Student's t-distribution with fewer than four degrees of freedom.

The following theorem gives the asymptotic distribution of $\hat{t}^e_n$.

\begin{thm} \label{thm:asyd}
Consider the EWM policy $\hat{t}^e_n$ defined in equation \eqref{eq:estimator} and the optimal policy $t^*$ defined in equation \eqref{eq:estimand}. Under Assumptions \ref{ass:identification}, \ref{ass:consistency} (with $s=2$), and \ref{ass:asymptotic}, as $n\rightarrow \infty$,
\begin{gather}
    n^{1 / 3}\left(\hat{t}^e_n-t^*\right) \rightarrow^d (2\sqrt{K}/H)^{\frac{2}{3}}\argmax_r \left(B(r) - r^2 \right)
\end{gather}
where $B(r)$ is the two-sided standard Brownian motion process, and $K$ and $H$ are
\begin{align*}
    K
    =& f_x(t^*) \left(\frac{1}{p(t^*)} \E[Y_1^2|X = t^*] + \frac{1}{1-p(t^*)} \E[Y_0^2|X = t^*] \right) \\
    H =&  f_x(t^*) \left(\frac{\partial \E\left[Y_1 - Y_0 | X = t^* \right]}{\partial X} \right).
\end{align*}
\end{thm}

The limiting distribution of $n^{1 / 3}\left(\hat{t}^e_n-t^*\right)$ is of Chernoff type \citep{chernoff1964estimation}. The Chernoff's distribution is the probability distribution of the random variable $\argmax_r B(r) - r^2$, where $B(r)$ is the two-sided standard Brownian motion process. The process $B(r) - r^2$ can be simulated, and the distribution of $\argmax_r B(r) - r^2$ numerically studied. \cite{groeneboom2001computing} report values for selected quantiles.

It's worth noticing how the variance of $\hat{t}^e_n$ depends on the data distribution. $K$ and $H$ are functions of the density of $X$, the variance of the potential outcomes, and the derivative of the CATE at $t^*$. The optimal threshold is estimated with more precision when more data around the optimal threshold are available (larger density), when the treatment effect changes more rapidly (larger derivative of CATE), and when the outcomes have less variability.

Results in Theorem \ref{thm:asyd} can be used to derive asymptotic valid confidence intervals for $\hat{t}^e_n$, as discussed in Appendix \ref{app:ci}. More interestingly, they can be combined with Equation \ref{eq:asympt_regret_d} to characterize the asymptotic distribution of the regret $\mathcal{R}(\hat{t}^e_n)$, as derived in the following corollary.

\begin{corollary} \label{cor:ewm_regret}
    The asymptotic distribution of regret $\mathcal{R}(\hat{t}^e_n)$ is:
    \begin{gather*}
        n^{\frac{2}{3}} \mathcal{R}(\hat{t}^e_n) \rightarrow^d  \left( \frac{2K^{2}}{H} \right)^\frac{1}{3}  \left( \argmax_r B(r) - r^2 \right)^2.
    \end{gather*}
    The expected value of the asymptotic distribution is $K^\frac{2}{3} H^{-\frac{1}{3}} C^e$, where $$C^e= \sqrt[3]{2} \E\left[\left( \argmax_r B(r) - r^2 \right)^2\right]$$ is a constant not dependent on $P$.
\end{corollary}

For the regret of the EWM policy, Corollary \ref{cor:ewm_regret} establishes a $n^{\frac{2}{3}}$ rate, faster than the $\sqrt{n}$ rate found to be the optimal for the EWM expected regret \citep{kitagawa2018should}. It is essential to highlight the differences between the two results: Corollary \ref{cor:ewm_regret} is about pointwise convergence in distribution, while the main results by \cite{kitagawa2018should} establish a uniform rate for the expected regret. The family of distributions they consider may violate the assumptions in Theorem \ref{thm:asyd}, for example including cases where the CATE is flat at the optimal threshold. In contrast, Corollary \ref{cor:ewm_regret} is derived for distributions that satisfy Assumptions \ref{ass:identification}, \ref{ass:consistency} and \ref{ass:asymptotic}, which imply $H \neq 0$.

\cite{kitagawa2018should} also discuss how additional assumptions on the distribution $P$ can lead to a faster convergence rate for the regret. They show that if a certain margin condition on the data distribution holds, the convergence rate can improve. When the CATE function $\tau(X)$ is non-flat, the regret achieves a uniform convergence rate of $n^{\frac{2}{3}}$. Although the two rates are the same and rely on a similar condition on the CATE, they are derived from different, non-nested sets of assumptions. For instance, \cite{kitagawa2018should} assume that the policy class is correctly specified and $Y$ is bounded, but don't require the CATE to be smooth. The fact that my pointwise convergence rate coincides with their uniform rate when all assumptions are met suggests that my rate for the EWM threshold policy is also uniformly optimal.

The result in Corollary \ref{cor:ewm_regret} does not imply convergence in the mean, and the expected value of the asymptotic distribution is presented as a summary statistic -- a measure of the location of the asymptotic distribution.

\subsection{Smoothed Welfare Maximizer Policy}

Corollary \ref{cor:ewm_regret} shows how the cube root of $n$ convergence rate of the EWM policy directly impacts the convergence rate of its regret. In this section, I propose an alternative threshold policy, the Smoothed Welfare Maximizer policy, that achieves a faster rate of convergence and hence guarantees a faster rate of convergence for its regret. My approach exploits some additional smoothness assumptions on the distribution $P$: Corollary \ref{cor:ewm_regret} holds when $f_x(x)$ and $\tau(x)$ are assumed to be at least once differentiable; if they are at least twice differentiable, the SWM policy guarantees a $n^\frac{4}{5}$ convergence rate for the regret. Note that asking the density of the index and the conditional average treatment effect to be twice differentiable seems plausible for many applications. In the context of policy learning, it is, for example, assumed by \cite{athey2021policy} to derive their results\footnote{This assumption is implied by the high-level assumptions made in the paper, as the authors discuss in footnote 15.}.

My approach involves smoothing the objective function in \eqref{eq:estimator}, in the same spirit as the smoothed maximum score estimator proposed by \cite{horowitz1992smoothed} to deal with inference for the maximum score estimator \citep{manski1975maximum}. The Smoothed Welfare Maximizer (SWM) policy $\hat{t}^s_n$ is defined as\footnote{The objective function is continuous and smooth in $t$, which allows for the use of gradient descent algorithms. However, the function is not concave and may have many local maxima. Annealing algorithms can be employed to find the global maximum.}:
\begin{gather} \label{eq:estimator_sm}
\hat{t}^s_n=\argmax_{t} \underbrace{\frac{1}{n} \sum_{i=1}^n \left[\left(\frac{D_i Y_i}{p(X_i)} - \frac{(1-D_i) Y_i}{(1-p(X_i))} \right) k\left(\frac{X_i - t}{\sigma_n} \right) \right]}_{ \hat{S}_n(t,\sigma_n)}
\end{gather}
where $\sigma_n$ is a sequence of positive real numbers such that $\lim_{n \rightarrow \infty} \sigma_n = 0$, and the function $k(\cdot)$ satisfies: 
\begin{ass} \label{ass:kernel}
     {\normalfont (Kernel function)} Kernel function $k(\cdot): \R \rightarrow \R$ is continuous, bounded, and with limits $\lim_{x \rightarrow - \infty} k(x) = 0$ and $\lim_{x \rightarrow \infty} k(x) = 1$.
\end{ass}

In practice, the indicator function found in $\hat{t}^e_n$ is here substituted by a smooth function $k(\cdot)$ with the same limiting behavior, which guarantees the differentiability of the objective function. The bandwidth $\sigma_n$, decreasing with the sample size, ensures that when $n \to \infty$ the policy converges to the optimal one, as proved in the next section.

\subsubsection{Consistency of $\hat{t}^s_n$}

I start showing consistency of $\hat{t}^s_n$ for $t^*$, which implies $\mathcal{R}(\hat{t}^s_n) \rightarrow^p 0$.

\begin{thm} \label{thm:cons_sm}
Consider the SWM policy $\hat{t}^s_n$ defined in equation \eqref{eq:estimator_sm} and the optimal policy $t^*$ defined in equation \eqref{eq:estimand}. Under Assumptions \ref{ass:identification}, \ref{ass:consistency} (with $s=0$), and \ref{ass:kernel}, as $n \rightarrow \infty$,
\begin{gather*}
    \hat{t}^s_n \rightarrow^{a.s.} t^*
\end{gather*}
i.e. $\hat{t}^s_n$ is a strongly consistent estimator for $t^*$.
\end{thm}

Theorems \ref{thm:cons_sm} and \ref{thm:con} are analogous: they rely on the same assumptions on the data (Assumptions \ref{ass:identification} and \ref{ass:consistency}) to prove the consistency of $\hat{t}^e_n$ and $\hat{t}^s_n$. Where the two policies differ is in the asymptotic distributions: smoothness in the objective function for $\hat{t}^s_n$ guarantees asymptotic normality, but also introduces a bias, since the bandwidth $\sigma_n$ equals zero only in the limit, which emerges in the limiting distribution.

\subsubsection{Asymptotic Distribution for $\hat{t}^s_n$}

Deriving this asymptotic behavior of $\hat{t}^s_n$ requires an additional assumption on the rate of bandwidth $\sigma_n$ and the kernel function $k$. Since both are chosen by the policymaker, the assumption is not a restriction on the data but a condition on properly picking $\sigma_n$ and $k$.

\begin{ass} \label{ass:asymptotic_sm}
{\normalfont (Bandwidth and kernel)} 
\begin{enumerate}[label=5.\arabic*]
\item \label{ass:rateofs}
    {\normalfont (Rate of $\sigma_n$)} $\frac{\log n}{n \sigma_n^4} \rightarrow 0$ as $n\rightarrow \infty$.
\item \label{ass:extraonk}
   {\normalfont (Kernel function)} Kernel function $k(\cdot): \R \rightarrow \R$ satisfies Assumption \ref{ass:kernel} and the following:
     \begin{itemize}
         \item $k(\cdot)$ is twice differentiable, with uniformly bounded derivatives $k'$ and $k''$.
         \item $\int k'(x)^4 dx$, $\int k''(x)^2 dx$, and $\int |x^2 k''(x)| dx$ are finite.
         \item For some integer $h \geq 2$ and each integer $i\in[1,h]$, $\int |x^i k'(x)| dx =0$ for \\ $i<h$ and $\int |x^h k'(x)| dx =d \neq 0$, with $d$ finite.
         \item For any integer $i \in [0,h]$, any $\eta >0$, and any sequence $\sigma_n \rightarrow 0$, \\ $\lim_{n\rightarrow \infty} \sigma_n^{i-h} \int_{|\sigma_n x| > \eta} | x^i k'(x)| dx = 0$, and $\lim_{n\rightarrow \infty} \sigma_n^{-1} \int_{|\sigma_n x| > \eta} | k''(x)| dx = 0$.
         \item $\int x k''(x) dx = 1$, $\lim_{n\rightarrow \infty}  \int_{|\sigma_n x| > \eta} | x k''(x)| dx = 0$.
     \end{itemize} 
\end{enumerate}
\end{ass}

An example of a function $k$ satisfying Assumption \ref{ass:extraonk} with $h=2$ is the cumulative distribution function of the standard normal distribution.

I can now derive the asymptotic distribution of $\hat{t}^s_n$.
\begin{thm} \label{thm:asynormsmooth}
    Consider the SWM policy $\hat{t}^s_n$ defined in equation \eqref{eq:estimator_sm} and the optimal policy $t^*$ defined in equation \eqref{eq:estimand}. Under Assumptions \ref{ass:identification}, \ref{ass:consistency} (with $s=h + 1$ for some $h\geq 2$), \ref{ass:nonflattau}, and \ref{ass:asymptotic_sm}, as $n \rightarrow \infty$:
    \begin{enumerate}
        \item if $n \sigma_n^{2h + 1} \rightarrow \infty$, $$\sigma_n^{-h}(\hat{t}^s_n - t^*) \rightarrow^p H^{-1}A;$$
        \item  if $n \sigma_n^{2h + 1} \rightarrow \lambda < \infty$, $$(n\sigma_n)^{\frac{1}{2}}(\hat{t}^s_n - t^*) \rightarrow^d \mathcal{N}(\lambda^{\frac{1}{2}}H^{-1}A, H^{-2}\alpha_2 K);$$
    \end{enumerate}
    where $A$, $\alpha_1$, and $\alpha_2$ are:
    \begin{align}
    A =& -\frac{1}{h!} \alpha_1 \int_y \left(Y_1 - Y_0 \right) \varphi^{h}_x(y,t^*) d y \\
    \alpha_1 =& \int_\zeta \zeta^h k'\left(\zeta \right) d \zeta \\
    \alpha_2 =& \int_\zeta k'\left(\zeta \right)^2 d \zeta.
\end{align}
\end{thm}

The asymptotic distribution of $(n\sigma_n)^{\frac{1}{2}}(\hat{t}^s_n - t^*)$ is normal, centered at the asymptotic bias $\lambda^{\frac{1}{2}}H^{-1}A$ introduced by the smoothing function, which exploits local information giving non-zero weights to treated units in the untreated region and vice versa. The bias and the variance of the distribution depend on the population distribution through $K$ and $H$, as for the EWM policy, and also through $A$, a new term that determines the bias. In the definition of $A$, $\varphi^{h}_x$ is the $h$ derivative with respect to $x$ of $\varphi(y_0,y_1,x)$, the joint density distribution of $Y_0$, $Y_1$, and $X$: the integral in the expression for $A$ is the $h$-derivative of $f_x(X) \tau(X)$ computed in $X=t^*$, whose existence is guaranteed by Assumption \ref{ass:smoot} with $s=h+1$. $\alpha_1$ and $\alpha_2$ depends only on kernel function $k$, and are hence known.

Theorems \ref{thm:asyd} and \ref{thm:asynormsmooth} differ in the smoothness requirements imposed by Assumption \ref{ass:smoot}. Theorem \ref{thm:asyd} requires the function $\E[(Y_1-Y_0) \ind\{X> t\}]$ to be at least twice differentiable, while Theorem \ref{thm:asynormsmooth} requires three derivatives. The additional smoothness condition is necessary because certain steps in the proof rely on a Taylor expansion of the joint distribution of $Y$ and $X$ which requires $s \geq 3$ to exist. When $s=2$, the asymptotic distribution derived in Theorem \ref{thm:asynormsmooth} for $\frac{\partial \hat{S}_n(\hat{t}^s_n,\sigma_n)}{\partial^2 t} (n\sigma_n)^{\frac{1}{2}}(\hat{t}^s_n - t^*)$ no longer holds, as $\frac{\partial \hat{S}_n(\hat{t}^s_n,\sigma_n)}{\partial^2 t}$, the second derivative of the objective function in the SWM policy definition (Equation  \ref{eq:estimator_sm}), may not have a bounded limiting distribution. However, if such a bounded limiting distribution, denoted by $\tilde{H}$, exists, then $(n\sigma_n)^{\frac{1}{2}}(\hat{t}^s_n - t^*)$ remains $O_p(1)$ even when $s=2$, with an asymptotic distribution that depends on the unknown term $\tilde{H}$ rather than on $H$, as in Theorem \ref{thm:asynormsmooth}.

As for the EWM policy, results in Theorem \ref{thm:asynormsmooth} can be used to derive asymptotic valid confidence intervals for $\hat{t}^s_n$ (see Appendix \ref{app:ci}), and, combined with equation \ref{eq:asympt_regret_d}, to characterize the asymptotic distribution of the regret $\mathcal{R}(\hat{t}^s_n)$, as derived in the next corollary.

\begin{corollary} \label{cor:smoot_regret}
    Asymptotic distribution of regret $\mathcal{R}(\hat{t}^s_n)$ is:
    \begin{gather*}
        n \sigma_n \mathcal{R}(\hat{t}^s_n) \rightarrow^d  \frac{1}{2} \frac{\alpha_2 K}{H} \chi^2\left(1,\frac{\lambda A^2}{\alpha_2 K}\right)
    \end{gather*}
    where $\chi^2\left(1,\frac{\lambda A^2}{\alpha_2 K}\right)$ is a non-centered chi-squared distribution with 1 degree of freedom and non-central parameter $\frac{\lambda A^2}{\alpha_2 K}$.
    The expected value of the asymptotic distribution is:
    \begin{align} \label{eq:expregretswm}
        \frac{1}{2} \frac{\alpha_2 K}{H} \left(1 + \frac{\lambda A^2}{\alpha_2 K} \right) =  \frac{\alpha_2}{2} \frac{ K}{H} + \frac{1}{2} \frac{\lambda A^2}{H}.
    \end{align}
    Let $\sigma_n = (\lambda/n)^{1/(2h +1)}$ with $\lambda \in (0, \infty)$. The expectation of the asymptotic regret is minimized by setting $\lambda = \lambda^* = \frac{\alpha_2 K}{2hA^2 }$: in this case, the expectation of the asymptotic distribution scaled by $n^\frac{2h}{2h+1}$ is $ A^{\frac{2}{2h+1}} K^{\frac{2h}{2h+1}} H^{-1} C^s$, where $C^s = \frac{2h+1}{2} \left( \frac{\alpha_2}{2h} \right)^\frac{2h}{2h+1}$ is a constant not dependent on $P$.
\end{corollary}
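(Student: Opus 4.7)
The plan is to combine Theorem \ref{thm:asynormsmooth} (case 2) with the second-order Taylor expansion in equation \eqref{eq:asympt_regret_d}, apply the continuous mapping theorem to obtain a scaled non-central chi-squared limit, and then perform a one-dimensional calculus minimization over $\lambda$.

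First I would identify $W''(t^*)$. Since $W'(t) = -f_x(t)\tau(t)$ with $\tau(x)=\E[Y_1-Y_0\mid X=x]$, and the first-order condition combined with $f_x(t^*)>0$ gives $\tau(t^*)=0$, one has $W''(t^*) = -f_x(t^*)\tau'(t^*) = -H$. Applying the Taylor expansion at $t^*$ with $r_n=(n\sigma_n)^{1/2}$, together with consistency of $\hat{t}^s_n$ (Theorem \ref{thm:cons_sm}) and continuity of $W''$ in a neighborhood of $t^*$ (Assumption \ref{ass:smoot} with $s=h+1\geq 3$), yields
\begin{equation*}
n\sigma_n\,\mathcal{R}(\hat{t}^s_n) = \tfrac{1}{2}\bigl(H + o_p(1)\bigr)\bigl[(n\sigma_n)^{1/2}(\hat{t}^s_n-t^*)\bigr]^2.
\end{equation*}

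Second, in the regime $n\sigma_n^{2h+1}\to\lambda<\infty$, Theorem \ref{thm:asynormsmooth} delivers $(n\sigma_n)^{1/2}(\hat{t}^s_n-t^*)\to^d \mathcal{N}(\mu,\sigma^2)$ with $\mu=\lambda^{1/2}H^{-1}A$ and $\sigma^2=H^{-2}\alpha_2 K$. The continuous mapping theorem (squaring) and Slutsky's lemma (absorbing the $o_p(1)$) give
\begin{equation*}
n\sigma_n\,\mathcal{R}(\hat{t}^s_n)\to^d \tfrac{H}{2}\,\sigma^2\,\chi^2\!\bigl(1,\mu^2/\sigma^2\bigr) = \tfrac{\alpha_2 K}{2H}\,\chi^2\!\left(1,\tfrac{\lambda A^2}{\alpha_2 K}\right),
\end{equation*}
since $Z^2/\sigma^2$ for $Z\sim\mathcal{N}(\mu,\sigma^2)$ is non-central chi-squared with one degree of freedom and non-centrality $\mu^2/\sigma^2=\lambda A^2/(\alpha_2 K)$. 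The identity $\E[\chi^2(1,\eta)]=1+\eta$ then yields the claimed expectation $\tfrac{\alpha_2 K}{2H}+\tfrac{\lambda A^2}{2H}$.

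Third, substituting $\sigma_n=(\lambda/n)^{1/(2h+1)}$ gives $n\sigma_n=\lambda^{1/(2h+1)}n^{2h/(2h+1)}$, so dividing the limit's expectation by $n\sigma_n$ and multiplying by $n^{2h/(2h+1)}$ produces
\begin{equation*}
g(\lambda) = \tfrac{\alpha_2 K}{2H}\,\lambda^{-1/(2h+1)} + \tfrac{A^2}{2H}\,\lambda^{2h/(2h+1)}.
\end{equation*}
Setting $g'(\lambda)=0$ gives $\lambda^* = \tfrac{\alpha_2 K}{2hA^2}$; substituting back and using $(\lambda^*)^{2h/(2h+1)} = \lambda^*/(\lambda^*)^{1/(2h+1)}$ collapses the two terms into a common multiple of $(\lambda^*)^{-1/(2h+1)}$, after which factoring delivers the stated form $A^{2/(2h+1)}K^{2h/(2h+1)}H^{-1}C^s$ with $C^s=\tfrac{2h+1}{2}\bigl(\tfrac{\alpha_2}{2h}\bigr)^{2h/(2h+1)}$. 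The only substantive input is Theorem \ref{thm:asynormsmooth}; everything else is Taylor expansion, continuous mapping, and a one-variable convex minimization, so the sole care needed is in tracking signs (notably $W''(t^*)=-H$) and the exponent bookkeeping in the closed-form simplification.
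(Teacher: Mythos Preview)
Your proposal is correct and follows essentially the same route as the paper's proof: equation \eqref{eq:asympt_regret_d} combined with Theorem \ref{thm:asynormsmooth}, continuous mapping and Slutsky to obtain the scaled non-central $\chi^2$ limit, then the one-variable minimization in $\lambda$. Your explicit identification $W''(t^*)=-H$ is in fact more careful than the paper, which writes $W''(t^*)=H$ while also dropping the minus sign in passing from the Taylor expansion to $\mathcal{R}$; the two sign slips cancel there, whereas you track them correctly.
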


With the optimal bandwidth $\sigma_n =O_p(n^{-\frac{1}{2h + 1}})$, the regret converges at $n^{\frac{2h}{2h + 1}}$ rate. For $h \geq 2$, this implies that the regret converges faster with the SWM than with the EWM policy: the extra smoothness assumption has been exploited to achieve a better rate for the asymptotic regret. When $h=1$, if additional regularity assumptions ensure the existence of $\tilde{H}$, the SWM policy attains the same convergence rate as the EWM policy. As for Corollary \ref{cor:ewm_regret}, the result in Corollary \ref{cor:smoot_regret} does not imply convergence in the mean, and the expected value of the asymptotic distribution is reported as a measure of the location of the asymptotic distribution.

The comparison between Corollaries \ref{cor:ewm_regret} and \ref{cor:smoot_regret}, which characterize the asymptotic distributions of regrets $\mathcal{R}(\hat{t}^e_n)$ and $\mathcal{R}(\hat{t}^s_n)$, highlights how the data distribution $P$ differently influences the asymptotic behavior of the regrets through $H$, $K$, which affect both distributions but with different exponents, and $A$, the bias term that affects only the SWM policy. The relevance of these asymptotic results relies on their ability to approximate behaviors in finite samples. After all, policymakers only have access to finite experimental data. Building on the theoretical results derived above, the next section uses Monte Carlo simulations to analyze the finite-sample regrets associated with the EWM and SWM policies.

\section{Monte Carlo Simulations} \label{sec:mc}

I examine the finite sample properties of the EWM and SWM policies using Monte Carlo simulations. The scope of this section is twofold: first, I will provide examples of data generating processes that lead to different rankings for the two policies in terms of median asymptotic regret, to illustrate how in the asymptotic results in Corollaries \ref{cor:ewm_regret} and \ref{cor:smoot_regret} the convergence rate and the limiting distribution interact. Then, I will verify how the asymptotic results approximate the finite sample distributions, and compare the finite sample regrets of the two policies.

As data generating process, consider the following distribution $P$ of $(Y_0, Y_1, D, X)$:
\begin{align*}
    X &\sim \mathcal{N} (0,1) \\
    \epsilon_1 &\sim \mathcal{N}(0,\gamma) \\
    Y_1 &= X^3 + \beta_2 X^2 + \beta_1 X + \epsilon_1 \\
    Y_0 &\sim \mathcal{N}(0,\gamma) \\
    D &\sim \text{Bern}(p).
\end{align*}
Under $P$, the potential outcome $Y_0$ does not depend on the index $X$, and the treatment is randomly assigned with constant probability $p$. Parameter values are chosen such that $\E[Y_1|X=x]$ and hence $\E[Y_1 - Y_0|X=x]$ are increasing function of $x$, and the optimal threshold $t^*$ is 0. It can be verified that such $P$ implies the following:
\begin{align*}
    K =& \phi(0) \left(\frac{\gamma^2}{p} + \frac{\gamma^2}{1-p}\right) \\
    H =& \phi(0) \beta_1 \\
    A =& -\phi(0) \beta_2 \\
    W(t) =& \beta_2 \left( 1- \Phi(t) + t \phi(t) \right) +
    \beta_1 \phi(t) + \left( t^2 \phi(t) + 2\phi(t) \right)
\end{align*}
where $\phi(t)$ and $\Phi(t)$ are the probability density function and the cumulative density function of the standard normal distribution, respectively.

I consider two models characterized by different parameter values, reported in Table \ref{tab:param}:
\begin{table}[h] \centering \small
\begin{tabular}{c|cccc}
\hline \hline
 Model & $\gamma$ & $\beta_1$ & $\beta_2$ & $p$  \\ \hline
 1 & 1 & 1 & -0.5 & 0.5 \\
 2 & 3 & 0.5 & -1 & 0.5 \\ \hline
\end{tabular}
\caption{\small Parameters values for the simulation models.} \label{tab:param}
\end{table}

For the SWM policy, the kernel function is the cumulative distribution function of the standard normal distribution, which satisfies Assumption \ref{ass:extraonk} with $h=2$. Consequently, all analyses are conducted under Assumption \ref{ass:consistency} with $s=h+1=3$. Asymptotic regrets are computed with the infeasible optimal bandwidth $\sigma_n^*$. Table \ref{tab:asym_values} presents the values of $K$, $H$, and $A$, along with the medians of the asymptotic regret for both policies for sample sizes $n \in \{500, 1000, 2000, 3000\}$. Compared to Model 1, Model 2 entails larger $K$ and $A$, and smaller $H$, which lead to higher median regrets under both policies. Consider the case with $n=500$. In model 1, the median of the asymptotic regret is higher with the EWM policy, whereas in model 2, it's higher with the SWM policy: this confirms that the ranking of the asymptotic median regrets depends on the unknown data distribution $P$. Because of the fastest rate, though, as $n$ increases the SWM policy exhibits relatively better performance. Regardless of the specific distribution $P$, there exists a certain sample size beyond which the asymptotic median regret with the SWM policy becomes smaller. In model 2, when $n=1,000$, the inversion of ranking already occurs.

\begin{table}[!htbp] \centering \small
\begin{tabular}{@{\extracolsep{5pt}} ccccccc} 
\\[-1.8ex]\hline 
\hline \\[-1.8ex] 
 Model & n & EWM & SWM & K & H & A \\ 
\hline \\[-1.8ex] 
\multirow{4}{*}{1} & $500$ & $45.347$ & $18.459$ & $1.596$ & $0.399$ & $0.199$ \\ 
& $1,000$ & $28.567$ & $10.602$ & $1.596$ & $0.399$ & $0.199$ \\ 
& $2,000$ & $17.996$ & $6.089$ & $1.596$ & $0.399$ & $0.199$ \\ 
& $3,000$ & $13.733$ & $4.402$ & $1.596$ & $0.399$ & $0.199$ \\ 
\hline
\multirow{4}{*}{2} & $500$ & $188.651$ & $204.255$ & $9.575$ & $0.199$ & $0.399$ \\ 
& $1,000$ & $118.843$ & $117.314$ & $9.575$ & $0.199$ & $0.399$ \\ 
& $2,000$ & $74.866$ & $67.379$ & $9.575$ & $0.199$ & $0.399$ \\ 
& $3,000$ & $57.134$ & $48.714$ & $9.575$ & $0.199$ & $0.399$ \\ 
\hline \\[-1.8ex] 
\end{tabular}
  \caption{ \small Values of $K$, $H$, and $A$, and asymptotic median regrets using both EWM and SWM policies across different models, are presented. The median regret for the SWM policy is computed using the optimal bandwidth $\sigma_n^*$. To facilitate the reading, asymptotic median regrets have been scaled by a factor of 10,000.} 
  \label{tab:asym_values} 
\end{table}

To investigate the finite sample distributions of the regret, I draw samples of size $n$ from $P$ 5,000 times for each model. Each sample is used to estimate the thresholds $\hat{t}^e_n$ and $\hat{t}^s_n$. Estimating $\hat{t}^s_n$ requires specifying a bandwidth $\sigma_n$, for which I adopt the following method: I use the estimated policy $\hat{t}^e_n$ to compute $\hat{A}_{n}$ and $\hat{K}_{n}$, which are then used to compute the optimal $\hat{\lambda}_n^*$, and the optimal bandwidth $\hat{\sigma}_n^*$. In Appendix \ref{app:ci}, I provide and discuss formulas for estimators $\hat{A}_{n}$ and $\hat{K}_{n}$. The SWM policy $\hat{t}^s_{n}$ is hence estimated with this bandwidth $\hat{\sigma}_n^*$, which consistently estimates the optimal bandwidth $\sigma_n^*$ if $\hat{A}_{n}$ and $\hat{K}_{n}$ consistently estimate $A$ and $K$. I also consider $\hat{t}^s_{n}$ with the infeasible optimal $\sigma_n^*$ computed from the data generating process.

Estimates for $\hat{t}^e_n$ and $\hat{t}^s_n$ are used to compute regrets $\mathcal{R}(\hat{t}^e_n)$ and $\mathcal{R}(\hat{t}^s_n)$. I thus obtain the finite sample distributions of the regret, which can be compared with the asymptotic distributions derived in Corollaries \ref{cor:ewm_regret} and \ref{cor:smoot_regret}. Table \ref{tab:median_regret} presents the median of these finite sample and asymptotic distributions, also depicted in Figures \ref{fig:median_reg_model1} and \ref{fig:median_reg_model2}. Corresponding tables and figures for the mean regret are provided in Appendix \ref{app:mc}.

The last column of each table reports the ratio between the finite sample median regrets for the EWM and SWM policy, facilitating the comparison: a ratio larger than one indicates that the SWM policy outperforms the EWM policy. These ratios increase with the sample size, reflecting the faster asymptotic convergence rate of the SWM policy. Similar to the asymptotic results, in finite sample the SWM policy does relatively better as the sample size increases.

\begin{table}[!htbp] \centering \small
\begin{tabular}{@{\extracolsep{5pt}} cccccccc} 
\\[-1.8ex]\hline 
\hline \\[-1.8ex] 
 Model & n & \multicolumn{2}{c}{EWM} & \multicolumn{3}{c}{SWM} & Ratio  \\
\hline
& & empirical & asymptotic & empirical $\sigma_n^*$ & empirical $\hat{\sigma}_n^*$ & asymptotic \\
\hline \\[-1.8ex] 
\multirow{4}{*}{1} & $500$ & $39.809$ & $45.347$ & $15.113$ & $24.169$ & $18.459$ & $1.647$ \\ 
& $1,000$ & $27.547$ & $28.567$ & $8.692$ & $14.145$ & $10.602$ & $1.948$ \\ 
& $2,000$ & $18.816$ & $17.996$ & $5.073$ & $8.507$ & $6.089$ & $2.212$ \\ 
& $3,000$ & $14.043$ & $13.733$ & $3.471$ & $5.491$ & $4.402$ & $2.558$ \\ 
\hline
\multirow{4}{*}{2} & $500$ & $147.029$ & $188.651$ & $116.784$ & $159.876$ & $204.255$ & $0.920$ \\ 
& $1,000$ & $110.372$ & $118.843$ & $82.473$ & $118.695$ & $117.314$ & $0.930$ \\ 
& $2,000$ & $87.325$ & $74.866$ & $55.731$ & $79.436$ & $67.379$ & $1.099$ \\ 
& $3,000$ & $72.122$ & $57.134$ & $46.007$ & $68.656$ & $48.714$ & $1.050$ \\ 
\hline \\[-1.8ex] 
\end{tabular}
  \caption{ \small Finite sample and asymptotic median regret with the EWM and SWM policies across different models. Finite sample (empirical) values are computed through 5,000 Monte Carlo simulations. The last column reports the ratio between finite sample median regrets with the EWM and SWM policies.} 
  \label{tab:median_regret} 
\end{table}

\begin{figure}
\begin{center}
\resizebox{0.8\textwidth}{!}{
\includegraphics{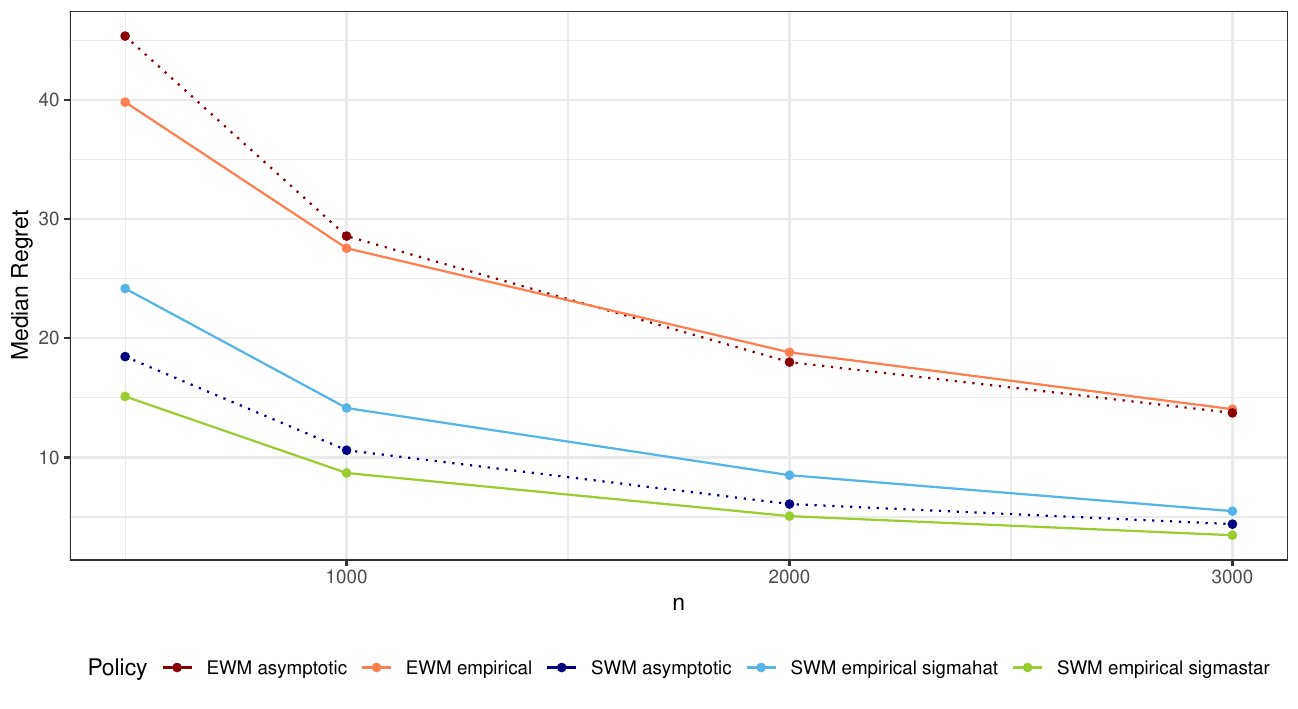}
}
\caption{\small The figure illustrates asymptotic and finite sample median regrets for the EWM and SWM policies in Model 1, corresponding to the values reported in Table \ref{tab:median_regret}.}
\label{fig:median_reg_model1}
\end{center}
\end{figure}

\begin{figure}
\begin{center}
\resizebox{0.8\textwidth}{!}{
\includegraphics{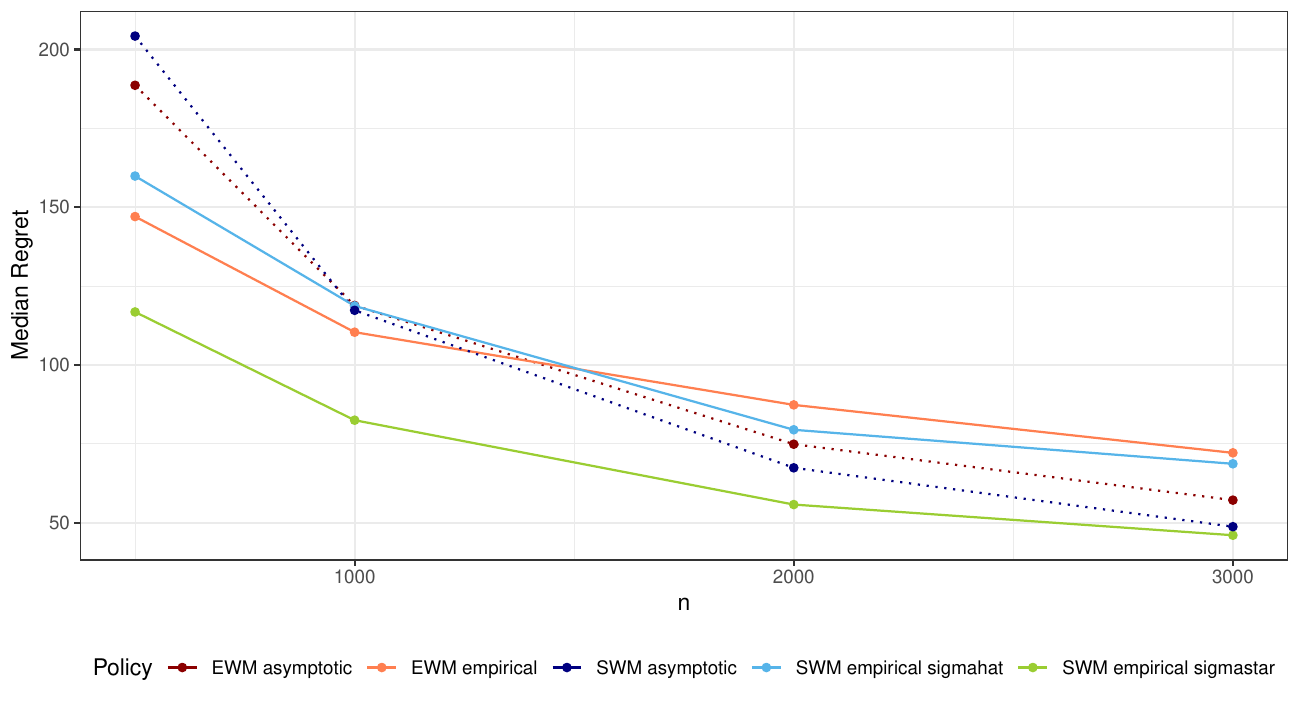}
}
\caption{\small The figure illustrates asymptotic and finite sample median regrets for the EWM and SWM policies in Model 2, corresponding to the values reported in Table \ref{tab:median_regret}.}
\label{fig:median_reg_model2}
\end{center}
\end{figure}

Simulations enable comparison between finite sample regrets and their asymptotic counterparts. Across all models and sample sizes, the asymptotic approximation for the feasible SWM policy (with the estimated bandwidth $\hat{\sigma}_n^*$) is relatively less accurate. Simulations suggest that this is partly attributable to the need for estimating an additional tuning parameter, the bandwidth $\sigma_n$. When the SWM policy is estimated using the infeasible optimal bandwidth $\sigma^*_n$, in fact, the asymptotic approximation is more accurate and the regret is smaller.

In Model 1, as illustrated in Figure \ref{fig:median_reg_model1}, both the finite sample and the asymptotic median regrets are lower for the SWM policy. Conversely, in Model 2 (illustrated in Figure \ref{fig:median_reg_model2}), the finite sample median regret is lower with the EWM policy. This confirms the impossibility of ranking the policies in a pointwise sense: different distributions $P$ result in different rankings for the finite sample median regrets.

It is important to note that the ranking of the EWM and the SWM policies indicated by the asymptotic results may differ from the actual finite sample comparison. Consider, for example, Model 2 with $n=1,000$: despite the asymptotic analysis suggesting a smaller median regret with the SWM policy, the EWM guarantees a smaller regret. In this scenario, even if $P$ were known in advance, choosing according to the asymptotic approximation would not have been optimal. As $n$ increases, the approximation improves, and the rankings based on asymptotic analysis and finite sample comparisons coincide.

Monte Carlo simulations have confirmed that the asymptotic results can approximate some finite sample behavior of the regrets, highlighting some caveats to consider when applying conclusions from asymptotic analysis to finite sample regrets with the EWM and SWM policies. However, they have not yet provided insight into the practical significance of the differences between the two policies, whether these differences are relevant or negligible in real-world scenarios. An empirical illustration is useful to answer these questions, illustrating the different implications that the policies may have.

\section{Empirical Illustration} \label{sec:empirical}

I consider the same empirical setting as \cite{kitagawa2018should}: experimental data from the National Job Training Partnership Act (JTPA) Study. \cite{bloom1997benefits} describes the experiment in detail. The study randomized whether applicants would be eligible to receive a mix of training, job-search assistance, and other services provided by the JTPA for a period of 18 months. Background information on the applicants was collected before treatment assignment, alongside administrative and survey data on the applicants' earnings over the subsequent 30 months.

I consider the same sample of 9,223 observations as in \cite{kitagawa2018should}. The treatment variable $D$ is a binary indicator denoting whether the individual was assigned to the program (intention-to-treat). The outcome variable $Y$ represents the total individual earnings during the 30 months following program assignment, adjusted by subtracting the average cost of the program (774 dollars) for individuals with $D=1$. This adjustment accounts for resource limitations by defining the treatment effect as positive when the benefits exceed the costs, rather than simply when the benefits are positive\footnote{My framework may not accommodate other types of budget constraints; for example, it cannot handle scenarios where only a fixed amount of resources is available for program implementation.}.

The threshold policy is implemented by considering the individual's earnings in the year preceding the assignment as the index $X$. Treatment is exclusively assigned to workers with prior earnings below the threshold, based on the expectation that program services yield a more substantial positive effect for individuals who previously experienced lower earnings. Experimental data are employed to determine the threshold beyond which the treatment, on average, harms the recipients.

To estimate the SWM policy, I use the cumulative distribution function of the standard normal distribution as the kernel function, maintaining the same assumptions as in the Monte Carlo simulations. The bandwidth is chosen using the EWM policy $\hat{t}^e_n$ to compute $\hat{A}_{n}$ and $\hat{K}_{n}$ (see Appendix \ref{app:ci} for the formulas), the optimal $\hat{\lambda}_n^*$, and then the optimal bandwidth $\hat{\sigma}_n^*$. Table \ref{tab:application} reports the threshold estimates, including the confidence intervals constructed as discussed in Appendix \ref{app:ci}. The threshold with the EWM policy is almost 500 dollars lower than with the SWM (3,107 vs 3,592 dollars), a drop of $13.5\%$. The lower threshold implies that the treatment would target fewer workers: if the EWM policy were implemented, $63.55 \%$ of the workers in the sample would receive the program services, compared to the $66.19 \%$ with the SWM policy, resulting in a $3$ percentage point difference.

\begin{table}[!htbp] \centering \small
\begin{tabular}{lcc} 
\\[-1.8ex]\hline 
\hline \\[-1.8ex] 
& EWM & SWM \\
\hline \\[-1.8ex] 
 Threshold & 3107 & 3592 \\
 Confidence Interval & (2090.51, 4123.49) & (1350.34, 5983.66) \\
 Bootstrapped Confidence Interval & (1999, 4050.97) &  \\
 Asymptotic Mean Regret & 35.27 & 93.48 \\
Asymptotic Median Regret & 16.63 & 50.74 \\
\% of workers treated & 63.55 & 66.19 \\
\hline \\[-1.8ex] 
\end{tabular}
  \caption{\small Summary of the Empirical Welfare Maximizer (EWM) and Smoothed Welfare Maximizer (SWM) policies. } 
  \label{tab:application} 
\end{table}

Since I account for the costs of the program, the finding that the optimal threshold policy excludes certain workers from treatment does not imply that the program has a negative impact on those excluded. Rather, it likely reflects that, for workers with higher initial earnings, the program’s benefits are outweighed by its costs. However, if policymakers have different welfare objectives, they may still conclude that treating all individuals is optimal.

For these reasons, the numbers in the table should be considered with care, and clearly, the intention of this empirical illustration was not to advocate for a specific new job-training policy. Rather, the application aimed to assess if the EWM and SWM policies may have implications with relevant economic differences. Results suggest this is the case: together with theoretical and simulation findings, this implies that the choice between the EWM and the SWM policy should be thoughtfully considered, as it may determine relevant improvement in population welfare.

\section{Conclusion} \label{sec:conclusion}

In this paper, I addressed the problem of using experimental data to estimate optimal threshold policies when the policymaker seeks to minimize the regret associated with implementing the policy in the population. I first examined the Empirical Welfare Maximizer threshold policy, deriving its asymptotic distribution, and showing how it links to the asymptotic distribution of its regret. I then introduced the Smoothed Welfare Maximizer policy, replacing the indicator function in the EWM policy with a smooth kernel function. Under the assumptions commonly made in the policy learning literature, the convergence rate for the worst-case regret of the SWM is faster than with the EWM policy. Monte Carlo simulations corroborated the asymptotic finding that the SWM policy may perform better than the commonly studied EWM policy also in finite sample. An empirical illustration displayed that the implications of the two policies can remarkably differ in real-world application.

Three sets of problems remain open for future research, to extend the findings of this paper in diverse directions. First, while my results on the rate improvement for the EWM and SWM policies are pointwise, one might ask whether they could be formalized as optimality statements. Specifically, is it possible to establish minimax rate optimality for the two policies over some family of distributions $\mathcal{P}$? For the EWM, this would relate closely to Theorems 2.3 and 2.4 in \cite{kitagawa2018should}, as the assumptions underlying my $n^{\frac{2}{3}}$ rate improvement are connected to their margin assumption. For the SWM, however, the connection with the margin assumption is less clear, and alternative approaches may be considered.

Second, it would be interesting to extend the smoothing approach of the EWM policy to other policy classes. While threshold policies are convenient as they depend on a single parameter, the same intuition for smoothing the indicator function could also apply to linear index or multiple index policies, albeit with more complex derivations. Key questions include how the theory developed in this paper could be adapted to these policy classes and whether this approach might be generalized to all cases where the EWM policy is applicable. In some of these cases, the EWM policy is known to lack good computational properties. Since the SWM policy implies a smooth objective function, investigating its potential computational advantages would also be worthwhile.

Lastly, the framework developed in this paper for using experimental data to estimate optimal policies could inform experimental design. While the existing literature mainly focuses on optimal design for estimating the average treatment effect, it could be valuable to consider scenarios where estimating the threshold policy is the goal: how should the experimental design be adapted? How the allocation of units to treatment and control groups would change? The results presented in this paper, elucidating the connection between the distribution $P$ and the regret of the policy, provide a natural foundation for exploring experimental designs optimal for threshold policy estimation.

\appendix

\bibliographystyle{chicago}
\bibliography{references}

\section{Confidence Intervals for Threshold Policies} \label{app:ci}

Results derived in Section \ref{sec:formal} can be used to construct confidence intervals that asymptotically cover the optimal threshold policy with a given probability, and to conduct hypotheses tests. It is important to remark that, in a decision problem setting, hypotheses testing does not have a clearly motivated justification, and indeed, statistical decision theory is the alternative approach to deal with decisions under uncertainty, as pointed out in \cite{manski2021econometrics}. Rather than advocating for confidence intervals and hypothesis tests for threshold policies, this appendix aims to provide a procedure agnostic on why one may be interested in it.

For the EWM policy, \cite{rai2018statistical} proposes some confidence intervals uniformly valid for several policy classes. They rely on test inversion of a certain bootstrap procedure, which compares the welfare generated by all the policies in the class. My procedure is much simpler for the EWM threshold policies, and I directly construct confidence intervals from the asymptotic distributions derived in Theorem \ref{thm:asyd}. An analogous approach, built over results in Theorem \ref{thm:asynormsmooth}, is then used to construct confidence intervals for the SWM policy.

\subsection{Empirical Welfare Maximizer Policy}

Consider the asymptotic distribution for the EWM threshold policy derived in Theorem \ref{thm:asyd}:
\begin{gather}
    n^{1 / 3}\left(\hat{t}^e_n-t^*\right) \rightarrow^d (2\sqrt{K}/H)^{\frac{2}{3}}\argmax_r \left(B(r) - r^2 \right).
\end{gather}
If $H$ and $K$ were known, confidence intervals for the optimal policy $t^*$ with asymptotic coverage $1- \alpha$ could be constructed as $(\hat{t}^e_n - w^e_n, \hat{t}^e_n + w^e_n)$, where
\begin{gather}
    w^e_n = n^{-\frac{1}{3}} (2\sqrt{K}/H)^{\frac{2}{3}}c_{\alpha/2}
\end{gather}
and $c_{\alpha/2}$ is the critical value, the upper $\alpha/2$ quantile of the distribution of $\max_{r} B(r) - r^2$.

In practice, $H$ and $K$ are unknown and should be estimated. They are defined as:
\begin{align*}
    K
    =& f_x(t^*) \left(\frac{1}{p(t^*)} \E[Y_1^2|X = t^*] + \frac{1}{1-p(t^*)} \E[Y_0^2|X = t^*] \right) \\
    H =&  f_x(t^*) \left(\frac{\partial \E\left[Y_1 - Y_0 | X = t^* \right]}{\partial X} \right).
\end{align*}
and can be estimated by a plug-in method: consider kernel density estimator $\hat f_x(x)$ for $f_x(x)$, and local linear estimators $\hat \kappa_j(x)$ and $\hat \nu_j'(x)$ for $\kappa_j(x) = \E\left[Y_j^2 | X =x, D=j \right]$ and $\nu_j'(x) = \frac{\partial \nu_j(x)}{\partial x} = \frac{\partial \E\left[Y_j | X =x, D=j \right]}{\partial x}$. Recall that $p(x) = \E[D|X=x]$: given a value $x$, the expectation is known. Define estimators $\hat{K}_n$ and $\hat H_n$ by:
\begin{gather} \label{eq:K}
    \hat{K}_n = \hat f_x(\hat{t}^e_n) \left( \frac{1}{p(\hat{t}^e_n)} \hat \kappa_1(\hat{t}^e_n) + \frac{1}{1-p(\hat{t}^e_n)} \hat \kappa_0(\hat{t}^e_n) \right)
\end{gather}
and
\begin{gather} \label{eq:H}
    \hat H_n = \hat f_x(\hat{t}^e_n)(\hat \nu_1'(\hat{t}^e_n) - \hat \nu_0'(\hat{t}^e_n)).
\end{gather}
Under the additional assumption that the second derivatives of $f_x$, $\nu_1$ and $\nu_0$ are continuous and bounded in a neighborhood of $t^*$, and with the proper choice of bandwidth sequences, $\hat{K}_n$ and $\hat H_n$ are consistent estimators for $K$ and $H$.

Feasible confidence intervals with asymptotic coverage $1- \alpha$ can hence be constructed as $(\hat{t}^e_n - \hat{w}^e_n, \hat{t}^e_n + \hat{w}^e_n) $, where
\begin{gather}
    \hat{w}^e_n = n^{-\frac{1}{3}} (2\sqrt{\hat{K}_n}/\hat{H}_n)^{\frac{2}{3}}c_{\alpha/2}.
\end{gather}

\subsubsection{Bootstrap}
To avoid relying on tabulated values for $c_{\alpha/2}$ and on estimation of $K$, an alternative approach to inference for the EWM policy is the bootstrap. Nonparametric bootstrap is not valid for $\hat{t}^e_n$ and, more generally, for ``cube root asymptotics'' estimators \citep{abrevaya2005bootstrap,leger2006bootstrap}. Nonetheless, \cite{cattaneo2020bootstrap} provide a consistent bootstrap procedure for estimators of this type. Consistency is achieved by altering the shape of the criterion function defining the estimator whose distribution must be approximated. The standard nonparametric bootstrap is inconsistent for $Q_0(r) = -\frac{1}{2}Hr$ as defined in the proof of Theorem \ref{thm:asyd}, and hence the procedure in \cite{cattaneo2020bootstrap} directly estimates this non-random part.

Let $\{Z_i^b\}$ be a random sample from the empirical distribution $P_n$, and define the estimator $\hat{t}^b_n$ as:
\begin{gather} \label{eq:estboot}
    \hat{t}^b_n=\argmax_{t} \frac{1}{n} \sum_{i=1}^n \left[\left(\frac{D_i^b Y_i^b}{p(X_i^b)} - \frac{(1-D_i^b) Y_i^b}{(1-p(X_i^b))} \right) \ind\{X_i^b > t\}\right] - \\
        \frac{1}{n} \sum_{i=1}^n \left[\left(\frac{D_i Y_i}{p(X_i)} - \frac{(1-D_i) Y_i}{(1-p(X_i))} \right) \ind\{X_i > t\}\right] - \frac{1}{2} (t - \hat{t}_n)^2 \hat{H}_n.
\end{gather}
The bootstrap procedure proposed by \cite{cattaneo2020bootstrap} is the following:
\begin{enumerate}
    \item Compute $\hat{t}^e_n$ as described in equation \eqref{eq:estimator}.
    \item Using $\hat{t}^e_n$, compute $\hat{H}_n$ as described in equation \eqref{eq:H}.
    \item Using $\hat{t}^e_n$, $\hat{H}_n$, and the bootstrap sample $\{Z_i^b\}$, compute $\hat{t}^b_n$ as described in equation \eqref{eq:estboot}.
    \item Iterate step 3 to obtain the distribution of $n^{\frac{1}{3}} \left(\hat{t}^b_n - \hat{t}^e_n\right)$, and use it as an estimate for the distribution of $n^{\frac{1}{3}} \left(\hat{t}^e_n - t^* \right)$.
\end{enumerate}

To be valid, the procedure needs an additional assumption.
\begin{ass} \label{ass:4mom}
     {\normalfont (Bounded 4th moment)} Potential outcomes distribution are such that $\frac{1}{n^{\frac{2}{3}}}\E[Y_1^4|X=t^*] = o(1)$ and $\frac{1}{n^{\frac{2}{3}}}\E[Y_0^4|X=t^*] = o(1)$.
\end{ass}
Assumption \ref{ass:4mom} guarantees that the envelope $G_R$ is such that $PG_R^4 = o(R^{-1})$. Theorem \ref{thm:boots} proves that the distribution of $n^{\frac{1}{3}} \left(\hat{t}^b_n - \hat{t}^e_n\right)$ consistently estimates the distribution of $n^{\frac{1}{3}} \left(\hat{t}^e_n - t^* \right)$, and validate the bootstrap procedure.

\begin{thm} \label{thm:boots}
Consider estimators $\hat{t}^e_n$ defined in equation \eqref{eq:estimator} and $\hat{t}^b_n$ defined in equation \eqref{eq:estboot} and the estimand $t^*$ defined in equation \eqref{eq:estimand}. Under Assumptions \ref{ass:identification}, \ref{ass:consistency} (with $s=2$), \ref{ass:asymptotic}, and \ref{ass:4mom}, as $\hat{H}_n \rightarrow^p H$ and $n\rightarrow \infty$,
\begin{gather}
    n^{1 / 3}\left(\hat{t}^b_n - \hat{t}_n\right) \rightarrow^d (2\sqrt{K}/H)^{\frac{2}{3}}\argmax_r \left(B(r) - r^2 \right)
\end{gather}
where the limiting distribution is the same as in Theorem \ref{thm:asyd}.
\end{thm}

Distribution of $n^{\frac{1}{3}} \left(\hat{t}^b_n - \hat{t}^e_n\right)$ can hence be used to construct asymptotic valid confidence intervals and run hypothesis tests for $\hat{t}^e_n$.

\subsection{Smoothed Welfare Maximizer Policy}

Consider the asymptotic distribution for the SWM threshold policy derived in Theorem \ref{thm:asynormsmooth}, for $n \sigma_n^{2h + 1} \rightarrow \lambda < \infty$:
\begin{gather}
    (n\sigma_n)^{\frac{1}{2}}(\hat{t}^s_n - t^*) \rightarrow^d \mathcal{N}(\lambda^{\frac{1}{2}}H^{-1}A, H^{-2}\alpha_2 K).
\end{gather}
$\lambda$, $\sigma_n$, and $\alpha_2$ are known. If also $K$, $H$, and $A$ were known, confidence intervals for the optimal policy $t^*$ with asymptotic coverage $1- \alpha$ could be constructed as $(\hat{t}^s_n - b_n - w^s_n, \hat{t}^s_n - b_n + w^s_n)$, where
\begin{gather}
    b_n = (n\sigma_n)^{-\frac{1}{2}} \lambda^{\frac{1}{2}} \frac{A}{H} \\
    w^s_n = (n\sigma_n)^{-\frac{1}{2}} (\sqrt{\alpha_2 K}/H)c_{\alpha/2}
\end{gather}
and $c_{\alpha/2}$ the upper $\alpha/2$ quantile of the standard normal distribution.

In practice, $K$, $H$, and $A$ are unknown and should be estimated. As usual with inference involving bandwidths and kernels, two approaches are available: estimate and remove the asymptotic bias, or undersmooth.

For the first approach, consider estimators in equation \eqref{eq:K} for $\hat{K}_n$ and in equation \eqref{eq:H} for $\hat{H}_n$, substituting $\hat{t}^e_n$ with $\hat{t}^s_n$. For $A$, recall that
\begin{align}
    A =& -\frac{1}{h!} \alpha_1 \int_y \left(Y_1 - Y_0 \right) \varphi^{h}_x(y,t^*) d y \\
    =& -\frac{1}{h!} \alpha_1 \left[ 2 f'_x(t^*)[\lambda'_1(t^*) - \lambda'_0(t^*)] +  f_x(t^*)[\lambda''_1(t^*) - \lambda''_0(t^*)] \right]
\end{align}
where $f_x(x)$ is the probability density function of $X$ and $\nu_j(x) = \E[Y_j|X=x,D=j]$. Consider kernel density estimator $\hat f_x(x)$ and $\hat f'_x(x)$ for $f_x(x)$ and $f'_x(x)$, and local linear estimators $\hat \nu_j'(x)$ and $\hat \nu_j''(x)$ for $\nu_j'(x) = \frac{\partial \nu_j(x)}{\partial x}$ and $\nu_j''(x) = \frac{\partial^2 \nu_j(x)}{\partial^2 x}$. Define estimator $\hat{A}_n$ by:
\begin{align}
    \hat{A}_n = -\frac{1}{h!} \alpha_1 \left[ 2 \hat f'_x(\hat{t}^s_n)[\hat \lambda'_1(\hat{t}^s_n) - \hat \lambda'_0(\hat{t}^s_n)] +  \hat f_x(\hat{t}^s_n)[\hat \lambda''_1(\hat{t}^s_n) - \hat \lambda''_0(\hat{t}^s_n)] \right]
\end{align}
which consistently estimate $A$ under the additional assumption that the third derivatives of $f_x$, $\nu_1$ and $\nu_0$ are continuous and bounded in a neighborhood of $t^*$, and with the proper choice of bandwidth sequences.

Confidence intervals with asymptotic coverage $1- \alpha$ can hence be constructed as $(\hat{t}^s_n - \hat{b}_n - \hat{w}^s_n, \hat{t}^s_n - \hat{b}_n + \hat{w}^s_n) $, where
\begin{gather}
    \hat{b}_n = (n\sigma_n)^{-\frac{1}{2}} \lambda^{\frac{1}{2}} \frac{\hat{A}_n}{\hat{H}_n} \\
    \hat{w}^s_n = (n\sigma_n)^{-\frac{1}{2}} (\sqrt{\alpha_2 \hat{K}_n}/\hat{H}_n)c_{\alpha/2}
\end{gather}

The second approach relies on undersmoothing, and chooses a suboptimally small $\sigma_n$ to eliminate the asymptotic bias, with no need to estimate $A$. Instead of a bandwidth sequence $\sigma_n = O_p(n^{-\frac{1}{2h +1}})$, it considers a sequence $\sigma_n = o_p(n^{-\frac{1}{2h +1}})$ such that $n \sigma_n^{2h + 1} \rightarrow \lambda =0$, and ensures $b_n \to 0$. Confidence intervals with asymptotic coverage $1- \alpha$ can hence be constructed as $(\hat{t}^s_n - \hat{w}^s_n, \hat{t}^s_n + \hat{w}^s_n)$, with $\hat{w}^s_n$ defined as above.

\section{Monte Carlo Simulations} \label{app:mc}
I report the analogous of Table \ref{tab:median_regret} and Figures \ref{fig:median_reg_model1} and \ref{fig:median_reg_model2} for the expected regret. Comments made for the median also extend to the expected regret, with the caveat that my results do not imply that the expected value of the finite sample regret converges to the expected value of the asymptotic distribution.

\begin{table}[!htbp] \centering \small
  \caption{ \small Finite sample and asymptotic expected regrets with the EWM and SWM policies across different models are presented. Finite sample (empirical) values are computed through 5,000 Monte Carlo simulations. The last column displays the ratio between finite sample expected regrets with the EWM and SWM policies.} 
  \label{tab:exp_regret} 
\begin{tabular}{@{\extracolsep{5pt}} cccccccc} 
\\[-1.8ex]\hline 
\hline \\[-1.8ex] 
 Model & n & \multicolumn{2}{c}{EWM} & \multicolumn{3}{c}{SWM} & Ratio  \\
\hline
& & empirical & asymptotic & empirical $\sigma_n^*$ & empirical $\hat{\sigma}_n^*$ & asymptotic \\
\hline \\[-1.8ex] 
\multirow{4}{*}{1} & $500$ & $89.635$ & $96.190$ & $31.492$ & $77.747$ & $39.714$ & $1.153$ \\ 
& $1,000$ & $58.799$ & $60.596$ & $18.248$ & $44.240$ & $22.809$ & $1.329$ \\ 
& $2,000$ & $37.297$ & $38.173$ & $10.887$ & $29.176$ & $13.101$ & $1.278$ \\ 
& $3,000$ & $28.615$ & $29.131$ & $7.872$ & $16.189$ & $9.471$ & $1.768$ \\ 
\hline
\multirow{4}{*}{2} & $500$ & $276.867$ & $400.166$ & $215.876$ & $291.654$ & $439.442$ & $0.949$ \\ 
& $1,000$ & $201.182$ & $252.089$ & $151.181$ & $209.069$ & $252.393$ & $0.962$ \\ 
& $2,000$ & $149.572$ & $158.806$ & $107.733$ & $151.069$ & $144.962$ & $0.990$ \\ 
& $3,000$ & $124.168$ & $121.192$ & $90.251$ & $125.565$ & $104.805$ & $0.989$ \\
\hline \\[-1.8ex] 
\end{tabular}
\end{table}

\begin{figure}
\begin{center}
\resizebox{0.8\textwidth}{!}{
\includegraphics{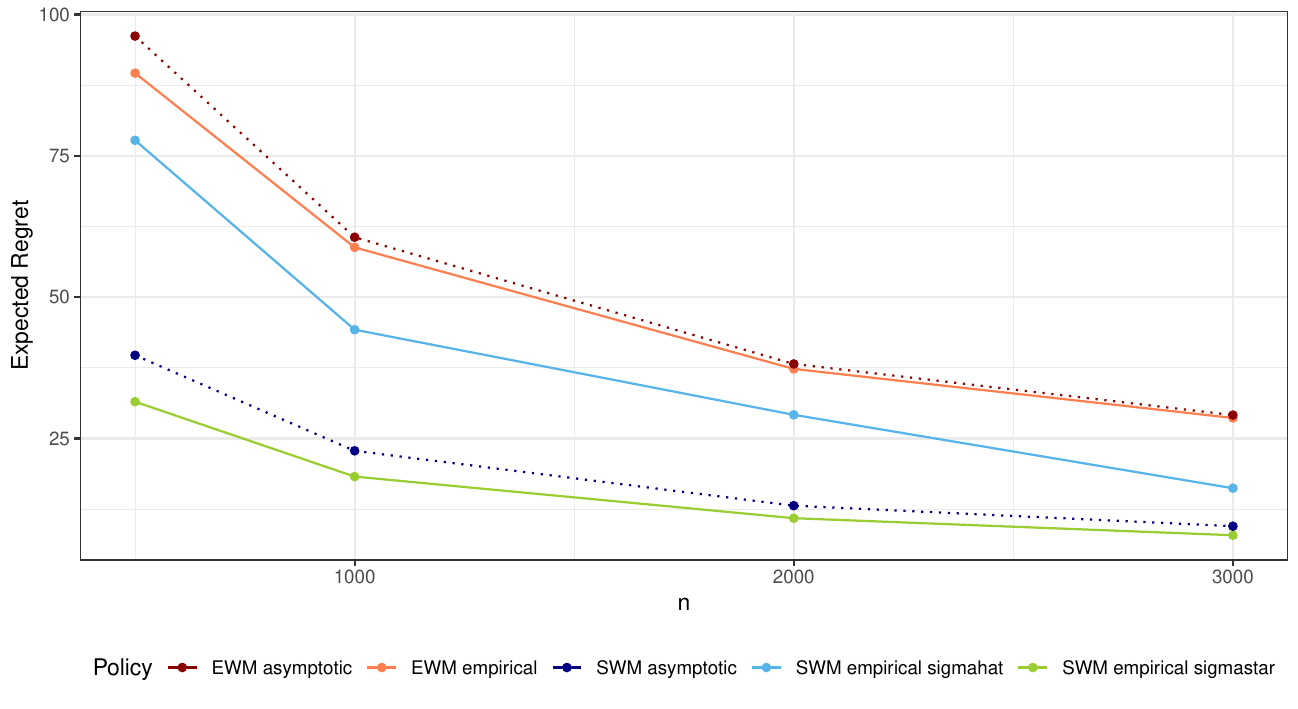}
}
\caption{\small The figure illustrates asymptotic and finite sample expected regrets for the EWM and SWM policies in Model 1, corresponding to the values reported in Table \ref{tab:exp_regret}.}
\label{fig:exp_reg_model1}
\end{center}
\end{figure}

\begin{figure}
\begin{center}
\resizebox{0.8\textwidth}{!}{
\includegraphics{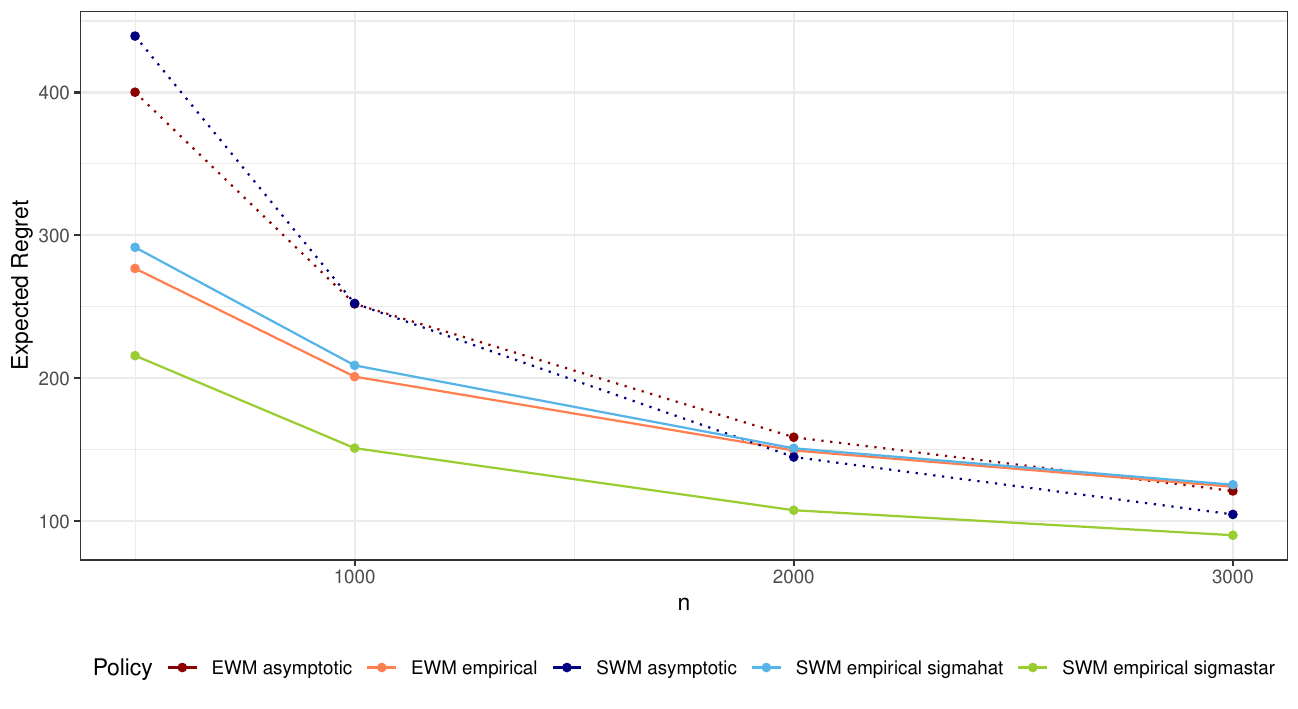}
}
\caption{\small The figure illustrates asymptotic and finite sample expected regrets for the EWM and SWM policies in Model 2, corresponding to the values reported in Table \ref{tab:exp_regret}.}
\label{fig:exp_reg_model2}
\end{center}
\end{figure}

\newpage
\section{Proofs}
\setcounter{thm}{0}
\subsection*{Theorem \ref{thm:con}}
\begin{thm}
Consider the EWM policy $\hat{t}^e_n$ defined in equation \eqref{eq:estimator} and the optimal policy $t^*$ defined in equation \eqref{eq:estimand}. Under Assumptions \ref{ass:identification} and \ref{ass:consistency} (with $s=0$),
\begin{gather*}
    \hat{t}^e_n \rightarrow^{a.s.} t^*
\end{gather*}
i.e. $\hat{t}^e_n$ is a strongly consistent estimator for $t^*$.
\end{thm}
\begin{proof}
Estimator $\hat{t}^e_n$ in \eqref{eq:estimator} can be written as
\begin{align}
    \hat{t}^e_n= & \argmax_{t} \frac{1}{n} \sum_{i=1}^n \left[\left(\frac{D_i Y_i}{p(X_i)} - \frac{(1-D_i) Y_i}{(1-p(X_i))} \right) (\ind\{X_i > t\} - \ind\{X_i > t^* \}) \right]  \\
    =& \argmax_{t} \frac{1}{n} \sum_{i=1}^n m\left(Z_i, t\right) 
\end{align}
where
\begin{gather}
    m\left(Z, t\right)=\left(\frac{D Y}{p(X)} - \frac{(1-D) Y}{(1-p(X))} \right) \left(\ind\{X > t\} - \ind\{X > t^*\} \right)
\end{gather}
and $\{Z_i\}$ is a sample of $n$ observation from distribution $P$.

Define $P_n m\left(\cdot, t\right) = \sum_{i=1}^n m\left(Z_i, t\right)$ and $P m\left(Z, t\right) = \E_P[ m\left(Z, t\right)]$. With this notation, equation \eqref{eq:estimand} and \eqref{eq:estimator} can be written as
\begin{gather}
    t^* = \argmax_{t} P m\left(Z, t\right) \\
    \hat{t}^e_n= \argmax_{t} P_n m\left(\cdot, t\right).
\end{gather}

Threshold policies I am considering can be seen as a tree partition of depth 1. Tree partitions of finite depth are a VC class \citep{leboeuf2020decision}, and hence $m\left(\cdot, t\right)$ is a manageable class of functions. Consider the envelope function $F = 2\left|\frac{D Y}{p(X)} - \frac{(1-D) Y}{(1-p(X))}\right|$. Note that $\E\left[\left|\frac{D Y}{p(X)} - \frac{(1-D) Y}{(1-p(X))}\right|^2\right] = \frac{1}{p(X)} \E[Y_1^2] + \frac{1}{1-p(X)} \E[Y_0^2] $: Assumption \ref{ass:sqintegr} guarantees the existence of $\E\left[\left|\frac{D Y}{p(X)} - \frac{(1-D) Y}{(1-p(X))}\right|^2\right]$.

It follows from corollary 3.2 in \cite{kim1990cube} that
\begin{gather}
    \sup _t\left|P_n m(\cdot, t)-P m(Z, t)\right| \rightarrow^{a.s.} 0 .
\end{gather}
Under Assumptions \ref{ass:uniq} and \ref{ass:smoot}, $P m(Z, t)$ is continuous in $t$, and $t^*$ is the unique maximizer. Hence,
\begin{align}
    & \sup _t\left|P_n m(\cdot, t)-P m(Z, t)\right| + P m(Z, t^*) \geq \\
   & \left|P_n m(\cdot, \hat{t}^e_n)-P m(Z, \hat{t}^e_n)\right| + P m(Z, t^*) \geq \\
   & \left|P_n m(\cdot, \hat{t}^e_n)-P m(Z, \hat{t}^e_n)\right| + P m(Z, \hat{t}^e_n) \geq \\
   & P_n m(\cdot, \hat{t}^e_n) \geq P_n m(\cdot, t^*) \rightarrow P m(Z, t^*)
\end{align}
where the second inequality is due to the fact that $t^*$ is the maximizer of $Pm(Z,t)$, the third comes from the triangular inequality, the fourth from $\hat{t}^e_n$ being the maximizer of $P_n m(\cdot, t)$, and the last limit comes from LLN. This prove that $ P_n m(\cdot, \hat{t}^e_n) \rightarrow^{a.s.} P m(Z, t^*) $, and hence $ P m(Z, \hat{t}^e_n) \rightarrow^{a.s.} P m(Z, t^*) $. Since $t^*$ is the unique maximizer of $P m(Z, t)$ and $P m(Z, t)$ is continuous, $\hat{t}^e_n \rightarrow^{a.s.} t^*$. It means that $\hat{t}^e_n$ is a strongly consistent estimator for $t^*$.
\end{proof}

\subsection*{Theorem \ref{thm:asyd}}
\begin{thm}
Consider the EWM policy $\hat{t}^e_n$ defined in equation \eqref{eq:estimator} and the optimal policy $t^*$ defined in equation \eqref{eq:estimand}. Under Assumptions \ref{ass:identification}, \ref{ass:consistency} (with $s=2$), and \ref{ass:asymptotic}, as $n\rightarrow \infty$,
\begin{gather}
    n^{1 / 3}\left(\hat{t}^e_n-t^*\right) \rightarrow^d (2\sqrt{K}/H)^{\frac{2}{3}}\argmax_r \left(B(r) - r^2 \right)
\end{gather}
where $B(r)$ is the two-sided Brownian motion process, and $K$ and $H$ are
\begin{align*}
    K
    =& f_x(t^*) \left(\frac{1}{p(t^*)} \E[Y_1^2|X = t^*] + \frac{1}{1-p(t^*)} \E[Y_0^2|X = t^*] \right) \\
    H =&  f_x(t^*) \left(\frac{\partial \E_P\left[Y_1 - Y_0 | X = t^* \right]}{\partial X} \right).
\end{align*}
\end{thm}
\begin{proof}
The proof shows that conditions for the main theorem in \cite{kim1990cube} hold; hence, their result is valid for $\hat{t}^e_n$. For completeness, I report the theorem.

\begin{thmkp} \label{thm:kp}
Consider estimators defined by maximization of processes
\begin{gather}
    P_n g(\cdot, \theta)=\frac{1}{n} \sum_{i=1}^n g\left(\xi_i, \theta\right)
\end{gather}
where $\left\{\xi_i\right\}_i$ is a sequence of i.i.d. observations from a distribution $P$ and $\{g(\cdot, \theta): \theta \in \Theta\}$ is a class of functions indexed by a subset $\Theta$ in $\mathbb{R}^k$. The envelope $G_R(\cdot)$ is defined as the supremum of $g(\cdot, \theta)$ over the class
\begin{gather}
    \mathcal{G}_R=\left\{|g(\cdot, \theta)|:\left|\theta-\theta_0\right| \leq R\right\}, \quad R>0.
\end{gather}

Let $\left\{\theta_n\right\}$ be a sequence of estimators for which
\begin{enumerate}
    \item $P_n g\left(\cdot, \theta_n\right) \geq \sup _{\theta \in \Theta} P_n g(\cdot, \theta)-o_P\left(n^{-2 / 3}\right)$.
    \item $\theta_n$ converges in probability to the unique $\theta_0$ that maximizes $P g(\cdot, \theta)$.
    \item $\theta_0$ is an interior point of $\Theta$.
\end{enumerate}
Let the functions be standardized so that $g\left(\cdot, \theta_0\right) \equiv 0$. If the classes $\mathcal{G}_R$ for $R$ near 0 are uniformly manageable for the envelopes $G_R$ and satisfy:
\begin{enumerate} \setcounter{enumi}{3}
    \item $P g(\cdot, \theta)$ is twice differentiable with second derivative matrix $-H$ at $\theta_0$.
    \item $K(s, r)=\lim _{\alpha \rightarrow \infty} \alpha P g\left(\cdot, \theta_0+r / \alpha\right) g\left(\cdot, \theta_0+s / \alpha\right)$ exists for each $s, r$ in $\R^k$ and \\ $\lim _{\alpha \rightarrow \infty} \alpha P g\left(\cdot, \theta_0+r / \alpha\right)^2\left\{\left|g\left(\cdot, \theta_0+r / \alpha\right)\right|>\varepsilon \alpha\right\}=0$ for each $\varepsilon>0$ and $r$ in $\mathbb{R}^k$.
    \item $PG_R^2 = O(R)$ as $R\rightarrow 0$ and for each $\varepsilon>0$ there is a constant C such that $PG_R^2 \ind\{G_R > C\} \leq \varepsilon R$ for $R$ near 0.
    \item $P\left|g\left(\cdot, \theta_1\right)-g\left(\cdot, \theta_2\right)\right|=O\left(\left|\theta_1-\theta_2\right|\right)$ near $\theta_0$.
\end{enumerate}
Then, the process $n^{2 / 3} P_n g\left(\cdot, \theta_0+r n^{-1 / 3}\right)$ converges in distribution to a Gaussian process $Q(r)$ with continuous sample paths, expected value $-\frac{1}{2} r^{\prime} H r$ and covariance kernel $K$. If $H$ is positive definite and if $Q$ has nondegenerate increments, then $n^{1 / 3}\left(\theta_n-\theta_0\right)$ converges in distribution to the (almost surely unique) random vector that maximizes $Q$.
\end{thmkp}

I apply Theorem 1.1 in \cite{kim1990cube} by taking $\xi_i=Z_i, \theta=t, \theta_n=\hat{t}^e_n, \theta_0=t^*$, $g(\cdot, \theta) = m(\cdot,t)$, where $m(\cdot,t)$ is standardized:
\begin{gather}
    m\left(Z_i, t\right)=\left(\frac{D_i Y_i}{p(X_i)} - \frac{(1-D_i) Y_i}{(1-p(X_i))} \right) \left(\ind\{X_i > t\} - \ind\{X_i > t^*\} \right).
\end{gather}

First, I will verify that conditions 1-7 apply to my setting:
\begin{enumerate}
    \item $P_n m(\cdot, t)$ takes only finite ($n+1$) values; hence, condition 1 is satisfied with the equality.
    \item In Theorem \ref{thm:con}, I proved that $\hat{t}^e_n$ is a strongly consistent estimator for $t^*$.
    \item $t^*$ is an interior point of $\mathcal{T}$ by Assumption \ref{ass:uniq}.
\end{enumerate}

I need to prove that the classes $\mathcal{G}_R$ for $R$ near 0 are uniformly manageable for the envelopes $G_R$. The envelope function $G_R(\cdot)$ is defined as
\begin{align}
    G_R(z) & =\sup \left\{m(z,t):\left|t-t^*\right| \leq R\right\} \\
    & =\sup _{\left|t-t^*\right| \leq R}\left[\left(\frac{d y}{p(X)} - \frac{(1-d) y}{(1-p(X))} \right) \left(\ind\{x > t\} - \ind\{x > t^*\} \right)\right] \\
    & =\left|\frac{d y}{p(X)} - \frac{(1-d) y}{(1-p(X))}\right| \ind\{\left|x-t^*\right| < R\}
\end{align}
and I have:
\begin{align} \label{eq:pgrO}
    P G_R^2 & =\E \left[ \left(\frac{DY}{p(X)} - \frac{(1-D) Y}{(1-p(X))}\right)^2 \ind\{\left|X-t^*\right| < R\} \right]\\
    & = \E \left[ \left(\frac{DY}{p(X)} - \frac{(1-D) Y}{(1-p(X))}\right)^2 \Bigg| X \in (t^*-R,t^*+R) \right]\\
    & = 2 R\E \left[ \left(\frac{DY}{p(X)} - \frac{(1-D) Y}{(1-p(X))}\right)^2 \Bigg| X =t^* \right] + o(1) = O(R)
\end{align}
where the second to last equality comes from Assumption \ref{ass:sqintegr}. The envelope function is uniformly square-integrable for $R$ near 0, and therefore, the classes $\mathcal{G}_R$ are uniformly manageable.

\begin{enumerate} \setcounter{enumi}{3}
    \item Define $h(t) = P m(Z, t)$ and consider derivatives:
    \begin{align}
         h(t) =& \E_P\left[\left(\frac{D Y}{p(X)} - \frac{(1-D) Y}{(1-p(X))} \right) (\ind\{X > t\} - \ind\{X > t^*\}) \right] = \\
        & \E_P\left[\left(Y_1 - Y_0\right) (\ind\{X > t\} - \ind\{X > t^*\})\right] \\
        h'(t) =& - f_x(t)\E_P\left[Y_1 - Y_0 \Big| X = t \right] \\
        h''(t) =& - f'_x(t)\E_P\left[Y_1 - Y_0 \Big| X = t \right] - f_x(t) \left(\frac{\partial \E_P\left[Y_1 - Y_0 | X = t \right]}{\partial X} \right). 
    \end{align}
    Assumption \ref{ass:smoot} with $s=2$ guarantees the existence of $h'$ and $h''$. Since $\E_P\left[Y_1 - Y_0 \Big| X = t^* \right] = 0$, $H$ is given by
    \begin{gather}
        H= - h''(t^*) =  f_x(t^*) \left(\frac{\partial \E_P\left[Y_1 - Y_0 | X = t^* \right]}{\partial X} \right).
    \end{gather}
    \item This condition is divided into two parts. First, I prove the existence of $K(s, r)=\lim _{\alpha \rightarrow \infty} \alpha P m\left(\cdot, t^*+r / \alpha\right) m\left(\cdot, t^*+s / \alpha\right)$ for each $s,r$ in $\R$. Covariance $K$ is:
    \begin{align}
         P m & \left(\cdot, t^*+r / \alpha\right) m\left(\cdot, t^*+s / \alpha\right) =
         \E \left[ \left(\frac{DY}{p(X)} - \frac{(1-D) Y}{(1-p(X))}\right)^2 \right. \\
        & (\ind\{X>t^*+r/ \alpha \} -\ind\{X>t^* \}) (\ind\{X>t^*+s/ \alpha \} -\ind\{X>t^* \})\Big].
    \end{align}
    If $rs<0$, covariance and $K(s,r)$ are 0. If $rs>0$, and suppose $r>0$:
    \begin{align}
         & P m\left(\cdot, t^*+r / \alpha\right) m\left(\cdot, t^*+s / \alpha\right) = \\
         & \E \left[ \left(\frac{DY}{p(X)} - \frac{(1-D) Y}{(1-p(X))}\right)^2 \Big| X \in (t^*,t^* + \min\{r,s\}/\alpha )\right]
    \end{align}
    and hence
    \begin{align}
        K(s, r)&=\lim _{\alpha \rightarrow \infty} \alpha \E \left[ \left(\frac{DY}{p(X)} - \frac{(1-D) Y}{(1-p(X))}\right)^2 \Big| X \in (t^*,t^* + \min\{r,s\}/\alpha )\right] \\
        &= \min\{r,s\} f_x(t^*) \E \left[ \left(\frac{DY}{p(X)} - \frac{(1-D) Y}{(1-p(X))}\right)^2 \Big| X = t^*\right].
    \end{align}
    where the equality is due to continuity of $f_x$ (Assumption \ref{ass:smoot} with $s=2$). Boundedness of the quantity follows from Assumptions \ref{ass:sqintegr} and \ref{ass:smoot}.
    
    Now, I will prove that $\lim _{\alpha \rightarrow \infty} \alpha P m\left(\cdot, t^* +r / \alpha\right)^2\ind \left\{\left|m\left(\cdot, t^*+r / \alpha\right)\right|>\varepsilon \alpha\right\}=0$ for each $\varepsilon>0$ and $r$ in $\R$. I have:
    \begin{align}
        & \alpha P m\left(\cdot, t^* +r / \alpha\right)^2 \ind\left\{\left|m\left(\cdot, t^*+r / \alpha\right)\right|>\varepsilon \alpha\right\} = \\
        & \alpha \E \left[ \left(\frac{DY}{p(X)} - \frac{(1-D) Y}{(1-p(X))}\right)^2 \right. \\
        &(\ind\{X>t^*+r/ \alpha \} -\ind\{X>t^* \})^2 \ind\left\{\left|m\left(\cdot, t^*+r / \alpha\right)\right|>\varepsilon \alpha\right\} \Big] = \\
        &\alpha \E \left[ \left(\frac{DY}{p(X)} - \frac{(1-D) Y}{(1-p(X))}\right)^2 \right. \\
        &(\ind\{X>t^*+r/ \alpha \} -\ind\{X>t^* \})^2
         \ind\left\{\left|\frac{DY}{p(X)} - \frac{(1-D) Y}{(1-p(X))} \right|>\varepsilon \alpha\right\} \Big]\leq \\
        &\alpha \E \left[ \left(\frac{DY}{p(X)} - \frac{(1-D) Y}{(1-p(X))}\right)^2 
         \ind\left\{\left|\frac{DY}{p(X)} - \frac{(1-D) Y}{(1-p(X))} \right|>\varepsilon \alpha\right\} \right].
    \end{align}
    Some algebra gives
    \begin{align}
        & \E \left[ \left(\frac{DY}{p(X)} - \frac{(1-D) Y}{(1-p(X))}\right)^2 \ind\left\{\left|\frac{DY}{p(X)} - \frac{(1-D) Y}{(1-p(X))} \right|> \epsilon \right\} \right] = \\
        & \E \left[ \left(\frac{DY_1^2}{p(X)^2} + \frac{(1-D) Y_0^2}{(1-p(X))^2}\right) \ind\left\{\left|\frac{DY}{p(X)} - \frac{(1-D) Y}{(1-p(X))} \right|> \epsilon \right\} \right] = \\
        & \E \left[ \frac{Y_1^2}{p(X)} \ind\left\{\left|\frac{Y_1}{p(X)} \right|> \epsilon \right\} \right] + \E \left[  \frac{ Y_0^2}{(1-p(X))} \ind\left\{\left| \frac{ Y_0}{(1-p(X))} \right|> \epsilon \right\} \right] =\\
        & \frac{1}{p(X)}\E \left[ Y_1^2 \ind\left\{\left|Y_1 \right|> \epsilon_1 \right\} \right] + \frac{ 1}{(1-p(X))}\E \left[  Y_0^2 \ind\left\{\left|  Y_0 \right|> \epsilon_0 \right\} \right]
    \end{align}
    and hence the condition is satisfied if 
    \begin{gather}
        \lim_{\alpha \rightarrow \infty} \alpha \E \left[ Y_1^2 \ind\left\{\left|Y_1 \right|> \epsilon \alpha \right\} \right] = 0 \\
        \lim_{\alpha \rightarrow \infty} \alpha \E \left[ Y_0^2 \ind\left\{\left|Y_0 \right|> \epsilon \alpha \right\} \right]= 0.
    \end{gather}
    Consider the limit for $Y_1$:
    \begin{gather}
        \lim_{\alpha \rightarrow \infty} \alpha \E \left[ Y_1^2 \ind\left\{\left|Y_1 \right|> \epsilon \alpha \right\} \right] = \lim_{\alpha \rightarrow \infty} \frac{\int_{\epsilon \alpha} Y_1^2 \varphi_1(y_1) dy_1}{\alpha^{-1}} = \lim_{\alpha \rightarrow \infty} \frac{\epsilon^3 \alpha^2 \varphi_1(\epsilon \alpha) }{\alpha^{-2}} = \\
        \lim_{\alpha \rightarrow \infty} \epsilon^3 \alpha^4 \varphi_1(\epsilon \alpha) = \lim_{\alpha \rightarrow \infty} \epsilon^3 \alpha^4 |\epsilon \alpha |^{-(4+\delta)} o(1) = 0
    \end{gather}
    where the second to last equality follows from Assumption \ref{ass:asymptotic}.
    \item I showed that $PG_R^2 = O(R)$ as $R\rightarrow 0$ in equation \eqref{eq:pgrO}. I need to prove that for each $\varepsilon>0$ there is a constant C such that $PG_R^2 \ind\{G_R > C\} \leq \varepsilon R$ for $R$ near 0. For any $\varepsilon >0$ and $C >0$:
    \begin{align}
        PG_R^2 & \ind\{G_R > C\} = \E \left[ \left(\frac{DY}{p(X)} - \frac{(1-D) Y}{(1-p(X))}\right)^2 \ind\{\left|X-t^*\right| < R\} \ind\{G_R > C\} \right] \leq \\
        &\E \left[ \left(\frac{DY}{p(X)} - \frac{(1-D) Y}{(1-p(X))}\right)^2 \ind\{\left|X-t^*\right| < R\} \ind\left\{\left|\frac{DY}{p(X)} - \frac{(1-D) Y}{(1-p(X))} \right|> C \right\} \right] \leq \\
        &\E \left[ \left(\frac{DY}{p(X)} - \frac{(1-D) Y}{(1-p(X))}\right)^2 \ind\left\{\left|\frac{DY}{p(X)} - \frac{(1-D) Y}{(1-p(X))} \right|> C \right\} \right] \rightarrow 0
    \end{align}
    where the last limit is taken for $C\rightarrow \infty$, and follows from Assumption \ref{ass:asymptotic}.
    \item I need to show that $P\left|m\left(\cdot, t_1\right)-m\left(\cdot, t_2\right)\right|=O\left(\left|t_1-t_2\right|\right)$ near $t^*$. Consider $t_2 > t_1$:
    \begin{gather}
        P\left|m\left(\cdot, t_1\right)-m\left(\cdot, t_2\right)\right| = \E\left[ \left| \left(\frac{D Y}{p(X)} - \frac{(1-D) Y}{(1-p(X))} \right) (\ind\{X > t_1\} - \ind\{X > t_2\}) \right| \right] \leq \\
        M_x \E\left[ \left| \left(\frac{D Y}{p(X)} - \frac{(1-D) Y}{(1-p(X))} \right)\right| \Big| X \in (t_1, t_2)  \right] \leq M_x M_y |t_2 - t_1|
    \end{gather}
    where $M_x = \max_{x \in (t_1, t_2)} f_x(x)$ and $M_y = \max_{x \in (t_1, t_2)} \E\left[ \left| \left(\frac{D Y}{p(X)} - \frac{(1-D) Y}{(1-p(X))} \right)\right| \Big| X  \right]$. $M_x <\infty$ and $M_y <\infty$ because of Assumption \ref{ass:smoot}.
\end{enumerate}

Assumptions 1-7 of Theorem 1.1 in \cite{kim1990cube} are hence satisfied. It follows that, for $n\rightarrow \infty$,
\begin{gather}
    n^{1 / 3}\left(\hat{t}^e_n-t^*\right) \rightarrow^d \argmax_{r}Q(r)
\end{gather}
where $Q(r) = Q_1(r) + Q_0(r)$, and $Q_1$ is a non degenerate zero-mean Gaussian process with covariance $K$, while $Q_0(r)$ is non-random and $Q_0(r) = -\frac{1}{2}r^2H$. By Assumptions \ref{ass:smoot} and \ref{ass:nonflattau}, $H\neq 0$.

Limiting distribution $\argmax_{r}Q(r)$ is of \cite{chernoff1964estimation} type. It can be shown \citep{banerjee2001likelihood} that
\begin{gather}
    \argmax_r Q(r) =^d (2\sqrt{K}/H)^{\frac{2}{3}}\argmax_r B(r) - r^2
\end{gather}
where $B(r)$ is the two-sided Brownian motion process, $K$ is:
\begin{align}
    K =& f_x(t^*) \E \left[ \left(\frac{DY}{p(X)} - \frac{(1-D) Y}{(1-p(X))}\right)^2 \Big| X = t^*\right] \\
    =& f_x(t^*) \left(\frac{1}{p(t^*)} \E[Y_1^2|X = t^*] + \frac{1}{1-p(t^*)} \E[Y_0^2|X = t^*] \right)
\end{align}
and $H$ is:
\begin{gather}
    H=  f_x(t^*) \left(\frac{\partial \E_P\left[Y_1 - Y_0 | X = t^* \right]}{\partial X} \right).
\end{gather}
This completes the proof of the theorem.
\end{proof}

\subsection*{Corollary \ref{cor:ewm_regret}}

\begin{corollary} \label{cor:ewm_regret}
    The asymptotic distribution of regret $\mathcal{R}(\hat{t}^e_n)$ is:
    \begin{gather*}
        n^{\frac{2}{3}} \mathcal{R}(\hat{t}^e_n) \rightarrow^d  \left( \frac{2K^{2}}{H} \right)^\frac{1}{3}  \left( \argmax_r B(r) - r^2 \right)^2.
    \end{gather*}
    The expected value of the asymptotic distribution is $K^\frac{2}{3} H^{-\frac{1}{3}} C^e$, where $$C^e= \sqrt[3]{2} \E\left[\left( \argmax_r B(r) - r^2 \right)^2\right]$$ is a constant not dependent on $P$.
\end{corollary}

\begin{proof}
    Result in equation \eqref{eq:asympt_regret_d} for $\hat{t}^e_n$ implies
    \begin{gather*}
    n^{\frac{2}{3}} \mathcal{R}(\hat{t}^e_n) =  \frac{1}{2} W''(\tilde{t}) \left( n^{\frac{1}{3}} \left(\hat{t}^e_n-t^*\right) \right)^2,
    \end{gather*}
    where $|\tilde{t}-t^*| \leq |\hat{t}_n - t^*|$. By continuous mapping theorem
    \begin{gather*}
        W''(\tilde{t}) \rightarrow^p W''(t^*) = H
    \end{gather*}
    and hence by Slutsky's theorem
    \begin{gather*}
    n^{\frac{2}{3}} \left( W(\hat{t}_n) - W(t^*) \right) \rightarrow^d  \left( \frac{2K^{2}}{H} \right)^\frac{1}{3}  \left( \argmax_r B(r) - r^2 \right)^2.
    \end{gather*}
\end{proof}

\subsection*{Theorem \ref{thm:cons_sm}}

\begin{thm} \label{thm:cons_sm}
Consider the SWM policy $\hat{t}^s_n$ defined in equation \eqref{eq:estimator_sm} and the optimal policy $t^*$ defined in equation \eqref{eq:estimand}. Under Assumptions \ref{ass:identification}, \ref{ass:consistency} (with $s=0$), and \ref{ass:kernel},
\begin{gather*}
    \hat{t}^s_n \rightarrow^{a.s.} t^*
\end{gather*}
i.e. $\hat{t}^s_n$ is a strongly consistent estimator for $t^*$.
\end{thm}

\begin{proof}
To prove the result, I show that conditions for Theorem 4.1.1 in \cite{amemiya1985advanced} hold, and hence $\hat{t}^s_n$ is consistent for $t^*$.

First, define function $m^s(Z,t)$:
\begin{gather*}
    m^s \left(Z, t\right)=\left(\frac{D Y}{p(X)} - \frac{(1-D) Y}{(1-p(X))} \right) \left(k\left(\frac{X - t}{\sigma_n} \right) - k\left(\frac{X - t^*}{\sigma_n} \right) \right)
\end{gather*}
and recall definitions of $m(Z,t)$, $P_n$, and $P$ introduced in the proof of Theorem \ref{thm:asyd}:
\begin{align}
    m\left(Z, t\right) &=\left(\frac{D Y}{p(X)} - \frac{(1-D) Y}{(1-p(X))} \right) \left(\ind\{X > t\} - \ind\{X > t^*\} \right) \\
    P_n m\left(\cdot, t\right) &= \sum_{i=1}^n m\left(Z_i, t\right) \\
    P m\left(Z, t\right) &= \E_P[ m\left(Z, t\right)].
\end{align}
In this notation, $\hat{t}^s_n = \argmax_{t} P_n m^s (.,t)$ and $t^* = \argmax_{t} P m (Z,t)$. I can now show that conditions $A$, $B$, and $C$ for Theorem 4.1.1 in \cite{amemiya1985advanced} hold:

\begin{itemize}
\item[A)] Parameter space $\mathcal{T}$ is compact by Assumption \ref{ass:uniq}.
\item[B)] Function $P_n m^s \left(Z_i, t\right)$ is continuous in $t$ for all $Z$ and is a measurable function of $Z$ for all $t \in \mathcal{T}$, as $k(\cdot)$ is continuous by Assumption \ref{ass:kernel}.
\item[C1)] I need to prove that $P_n m^s \left(Z_i, t\right)$ converges a.s. to $P m(Z,t)$ uniformly in $t \in \mathcal{T}$ as $n \rightarrow \infty$, i.e. $\sup _t\left|P_n m^s(\cdot, t)-P m(Z, t)\right| \rightarrow^{a.s.} 0$. Note that:
\begin{align}
    & \sup _t\left|P_n m^s(\cdot, t)-P m(Z, t)\right| \leq  \\
   & \sup _t\left|P_n m^s(\cdot, t)-P m^s(Z, t)\right| + \sup _t\left|P m^s(\cdot, t)-P m(Z, t)\right|.
\end{align}
I need to show that the two addends on the right-hand side converge to zero.

To show uniform convergence of $P_n m^s(\cdot, t)$ to $P m^s(Z, t)$, I consider sufficient conditions provided by Theorem 4.2.1 in \cite{amemiya1985advanced}. $m^s(Z,t)$ is continuous in $t\in \mathcal{T}$ with $\mathcal{T}$ compact, and measurable in $Z$. I only need to show that $\E[\sup_{t \in \mathcal{T}}|m^s(Z,t)|]<\infty$. By Assumption \ref{ass:kernel}, $k(\cdot)$ is a bounded function, i.e. it exists an $M$ such that $|k(x)|<M$ for all $x$. Hence $\E[\sup_{t \in \mathcal{T}}|m^s(Z,t)|]\leq M \E\left[\left|\frac{D Y}{p(X)} - \frac{(1-D) Y}{(1-p(X))} \right|\right]$, and $\E\left[\left|\frac{D Y}{p(X)} - \frac{(1-D) Y}{(1-p(X))} \right|\right]< \infty$ by Assumption \ref{ass:sqintegr}.

To show uniform convergence of $P m^s(\cdot, t)$ to $P m(Z, t)$, note that $t\in \mathcal{T}$, where $\mathcal{T}$ is bounded, and hence the result holds for $\sigma_n \rightarrow 0$.
\item[C2)] By Assumption \ref{ass:uniq}, $t^*$ is the unique global maximum of $P m(Z,t)$.
\end{itemize}

Assumptions $A$, $B$, $C$ of Theorem 4.1.1 in \cite{amemiya1985advanced} are satisfied, and hence $\hat{t}^s_n \rightarrow^{a.s.} t^*$.
\end{proof}

\subsection*{Lemmas}
\setcounter{thm}{0}
Proof of Theorem \ref{thm:asynormsmooth} requires some intermediate lemmas, stated and proved below. Arguments follow the ideas in \cite{horowitz1992smoothed}, but are adapted to my context. I report the entire proof for completeness, even when it overlaps with the original in \cite{horowitz1992smoothed}.

To make the notation simpler, define:
\begin{gather*}
    \hat{S}_n(t,\sigma_n) =  \frac{1}{n} \sum_{i=1}^n \left[\left(\frac{D_i Y_i}{p(X_i)} - \frac{(1-D_i) Y_i}{(1-p(X_i))} \right) k\left(\frac{X_i - t}{\sigma_n} \right) \right]
\end{gather*}
and note that $\hat{t}^s_n=\argmax_{t} \hat{S}_n(t,\sigma_n)$. Then define:
\begin{gather*}
    \hat{S}_n^1(t,\sigma_n) = \frac{\partial \hat{S}_n(t,\sigma_n)}{ \partial t} = - \frac{1}{\sigma_n} \frac{1}{n} \sum_{i=1}^n \left[\left(\frac{D_i Y_i}{p(X_i)} - \frac{(1-D_i) Y_i}{(1-p(X_i))} \right) k'\left(\frac{X_i - t}{\sigma_n} \right) \right]\\
    \hat{S}_n^2(t,\sigma_n) = \frac{\partial \hat{S}_n(t,\sigma_n)}{ \partial^2 t} = \frac{1}{\sigma_n^2} \frac{1}{n} \sum_{i=1}^n \left[\left(\frac{D_i Y_i}{p(X_i)} - \frac{(1-D_i) Y_i}{(1-p(X_i))} \right) k''\left(\frac{X_i - t}{\sigma_n} \right) \right].
\end{gather*}
Indicate with $\varphi_{y,x}(y,x)$ the joint distribution of $Y_1$, $Y_0$, and $X$, and with $\varphi_{y|x}(y|x)$ the conditional distribution, where $\varphi_{y,x}(y,x) = \varphi_{y|x}(y|x) f_x(x)$.

\subsubsection*{Lemma 1}
    \begin{lem}\label{lemma1}
        Let Assumptions \ref{ass:identification}, \ref{ass:consistency} (with $s=h + 1$ for some $h\geq 2$), and \ref{ass:asymptotic_sm} hold. Then
        \begin{gather}
            \lim_{n\rightarrow \infty} \E \left[ \sigma_n^{-h} \hat{S}_n^1(t^*,\sigma_n) \right] = A  \\
            \lim_{n\rightarrow \infty} \var \left[ (n\sigma_n)^{\frac{1}{2}} \hat{S}_n^1(t^*,\sigma_n) \right] = \alpha_2 K.
        \end{gather}
    \end{lem}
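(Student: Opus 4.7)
The plan is to compute each moment by (i) using Assumption \ref{ass:identification} to rewrite the expectation in terms of the potential outcomes $(Y_0, Y_1)$, (ii) applying the change of variable $\zeta = (X - t^*)/\sigma_n$, and (iii) Taylor-expanding the joint density $\varphi(y_0, y_1, x)$ in its third argument around $x = t^*$.

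For the first limit, unconfoundedness and overlap yield
\begin{gather*}
    \E[\hat{S}_n^1(t^*, \sigma_n)] = -\sigma_n^{-1} \E\left[(Y_1 - Y_0)\, k'\!\left(\tfrac{X - t^*}{\sigma_n}\right)\right].
\end{gather*}
After substituting $x = t^* + \sigma_n \zeta$ and Taylor-expanding $\varphi(\cdot, \cdot, t^* + \sigma_n \zeta)$ to order $h$ (justified by Assumption \ref{ass:smoot} with $s = h + 1$), the order-$0$ term reduces to $-f_x(t^*)\, \E[Y_1 - Y_0 \mid X = t^*]$, which vanishes by the first-order condition $W'(t^*) = 0$ since $f_x(t^*) > 0$. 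The terms of orders $1, \dots, h - 1$ vanish by the kernel moment conditions in Assumption \ref{ass:extraonk}, namely $\int \zeta^i k'(\zeta)\, d\zeta = 0$ for $1 \leq i < h$. Dividing by $\sigma_n^h$ isolates the order-$h$ term, whose limit equals $-(h!)^{-1}\alpha_1 \int (y_1 - y_0)\, \varphi^h_x(y, t^*)\, dy = A$.

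For the variance limit, I would decompose $\var[\cdot] = \E[\cdot^2] - (\E[\cdot])^2$; the squared-mean piece is $O(\sigma_n^{2h}/n)$ by the mean computation, and hence $o((n\sigma_n)^{-1})$ since $h \geq 2$. Letting $U = DY/p(\textbf{X}) - (1 - D)Y/(1 - p(\textbf{X}))$, Assumption \ref{ass:identification} delivers
\begin{gather*}
    \E[U^2 \mid X = x] = \frac{1}{p(\textbf{X})}\E[Y_1^2 \mid X = x] + \frac{1}{1 - p(\textbf{X})}\E[Y_0^2 \mid X = x].
\end{gather*}
The same change of variable then writes $(n\sigma_n)\var[\hat{S}_n^1(t^*, \sigma_n)] = \sigma_n^{-1} \E\!\left[U^2 k'((X - t^*)/\sigma_n)^2\right] + o(1) = \int g(t^* + \sigma_n \zeta) k'(\zeta)^2\, d\zeta + o(1)$, where $g(x) = \E[U^2 \mid X = x] f_x(x)$. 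Continuity of $g$ at $t^*$ (Assumption \ref{ass:smoot}) combined with dominated convergence yields the limit $g(t^*)\int k'(\zeta)^2\, d\zeta = \alpha_2 K$.

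The main obstacle is justifying the passage to the limit inside the two integrals: both the Taylor remainder in the mean and the pointwise convergence in the variance must be controlled uniformly in $\zeta$. Assumption \ref{ass:extraonk} is designed exactly for this, as the truncation conditions $\lim_{n \to \infty} \sigma_n^{i - h} \int_{|\sigma_n \zeta| > \eta} |\zeta^i k'(\zeta)|\, d\zeta = 0$ eliminate tail contributions, while the finiteness of $\int k'(\zeta)^2\, d\zeta$ and $\int |\zeta^h k'(\zeta)|\, d\zeta$, combined with the square integrability in Assumption \ref{ass:sqintegr}, supplies the dominating functions needed for dominated convergence.
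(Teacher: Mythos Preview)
Your proposal is correct and follows essentially the same route as the paper: rewrite the moment in terms of $(Y_1-Y_0)$ via Assumption \ref{ass:identification}, substitute $\zeta=(X-t^*)/\sigma_n$, Taylor-expand $\varphi(\cdot,\cdot,t^*+\sigma_n\zeta)$ to order $h$, kill the zeroth-order term by the first-order condition $\E[Y_1-Y_0\mid X=t^*]=0$ and the intermediate terms by the kernel moment conditions in Assumption \ref{ass:extraonk}; for the variance, decompose $\var=\E[\cdot^2]-(\E[\cdot])^2$, dispose of the squared-mean piece (your $O(\sigma_n^{2h}/n)$ is exactly the paper's observation that $\sigma_n(\E[\sigma_n^{-1}Uk'])^2=O(\sigma_n^{2h+1})\to 0$), and pass to the limit in the second-moment integral by continuity of $x\mapsto \E[U^2\mid X=x]f_x(x)$ at $t^*$. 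The paper's argument is the same in substance and in the order of steps.
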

    \begin{proof}
        First, I will prove that $\lim_{n\rightarrow \infty} \E \left[ \sigma_n^{-h} \hat{S}_n^1(t^*,\sigma_n) \right] = A$:
        \begin{align}
            \E \left[ \sigma_n^{-h} \hat{S}_n^1(t^*,\sigma_n) \right] &= - \frac{\sigma_n^{-h}}{\sigma_n} \E\left[\left(\frac{D Y}{p(X)} - \frac{(1-D) Y}{(1-p(X))} \right) k'\left(\frac{X_i - t^*}{\sigma_n} \right) \right] \\
            &= - \frac{\sigma_n^{-h}}{\sigma_n} \E\left[\left(Y_1 - Y_0 \right) k'\left(\frac{X_i - t^*}{\sigma_n} \right) \right] \\
            &= - \sigma_n^{-h} \int_x \int_y \left(Y_1 - Y_0 \right) \frac{1}{\sigma_n} k'\left(\frac{X_i - t^*}{\sigma_n} \right) \varphi_{y,x}(y,x) d y d x \\
            &= - \sigma_n^{-h} \int_\zeta \int_y \left(Y_1 - Y_0 \right) k'\left(\zeta \right) \varphi_{y,x}(y,t^* + \zeta \sigma_n) d y d \zeta
        \end{align}
        where in the last line I made the substitution $\zeta = \frac{X_i - t^*}{\sigma_n}$. Consider the Taylor expansion of $\varphi$ around $\varphi(y,t^*)$:
        \begin{align}
            \varphi(y,t^* + \zeta \sigma_n) &= \varphi(y,t^*) + \zeta \sigma_n \varphi^{1}_2(y,t^*) + \frac{1}{2} (\zeta \sigma_n)^2 \varphi^{2}_2(y,t^*) + \dots \\
            &= \varphi(y,t^*) + \left( \sum_{i=1}^{h-1} \frac{1}{i!} \varphi^{i}_2(y,x=t^*) \zeta^i \sigma_n^i \right) + \frac{1}{h!} \varphi^{h}_2(y,\tilde{t}) \zeta^h \sigma_n^h
        \end{align}
        with $|\tilde{t}-t^*| \leq |t^* + \zeta \sigma_n - t^*|$. Existence of $\varphi^{m}_2(y,x)$, the $m$-derivatives of $\varphi(y,x)$ with respect to its second argument, is guaranteed by Assumption \ref{ass:smoot} with $s=h+1$.

        Write $\E \left[ \sigma_n^{-h} \hat{S}_n^1(t^*,\sigma_n) \right] $ as $I_{n1} + I_{n2} + I_{n3}$, where:
        \begin{align}
            I_{n1} =&- \sigma_n^{-h} \int_\zeta \int_y \left(Y_1 - Y_0 \right) k'\left(\zeta \right)  \varphi(y,t^*) d y d \zeta \\
            =&  - \sigma_n^{-h}\int_\zeta k'\left(\zeta \right)d \zeta \underbrace{\int_y \left(Y_1 - Y_0 \right) \varphi(y,t^*)
            d y}_{=\E[Y_1 - Y_0 |X=t^*]=0} = 0 \\
            I_{n2} =& - \sigma_n^{-h} \int_\zeta \int_y \left(Y_1 - Y_0 \right) k'\left(\zeta \right)  \left( \sum_{i=1}^{h-1} \frac{1}{i!} \varphi^{i}_2(y,x=t^*) \zeta^i \sigma_n^i \right) d y d \zeta \\
            =& \sigma_n^{-h} \int_y \left(Y_1 - Y_0 \right) \left( \sum_{i=1}^{h-1} \frac{1}{i!} \varphi^{i}_2(y,x=t^*) \sigma_n^i \underbrace{\int_\zeta k'\left(\zeta \right) \zeta^i  d \zeta}_{=0} \right) d y = 0.
        \end{align}
        Result on $I_{n1}$ follows from definition of $t^*$, while Assumption \ref{ass:extraonk} guarantees result on $I_{n2}$. Finally, consider $I_{n3}$:
        \begin{align}
             I_{n3} =&- \sigma_n^{-h} \int_\zeta \int_y \left(Y_1 - Y_0 \right) k'\left(\zeta \right)  \frac{1}{h!} \varphi^{h}_2(y,\tilde{t}) \zeta^h \sigma_n^h d y d \zeta \\
             =& -\frac{1}{h!} \int_\zeta k'\left(\zeta \right) \zeta^h d \zeta \int_y \left(Y_1 - Y_0 \right) \varphi^{h}_2(y,\tilde{t}) d y
        \end{align}
        and conclude that:
        \begin{gather}
            \lim_{n\rightarrow \infty} \E \left[ \sigma_n^{-h} \hat{S}_n^1(t^*,\sigma_n) \right] = -\frac{1}{h!} \int_\zeta \zeta^h k'\left(\zeta \right) d \zeta \int_y \left(Y_1 - Y_0 \right) \varphi^{h}_2(y,t^*) d y = A.
        \end{gather}

        Now I will prove that $\lim_{n\rightarrow \infty} \var \left[ (n\sigma_n)^{\frac{1}{2}} \hat{S}_n^1(t^*,\sigma_n) \right] = \alpha_2 K$. Note that:
        \begin{align}
            \var \left[ (n\sigma_n)^{\frac{1}{2}} \hat{S}_n^1(t^*,\sigma_n) \right] =& \var \left[ (n\sigma_n)^{-\frac{1}{2}} \sum_{i=1}^n \left[\left(\frac{D_i Y_i}{p(X_i)} - \frac{(1-D_i) Y_i}{(1-p(X_i))} \right) k'\left(\frac{X_i - t^*}{\sigma_n} \right) \right] \right] = \\
            =& \sigma_n \var \left[ \frac{1}{\sigma_n} \left(\frac{D Y}{p(X)} - \frac{(1-D) Y}{(1-p(X))} \right) k'\left(\frac{X - t^*}{\sigma_n} \right) \right] = \\
            =& \sigma_n \E \left[ \frac{1}{\sigma_n^2} \left(\frac{D Y}{p(X)} - \frac{(1-D) Y}{(1-p(X))} \right)^2 k'\left(\frac{X - t^*}{\sigma_n} \right)^2 \right] - \\
            & \sigma_n \E \left[ \frac{1}{\sigma_n} \left(\frac{D Y}{p(X)} - \frac{(1-D) Y}{(1-p(X))} \right) k'\left(\frac{X - t^*}{\sigma_n} \right) \right]^2.
        \end{align}
        The second term in the last expression goes to 0 as $n \rightarrow \infty$. For the first term, observe that:
        \begin{gather}
            \sigma_n \E \left[ \frac{1}{\sigma_n} \left(\frac{D Y}{p(X)} - \frac{(1-D) Y}{(1-p(X))} \right)^2 k'\left(\frac{X - t^*}{\sigma_n} \right)^2 \right] = \\
            \int_x \int_y \left(\frac{D Y}{p(X)} - \frac{(1-D) Y}{(1-p(X))} \right)^2 k'\left(\frac{X - t^*}{\sigma_n} \right)^2 \frac{1}{\sigma_n} \varphi_{y,x}(y,x) d y d x = \\
            \int_\zeta \int_y \left(\frac{D Y}{p(X)} - \frac{(1-D) Y}{(1-p(X))} \right)^2 k'\left(\zeta \right)^2 \varphi_{y,x}(y,t^* + \zeta \sigma_n) d y d \zeta
        \end{gather}
        where in the last line I made the substitution $\zeta = \frac{X_i - t^*}{\sigma_n}$. Conclude that
        \begin{align}
            \var \left[ (n\sigma_n)^{\frac{1}{2}} \hat{S}_n^1(t^*,\sigma_n) \right] =& \int_\zeta k'\left(\zeta \right)^2 d \zeta f_x(t^*) \E \left [ \left(\frac{D Y}{p(X)} - \frac{(1-D) Y}{(1-p(X))} \right)^2 | X 
            = t^* \right] \\ =& \alpha_2 K.
        \end{align}
        Note that $\alpha_2 K$ is bounded by Assumptions \ref{ass:sqintegr}, \ref{ass:smoot}, and \ref{ass:extraonk}.
    \end{proof}

\subsubsection*{Lemma 2}
    \begin{lem}\label{lemma2}
        Let Assumptions \ref{ass:identification}, \ref{ass:consistency} (with $s=h + 1$ for some $h\geq 2$), and \ref{ass:asymptotic_sm} hold. If $n \sigma_n^{2h + 1}\rightarrow \infty$, $\sigma_n^{-h} \hat{S}_n^1(t^*,\sigma_n)$ converges in probability to A. If $n \sigma_n^{2h + 1}$ has a finite limit $\lambda$, $(n\sigma_n)^{\frac{1}{2}} \hat{S}_n^1(t^*,\sigma_n)$ converges in distribution to $\mathcal{N}(\lambda^{\frac{1}{2}}A,\alpha_2 K)$.
    \end{lem}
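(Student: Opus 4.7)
The plan combines Lemma \ref{lemma1}, which already pins down the asymptotic mean of $\sigma_n^{-h}\hat S_n^1(t^*,\sigma_n)$ and the asymptotic variance of $(n\sigma_n)^{1/2}\hat S_n^1(t^*,\sigma_n)$, with a Chebyshev bound for part (i) and a Lindeberg--Feller central limit theorem for part (ii).

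For part (i), I would decompose $\sigma_n^{-h}\hat S_n^1(t^*,\sigma_n)$ into its expectation plus a centred piece. The expectation tends to $A$ by Lemma \ref{lemma1}, while the variance satisfies
$$\var\bigl[\sigma_n^{-h}\hat S_n^1(t^*,\sigma_n)\bigr]=\frac{1}{n\sigma_n^{2h+1}}\,\var\bigl[(n\sigma_n)^{1/2}\hat S_n^1(t^*,\sigma_n)\bigr]=\frac{\alpha_2K\,(1+o(1))}{n\sigma_n^{2h+1}}\to 0,$$
using Lemma \ref{lemma1} and the hypothesis $n\sigma_n^{2h+1}\to\infty$. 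Chebyshev's inequality then closes the argument.

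For part (ii), express $(n\sigma_n)^{1/2}\hat S_n^1(t^*,\sigma_n)=n^{-1/2}\sum_{i=1}^n W_{in}$, where
$$W_{in}:=-\sigma_n^{-1/2}\Bigl(\tfrac{D_iY_i}{p(\mathbf X_i)}-\tfrac{(1-D_i)Y_i}{1-p(\mathbf X_i)}\Bigr)k'\!\Bigl(\tfrac{X_i-t^*}{\sigma_n}\Bigr)$$
is i.i.d.\ in $i$ for each fixed $n$. Lemma \ref{lemma1} already gives $\sqrt n\,\E[W_{in}]\to\lambda^{1/2}A$ and $\var[W_{in}]\to\alpha_2K$, so the mean and variance of the claimed normal limit are correctly identified, and it remains to verify a Lindeberg-type condition. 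I would do so via Lyapunov with exponent $2+\delta'$ for some small $\delta'>0$: the change of variable $\zeta=(x-t^*)/\sigma_n$ absorbs one $\sigma_n$ from the Jacobian, and a mild extension of Assumption \ref{ass:extraonk} ensures $\int|k'(\zeta)|^{2+\delta'}d\zeta<\infty$, giving $\E[|W_{in}|^{2+\delta'}]=O(\sigma_n^{-\delta'/2})$. The Lyapunov ratio is therefore $O((n\sigma_n)^{-\delta'/2})\to 0$, since Assumption \ref{ass:rateofs} implies $n\sigma_n\to\infty$. Lindeberg--Feller then delivers the stated normal limit.

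The main obstacle is verifying finiteness of $\E[|U|^{2+\delta'}\mid X=t^*]$ for $U=DY/p(\mathbf X)-(1-D)Y/(1-p(\mathbf X))$, which is what makes the Lyapunov moment bound above usable. The overlap condition (Assumption \ref{ass:overlap}) bounds $|U|\le\max(|Y_1|,|Y_0|)/\eta$, and the tail decay $\varphi_j(y)=o(|y|^{-(4+\delta)})$ in Assumption \ref{ass:tail} implies finiteness of unconditional moments of $Y_j$ of order strictly less than $3+\delta$; the smoothness of the joint density (Assumption \ref{ass:jointdist}) and positivity of $f_x$ near $t^*$ (Assumption \ref{ass:smoot}) transfer this bound to the conditional distribution at $X=t^*$, so any $\delta'\in(0,1+\delta)$ is admissible.
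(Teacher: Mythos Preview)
Your argument for the first claim is identical to the paper's: decompose into mean plus centred piece, invoke Lemma \ref{lemma1} for the mean, and kill the centred piece via the same variance identity and Chebyshev.

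For the second claim your route diverges. The paper argues directly with characteristic functions: writing $U_n=\sum_{i}(\sigma_n/n)^{1/2}(B_i-\E B_i)$ it expands the common characteristic function $\psi(\tau)=1-\tfrac{1}{2n}\alpha_2K\tau^2+o(\tau^2/n)$ to second order and takes the $n$th power to recover the Gaussian limit, using only the second-moment information supplied by Lemma \ref{lemma1} and Assumption \ref{ass:sqintegr}. You instead go through Lyapunov with exponent $2+\delta'$, which is perfectly legitimate but costs you a moment beyond two: to bound $\E[|W_{in}|^{2+\delta'}]$ you appeal to Assumption \ref{ass:tail}, yet that assumption is \emph{not} among the hypotheses of Lemma \ref{lemma2} (only Assumptions \ref{ass:identification}, \ref{ass:consistency}, \ref{ass:asymptotic_sm} are listed). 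The paper's characteristic-function route avoids this extra requirement. If you want to keep a Lindeberg--Feller proof under the lemma's stated hypotheses, verify the Lindeberg condition directly rather than Lyapunov: after the change of variable you get
\[
\int k'(\zeta)^2\,\E\!\left[U^2\mathbf 1\{|U|\,|k'(\zeta)|>\varepsilon\sqrt{n\sigma_n}\}\,\big|\,X=t^*+\sigma_n\zeta\right]f_x(t^*+\sigma_n\zeta)\,d\zeta,
\]
which tends to zero by dominated convergence using only the square-integrability in Assumption \ref{ass:sqintegr} and boundedness of $k'$. A minor side remark: no ``mild extension'' of Assumption \ref{ass:extraonk} is needed for $\int|k'|^{2+\delta'}<\infty$, since $|k'|^{2+\delta'}\le(\sup|k'|)^{\delta'}|k'|^{2}$ and $\int k'^{2}=\alpha_2<\infty$.
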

    \begin{proof}
    Note that $$\var \left[ (\sigma_n)^{-h} \hat{S}_n^1(t^*,\sigma_n) \right] = \underbrace{(n \sigma_n^{2h + 1})^{-1}}_{\rightarrow 0} \underbrace{\var \left[ (n\sigma_n)^{\frac{1}{2}} \hat{S}_n^1(t^*,\sigma_n) \right]}_{\rightarrow \alpha_2 K}.$$ So the first result follows from lemma \ref{lemma1} and Chebyshev's inequality.

    For the second result, first note that under the stated assumptions and from lemma \ref{lemma1}, $$ \E \left[ (n\sigma_n)^{\frac{1}{2}} \hat{S}_n^1(t^*,\sigma_n) \right] = \underbrace{(n \sigma_n^{2h +1})^{\frac{1}{2}}}_{\rightarrow \lambda^{\frac{1}{2}}} \underbrace{\E \left[ \sigma_n^{-h} \hat{S}_n^1(t^*,\sigma_n) \right]}_{A} $$ and so the result follows if I show that $$ U_n = (n\sigma_n)^{\frac{1}{2}} \left( \hat{S}_n^1(t^*,\sigma_n) - \E \left[ \hat{S}_n^1(t^*,\sigma_n)\right] \right) \rightarrow^d \mathcal{N}(0,\alpha_2 K).$$
    Note that 
    \begin{align}
         U_n =& (n\sigma_n)^{\frac{1}{2}} \frac{1}{n} \sum_{i=1}^n \left[ \underbrace{\left(\frac{D_i Y_i}{p(X_i)} - \frac{(1-D_i) Y_i}{(1-p(X_i))} \right) k'\left(\frac{X_i - t}{\sigma_n} \right) \frac{1}{\sigma_n}}_{=B} - \E[B] \right] =\\
         =&  \sum_{i=1}^n \left(\frac{\sigma_n}{n}\right)^{\frac{1}{2}}(B - \E[B])
    \end{align}
    and hence $U_n$ has characteristic function $\psi(\tau)^n$, where $$ \psi(\tau) = \E \left[ \exp\left(i \tau \left(\frac{\sigma_n}{n}\right)^{\frac{1}{2}}(B - \E[B])\right) \right]$$ and
    \begin{gather}
        \psi^{'}(\tau) = \E \left[ i \left(\frac{\sigma_n}{n}\right)^{\frac{1}{2}}(B - \E[B]) \exp\left(i \tau \left(\frac{\sigma_n}{n}\right)^{\frac{1}{2}}(B - \E[B])\right) \right] \\
        \psi^{''}(\tau) = \E \left[ - \frac{\sigma_n}{n}(B - \E[B]) \exp\left(i \tau \left(\frac{\sigma_n}{n}\right)^{\frac{1}{2}}(B - \E[B])\right) \right].
    \end{gather}
    Note that $\psi^{'}(0)= 0$ and $\psi^{''}(0)= - \frac{\sigma_n}{n} \var(B) = - \frac{1}{n}(\alpha_2 K + o(1))$, since lemma \ref{lemma1} proved that $\lim_{n \rightarrow \infty} \sigma_n \var(B) = \alpha_2 K$.
    
    A Taylor series expansion of $\psi(\tau)$ about $\tau = 0$ yields: $$\psi(\tau) = \underbrace{\psi(0)}_{=1} + \underbrace{\psi^{'}(0)}_{=0}\tau + \frac{1}{2} \underbrace{\psi^{''}(0)}_{= -\frac{\alpha_2 K}{n}}\tau^2 + o \left(\frac{\tau^2}{n}\right) = 1 - \frac{1}{2n}\alpha_2 K \tau^2 + o \left(\frac{\tau^2}{n}\right)$$
    and hence the characteristic function of $U_n$ has limit:
    \begin{gather}
        \lim_{n \rightarrow \infty} \left [ 1 - \frac{1}{2n}\alpha_2 K \tau^2 + o \left(\frac{\tau^2}{n}\right) \right]^n = \exp \left( -\alpha_2 K \frac{\tau^2}{2} \right).
    \end{gather}
    Since $\exp \left( -\alpha_2 K \frac{\tau^2}{2} \right)$ is the characteristic function of $\mathcal{N}(0,\alpha_2 K)$, the second result of the lemma holds.
    \end{proof}

\subsubsection*{Lemma 3}
    \begin{lem}\label{lemma3}
       Let Assumptions \ref{ass:identification}, \ref{ass:consistency} (with $s=h + 1$ for some $h\geq 2$), and \ref{ass:asymptotic_sm} hold. Let $\eta > 0$ be such that $\varphi_{y,x}(y,x)$ has second derivative uniformly bounded for almost every $X$ if $|X-t^*|<\eta$. For $\theta\in \R$, define $\hat{S}^\theta_n(\theta)$ by
        \begin{gather}
            \hat{S}^\theta_n(\theta) = -(n \sigma_n^2)^{-1} \sum_{i=1}^n \left[\left(\frac{D_i Y_i}{p(X_i)} - \frac{(1-D_i) Y_i}{(1-p(X_i))} \right) k'\left(\frac{X_i - t^*}{\sigma_n} + \theta \right) \right].
        \end{gather}
        Define the sets $\Theta_n(n=1,2,\dots)$ by $\Theta_n = \{\theta: \theta \in \R, \sigma_n |\theta| \leq \frac{\eta}{2}\}$.
        Then
        \begin{gather}
            \plim_{n \rightarrow \infty}  \sup_{\theta\in \Theta_n} |\hat{S}^\theta_n(\theta) - \E[\hat{S}^\theta_n(\theta)]| = 0.
        \end{gather}
        In addition, there are finite numbers $a_1$ and $a_2$ such that for all $\theta \in \Theta_n$
        $$|\E[\hat{S}^\theta_n(\theta)] - H \theta| \leq o(1) + a_1 \sigma_n |\theta| + a_2 \sigma_n \theta^2 $$
        uniformly over $\theta \in \Theta_n$.
    \end{lem}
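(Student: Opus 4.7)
The plan is to split the lemma into its two independent assertions: the uniform stochastic convergence of $\hat{S}^\theta_n(\theta) - \E[\hat{S}^\theta_n(\theta)]$ and the deterministic Taylor-type bound on $\E[\hat{S}^\theta_n(\theta)] - H\theta$. The bias part is a direct calculation using a change of variables and a Taylor expansion of the joint density, in the same spirit as the computation of $\E[\sigma_n^{-h}\hat{S}_n^1(t^*,\sigma_n)]$ in Lemma \ref{lemma1}. The uniform part is an empirical-process argument combining a pointwise exponential inequality with a chaining step over $\Theta_n$.

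For the bias, I would substitute $u=(X-t^*)/\sigma_n+\theta$, so that $X=t^*+\sigma_n(u-\theta)$ and $dX=\sigma_n\,du$. This rewrites
\[
\E[\hat{S}^\theta_n(\theta)] \;=\; -\sigma_n^{-1}\!\int\!\!\int (y_1-y_0)\,k'(u)\,\varphi_{y,x}\bigl(y,t^*+\sigma_n(u-\theta)\bigr)\,dy\,du.
\]
Next I would Taylor-expand $\varphi_{y,x}(y,t^*+\sigma_n(u-\theta))$ to first order around $t^*$ with a second-order remainder, and split the $u$-integral into $|u|\le\eta/(2\sigma_n)$ and its complement. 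On the central region the second derivative of $\varphi_{y,x}$ in its second argument is uniformly bounded (the hypothesis of the lemma), so the remainder is $O(\sigma_n^2(u-\theta)^2)$; on the complement the integrals $\int_{|\sigma_n u|>\eta}|u^i k'(u)|\,du$ vanish at the rate specified by Assumption \ref{ass:extraonk}, absorbing the pieces into the $o(1)$ term. The zeroth-order piece vanishes because $\int(y_1-y_0)\varphi_{y,x}(y,t^*)\,dy = f_x(t^*)\tau(t^*)=0$. The first-order piece decomposes into a term proportional to $\int u\,k'(u)du$, which is zero by the moment condition (Assumption \ref{ass:extraonk}) since $h\ge 2$, and a term proportional to $\theta\int k'(u)du\,\cdot\,\partial_x[f_x(x)\tau(x)]|_{x=t^*}$, which equals $H\theta$ because $\int k'(u)du=1$. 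Collecting the remainder terms and using $|u-\theta|^2\le 2u^2+2\theta^2$ yields constants $\alpha_1,\alpha_2$ with the claimed bound $|\E[\hat{S}^\theta_n(\theta)]-H\theta|\le o(1)+\alpha_1\sigma_n|\theta|+\alpha_2\sigma_n\theta^2$ uniformly on $\Theta_n$.

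For the uniform stochastic convergence, I would first compute the pointwise variance. The same change of variables as above, applied to $\E[(B_i-\E B_i)^2]$ where $B_i$ is the summand in $\hat{S}^\theta_n(\theta)$, shows that $\var[\hat{S}^\theta_n(\theta)]=O(1/(n\sigma_n^3))$ uniformly in $\theta\in\Theta_n$, using Assumption \ref{ass:sqintegr} and boundedness of $\int k'(u)^2du$. For the uniformity step, I would discretize $\Theta_n$ on a grid of spacing $\delta_n=C\sigma_n^2/(\log n)$; since $|\Theta_n|=\eta/\sigma_n$, this yields $O(\log n/\sigma_n^3)$ grid points. On each grid point I would apply a Bernstein-type tail bound (the summands are bounded by $C/\sigma_n^2$ after a preliminary truncation of $Y$ at level $n^{1/4}$, whose contribution is controlled by Assumption \ref{ass:tail} and Markov's inequality), which yields a deviation probability $\exp(-cn\sigma_n^4\epsilon^2/\log n)$ per point; a union bound over the grid multiplied by the pointwise bound tends to zero precisely under the assumption $\log n/(n\sigma_n^4)\to 0$ in Assumption \ref{ass:rateofs}. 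The off-grid increments are controlled by Lipschitz continuity: since $k''$ is uniformly bounded (Assumption \ref{ass:extraonk}), the map $\theta\mapsto\hat{S}^\theta_n(\theta)$ has Lipschitz constant at most $\|k''\|_\infty\,n^{-1}\sum_i|D_iY_i/p(\mathbf{X}_i)-(1-D_i)Y_i/(1-p(\mathbf{X}_i))|/\sigma_n^2$, which is $O_p(1/\sigma_n^2)$ by the law of large numbers, so the contribution over a cell of size $\delta_n$ is negligible by construction.

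The main obstacle I anticipate is the interaction between the possibly unbounded potential outcomes and the exponential tail bound used on the grid. Assumptions \ref{ass:sqintegr} and \ref{ass:tail} give only moment/tail decay rather than a uniform bound on $Y_0,Y_1$, so the Bernstein step requires a truncation argument and a separate check that the truncation error is $o_p(1)$ uniformly over $\Theta_n$; this is where one has to balance the truncation level against $\sigma_n$ to keep the envelope small enough for the exponential inequality to close under Assumption \ref{ass:rateofs}. Once this technicality is handled, the rest of the argument consists of standard change-of-variables and empirical-process manipulations.
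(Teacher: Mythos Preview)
Your proposal is essentially the same strategy as the paper's proof, which in turn follows \cite{horowitz1992smoothed}. For the deterministic bound, both you and the paper change variables $u=(X-t^*)/\sigma_n+\theta$, split the integral into a region where the Taylor expansion of $\varphi_{y,x}(y,\cdot)$ to second order is valid and a tail region handled by the decay conditions on $k'$ in Assumption \ref{ass:extraonk}; the zeroth-order term vanishes because $\tau(t^*)=0$, the first-order term produces $H\theta$ via $\int k'(u)\,du=1$ (the companion $\int u\,k'(u)\,du$ term is zero), and the second-order remainder, expanded as $(u-\theta)^2=u^2-2u\theta+\theta^2$, yields the $o(1)+\alpha_1\sigma_n|\theta|+\alpha_2\sigma_n\theta^2$ bound. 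The only cosmetic difference is that the paper splits on $|X-t^*|\le\eta$ (equivalently $|u-\theta|\le\eta/\sigma_n$) rather than on $|u|\le\eta/(2\sigma_n)$, but on $\Theta_n$ these regions are nested and the argument is the same.

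For the uniform stochastic convergence, the paper uses the identical architecture you describe: discretize $\Theta_n$ into cells of width $\delta\sigma_n^2$ (hence $O(\sigma_n^{-3})$ cells), apply an exponential inequality at the grid points, and control the off-grid increments via the Lipschitz bound coming from boundedness of $k''$. The one substantive difference is that the paper invokes Hoeffding's inequality directly, obtaining a pointwise tail bound $c_1\exp(-c_2 n\sigma_n^4)$ and closing the union bound under Assumption \ref{ass:rateofs}; this implicitly treats the summands as bounded. Your anticipated obstacle is therefore well placed: since the assumptions allow unbounded $Y$, a truncation-plus-Bernstein step (or an analogous device) is indeed what a fully rigorous version requires, and your proposed handling of it is appropriate. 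In short, your plan matches the paper's proof, and where it deviates it does so by being more careful about the unboundedness issue rather than by taking a different route.
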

    \begin{proof}
        To prove the first result, first define 
        \begin{align}
            -g_i(\theta) =& \left(\frac{D_i Y_i}{p(X_i)} - \frac{(1-D_i) Y_i}{(1-p(X_i))} \right) k'\left(\frac{X_i - t^*}{\sigma_n} + \theta \right) - \\
            & \E \left[ \left(\frac{D_i Y_i}{p(X_i)} - \frac{(1-D_i) Y_i}{(1-p(X_i))} \right) k'\left(\frac{X_i - t^*}{\sigma_n} + \theta \right) \right].
        \end{align}
        It is necessary to prove that for any $\varepsilon>0$
        $$
        \lim _{n \rightarrow \infty} \operatorname{Pr}\left[\sup _{\theta \in \Theta_n}\left(n \sigma_n^2\right)^{-1}\left|\sum_{n=1}^N g_{i}(\theta)\right|>\varepsilon\right]=0 .
        $$
        Given any $\delta>0$, divide each set $\Theta_n$ into nonoverlapping subsets $\Theta_{n j}(j=1,2, \ldots)$ such that the distance between any two points in the same subset does not exceed $\delta \sigma_n^2$ and the number $\Gamma_n$ of subsets does not exceed $C \sigma_n^{-3(q-1)}$ for some $C>0$. Let $\left\{\theta_{N i}\right\}$ be a set of vectors such that $\theta_{nj} \in \Theta_{nj}$. Then
        \begin{align}
        \operatorname{Pr}\left[ \sup _{\theta \in \Theta_n}\left(n \sigma_n^2\right)^{-1} \right. & \left. \left|\sum_{n=1}^n g_{i}(\theta)\right|>\varepsilon\right] = \\
        = & \operatorname{Pr}\left[\bigcup_{j=1}^{\Gamma_n} \sup _{\theta \in \Theta_{nj}}\left(n \sigma_n^2\right)^{-1}\left|\sum_{i=1}^n g_{i}(\theta)\right|>\varepsilon\right] \\
        \leqslant & \sum_{j=1}^{\Gamma_n} \operatorname{Pr}\left[\sup _{\theta \in \Theta_{nj}}\left(n \sigma_n^2\right)^{-1}\left|\sum_{i=1}^n g_{i}(\theta)\right|>\varepsilon\right] \\
        \leqslant & \underbrace{\sum_{j=1}^{\Gamma_n} \operatorname{Pr}\left[\left(n \sigma_n^2\right)^{-1}\left|\sum_{i=1}^n g_{i}\left(\theta_{nj}\right)\right|>\varepsilon / 2\right]}_{=B_1} \\
        & + \underbrace{ \sum_{j=1}^{\Gamma_n} \operatorname{Pr}\left[\left(n \sigma_n^2\right)^{-1} \sum_{i=1}^n \sup _{\theta \in \Theta_{nj}}\left|g_{i}(\theta)-g_{i}\left(\theta_{nj}\right)\right|>\varepsilon / 2\right]}_{=B_2},
        \end{align}
        where the last two lines follow from the triangle inequality. By Hoeffding's inequality, there are finite numbers $c_1>0$ and $c_2>0$ such that
        $$
        \operatorname{Pr}\left[\left(n \sigma_n^2\right)^{-1}\left|\sum_{i=1}^n g_{i}\left(\theta_{nj}\right)\right|>\varepsilon / 2\right] \leqslant c_1 \exp \left(-c_2 n \sigma_n^4\right) .
        $$
        Therefore, $B_1$ is bounded by $C c_1 \sigma_n^{-3(q-1)} \exp \left(-c_2 n \sigma_n^4\right)$, which converges to 0 as $n \rightarrow \infty$ by Assumption \ref{ass:rateofs}. In addition, by Assumption \ref{ass:kernel} there is a finite $c_3>0$ such that if $\theta \in \Theta_{nj}$,
        \begin{gather}
        \left| -\left(\frac{D_i Y_i}{p(X_i)} - \frac{(1-D_i) Y_i}{(1-p(X_i))} \right) \left[k'\left(\frac{X_i - t^*}{\sigma_n} + \theta \right) - k'\left(\frac{X_i - t^*}{\sigma_n} + \theta_{nj} \right) \right]\right| \\
        \leqslant c_3\left|\theta-\theta_{nj}\right| \leqslant c_3 \delta \sigma_n^2 .
        \end{gather}
        So
        $$
        \left(n \sigma_n^2\right)^{-1} \sum_{i=1}^n \sup _{\theta \in \Theta_{n j}}\left|g_{i}(\theta)-g_{i}\left(\theta_{nj}\right)\right| \leqslant 2 c_3 \delta .
        $$
        Choose $\delta<\varepsilon / 4 c_3$. Then $B_2$ is 0. This establishes $\plim_{n \rightarrow \infty}  \sup_{\theta\in \Theta_n} |\hat{S}^\theta_n(\theta) - \E[\hat{S}^\theta_n(\theta)]| = 0$.

        To prove the second result, start noting that
        \begin{gather}
            \E[\hat{S}^\theta_n(\theta)] = -\sigma_n^{-2} \E \left[ \left(\frac{D Y}{p(X)} - \frac{(1-D) Y}{(1-p(X))} \right) k'\left(\frac{X - t^*}{\sigma_n} + \theta \right) \right] = I_{n1} + I_{n2}
        \end{gather}
        where
        \begin{gather}
            I_{n1} = -\sigma_n^{-2} \int_{|X-t^*| \leq \eta} \int_y \left(\frac{D Y}{p(X)} - \frac{(1-D) Y}{(1-p(X))} \right) k'\left(\frac{X - t^*}{\sigma_n} + \theta \right) \varphi(y,x) dy dx \\
            I_{n2} = -\sigma_n^{-2} \int_{|X-t^*|>\eta} \int_y \left(\frac{D Y}{p(X)} - \frac{(1-D) Y}{(1-p(X))} \right) k'\left(\frac{X - t^*}{\sigma_n} + \theta \right) \varphi(y,x) dy dx.
        \end{gather}
        First, consider $I_{n2}$ and observe that
        \begin{gather}
            I_{n2} = -\sigma_n^{-2} \int_{|X-t^*|>\eta} k'\left(\frac{X - t^*}{\sigma_n} + \theta \right) \underbrace{\int_y \left(\frac{D Y}{p(X)} - \frac{(1-D) Y}{(1-p(X))} \right)  \varphi(y|x) dy}_{= \E[Y_1 - Y_0 |X]} f_x(x) dx
        \end{gather}
        and since $\E[Y_1 - Y_0 |X]$ is bounded by Assumption \ref{ass:sqintegr}, $$|I_{n2}| \leq \left|C \sigma_n^{-2} \int_{|X-t^*|>\eta} k'\left(\frac{X - t^*}{\sigma_n} + \theta \right) f_x(x) dx \right|.$$ Define $\zeta = \frac{X - t^*}{\sigma_n} + \theta$. Since $\sigma_n |\theta| \leq \frac{\eta}{2}$, when $|X-t^*| > \eta$
        \begin{align}
            |\zeta| =& \left| \frac{X - t^*}{\sigma_n} + \theta \right| = \pm  \left( \frac{X - t^*}{\sigma_n} + \theta \right) \geq \pm  \left( \frac{X - t^*}{\sigma_n}\right) - |\theta | = \left| \frac{X - t^*}{\sigma_n}\right| - |\theta |  \\
            \geq& \left| \frac{X - t^*}{\sigma_n}\right| - \frac{\eta}{2 \sigma_n} > \frac{\eta}{\sigma_n}- \frac{\eta}{2 \sigma_n} = \frac{\eta}{2 \sigma_n}.
        \end{align}
        and so the event $|X-t^*| > \eta$ implies $|\zeta| > \frac{\eta}{2 \sigma_n}$. Then
        \begin{align}
            |I_{n2}| \leq & \left| C \sigma_n^{-2} \int_{|X-t^*|>\eta} k'\left(\frac{X - t^*}{\sigma_n} + \theta \right) f_x(x) dx \right| \\
            =& \left|C \sigma_n^{-1} \int_{|X-t^*|>\eta} k'\left(\zeta \right) f_x(t^* - \theta\sigma_n) d\zeta \right| \\
            \leq & \left| C \underbrace{f_x(t^* - \theta\sigma_n)}_{\rightarrow f_x(t^*)} \underbrace{\sigma_n^{-1}  \int_{|\zeta|>\eta/\sigma_n} k'\left(\zeta \right) d\zeta}_{\rightarrow 0} \right|.
        \end{align}
        The fact that $f_x(t^* - \theta\sigma_n)\rightarrow f_x(t^*)$ bounded by Assumption \ref{ass:smoot} with $s=h+1$ and $\sigma_n^{-1}  \int_{|\zeta|>\eta/\sigma_n} k'\left(\zeta \right) d\zeta \rightarrow 0$ by Assumption \ref{ass:extraonk} implies $$\plim_{n \rightarrow \infty}  \sup_{\theta\in \Theta_n} |I_{n2}| = 0.$$

        Recall that $I_{n1}$ is defined as $$I_{n1} = - \sigma_n^{-2} \int_{|X-t^*| \leq \eta} \int_y \left(\frac{D Y}{p(X)} - \frac{(1-D) Y}{(1-p(X))} \right) k'\left(\frac{X - t^*}{\sigma_n} + \theta \right) \varphi(y,x) dy dx.$$ Consider a Taylor expansion of $\varphi(y,x)$ about $x=t^*$: $$ \varphi(y,x) = \varphi(y,t^*) + \varphi'(y,t^*)(x - t^*) + \frac{1}{2} \varphi''(y,\tilde{t})(x - t^*)^2$$ with $|\tilde{t}-t^*| \leq |x - t^*|$. Write $I_{n1}$ as $J_{n1}+J_{n2}+J_{n3}$ where
        \begin{align}
            J_{n1} =& -\sigma_n^{-2} \int_{|X-t^*| \leq \eta} \int_y \left(\frac{D Y}{p(X)} - \frac{(1-D) Y}{(1-p(X))} \right) k'\left(\frac{X - t^*}{\sigma_n} + \theta \right) \varphi(y,t^*) dy dx \\
            =& -\sigma_n^{-2} \underbrace{\E[Y_1 -Y_0|X=t^*]}_{=0} \int_{|X-t^*| \leq \eta} k'\left(\frac{X - t^*}{\sigma_n} + \theta \right) f_x(t^*) dx = 0 \\
            J_{n2} =& -\sigma_n^{-2} \int_{|X-t^*| \leq \eta} \int_y \left(\frac{D Y}{p(X)} - \frac{(1-D) Y}{(1-p(X))} \right) k'\left(\frac{X - t^*}{\sigma_n} + \theta \right) \varphi'(y,t^*)(x - t^*) dy dx \\
            J_{n3} =& -\sigma_n^{-2} \int_{|X-t^*| \leq \eta} \int_y \left(\frac{D Y}{p(X)} - \frac{(1-D) Y}{(1-p(X))} \right) k'\left(\frac{X - t^*}{\sigma_n} + \theta \right) \frac{1}{2} \varphi''(y,\tilde{t})(x - t^*)^2 dy dx.
        \end{align}
        Consider $J_{n2}$ and the substitution $\zeta = \frac{X - t^*}{\sigma_n} + \theta$:
        \begin{align}
            J_{n2} =& -\int_{|\zeta - \theta| \leq \eta/\sigma_n} \int_y \left(\frac{D Y}{p(X)} - \frac{(1-D) Y}{(1-p(X))} \right) k'\left(\zeta \right) (\zeta - \theta) \varphi'(y,t^*) dy d\zeta \\
            =& -\int_{|\zeta - \theta| \leq \eta/\sigma_n} \int_y \left(\frac{D Y}{p(X)} - \frac{(1-D) Y}{(1-p(X))} \right) k'\left(\zeta \right) \zeta \varphi'(y,t^*) dy d\zeta + \\
            & \int_{|\zeta - \theta| \leq \eta/\sigma_n} \int_y \left(\frac{D Y}{p(X)} - \frac{(1-D) Y}{(1-p(X))} \right) k'\left(\zeta \right)  \theta \varphi'(y,t^*) dy d\zeta \\
            =& -\int_{|\zeta - \theta| \leq \eta/\sigma_n} \zeta k'\left(\zeta \right) d\zeta \underbrace{\int_y \left(\frac{D Y}{p(X)} - \frac{(1-D) Y}{(1-p(X))} \right)  \varphi'(y,t^*) dy}_{=H} + \\
            & \theta \int_{|\zeta - \theta| \leq \eta/\sigma_n} k'\left(\zeta \right) d\zeta \underbrace{\int_y \left(\frac{D Y}{p(X)} - \frac{(1-D) Y}{(1-p(X))} \right)  \varphi'(y,t^*) dy}_{=H}
        \end{align}
        where $H=  f_x(t^*) \left(\frac{\partial \E_P\left[Y_1 - Y_0 | X = t^* \right]}{\partial X} \right)$ is bounded by Assumption \ref{ass:smoot}. Since $\int \zeta k'\left(\zeta \right) d\zeta = 0$ and $\sigma_n|\theta| \leq \frac{\eta}{2}$:
        \begin{gather}
            \left| \int_{|\zeta - \theta| \leq \eta/\sigma_n} \zeta k'\left(\zeta \right) d\zeta \right| = \left| \int_{|\zeta - \theta| > \eta/\sigma_n} \zeta k'\left(\zeta \right) d\zeta \right| \leq \left| \int_{|\zeta | > \eta/2\sigma_n} \zeta k'\left(\zeta \right) d\zeta \right|.
        \end{gather}
        By Assumption \ref{ass:extraonk}, $\left| \int_{|\zeta | > \eta/2\sigma_n} \zeta k'\left(\zeta \right) d\zeta \right|$ converges to 0 uniformly over $\theta \in \Theta_n$. It means that $\int_{|\zeta - \theta| \leq \eta/\sigma_n} \zeta k'\left(\zeta \right) d\zeta$ converges uniformly to 0.
        
        Consider $\theta H \int_{|\zeta - \theta| \leq \eta/\sigma_n} k'\left(\zeta \right) d\zeta$, and note that, since $\int k'\left(\zeta \right) d\zeta = 1$,
        \begin{gather}
            \left|\theta H - \theta H \int_{|\zeta - \theta| \leq \eta/\sigma_n} k'\left(\zeta \right) d\zeta \right| =  \left| \theta H \int_{|\zeta - \theta| > \eta/\sigma_n} k'\left(\zeta \right) d\zeta \right| \leq \\
            |\sigma_n \theta H | \sigma_n^{-1} \int_{|\zeta - \theta| > \eta/\sigma_n} k'\left(\zeta \right) d\zeta \leq
            \frac{\eta}{2} \sigma_n^{-1} \int_{|\zeta - \theta| > \eta/\sigma_n} k'\left(\zeta \right) d\zeta.
        \end{gather}
        The last term is bounded uniformly over $n$ and $\theta \in \Theta_n$ and converges to 0 by Assumption \ref{ass:extraonk}. It means that
        \begin{gather}
            \lim_{n \rightarrow \infty} \left| \sup_{\theta \in \Theta_n} J_{n2} - \theta H \right| = 0.
        \end{gather}
        Finally, consider $J_{n3}$:
        \begin{align}
            |J_{n3}| =& \left|\frac{1}{2}\sigma_n^{-2} \int_{|X-t^*| \leq \eta} \int_y \left(\frac{D Y}{p(X)} - \frac{(1-D) Y}{(1-p(X))} \right) k'\left(\frac{X - t^*}{\sigma_n} + \theta \right)  \varphi''(y,\tilde{t})(x - t^*)^2 dy dx \right| \\
            =& \left| \frac{1}{2}\sigma_n \int_{|\zeta - \theta| \leq \eta/\sigma_n} \int_y \left(\frac{D Y}{p(X)} - \frac{(1-D) Y}{(1-p(X))} \right) k'\left(\zeta \right)  \varphi''(y,\tilde{t})(\zeta - \theta)^2 dy dx \right| \\
            \leq& \underbrace{\left| \frac{1}{2}\sigma_n  \int_{|\zeta - \theta| \leq \eta/\sigma_n} \int_y \left(\frac{D Y}{p(X)} - \frac{(1-D) Y}{(1-p(X))} \right) k'\left(\zeta \right) \zeta^2 \varphi''(y,\tilde{t}) dy dx \right|}_{=o(1)} + \\
            & \sigma_n |\theta| \underbrace{\left|  \int_{|\zeta - \theta| \leq \eta/\sigma_n} \int_y \left(\frac{D Y}{p(X)} - \frac{(1-D) Y}{(1-p(X))} \right) k'\left(\zeta \right) \zeta  \varphi''(y,\tilde{t}) dy dx \right|}_{=a_1} + \\
            & \sigma_n |\theta|^2 \underbrace{\left|  \int_{|\zeta - \theta| \leq \eta/\sigma_n} \int_y \left(\frac{D Y}{p(X)} - \frac{(1-D) Y}{(1-p(X))} \right) k'\left(\zeta \right)  \varphi''(y,\tilde{t}) dy dx \right|}_{=a_2}.
        \end{align}
        Combine results on $I_{n2}$ and $J_{n1}$, $J_{n2}$, and $J_{n3}$ to get
        $$|\E[\hat{S}^\theta_n(\theta)] - H \theta| = |J_{n1} + J_{n2} + J_{n3} + I_{n2} - H \theta| \leq o(1) + a_1 \sigma_n |\theta| + a_2 \sigma_n \theta^2 $$
        uniformly over $\theta \in \Theta_n$, which proves the second part of the lemma.
    \end{proof}

\subsubsection*{Lemma 4}
    \begin{lem}\label{lemma4}
         Let Assumptions \ref{ass:identification}, \ref{ass:consistency} (with $s=h + 1$ for some $h\geq 2$), and \ref{ass:asymptotic_sm} hold. Define $\hat{\theta}_n = \frac{t^* - \hat{t}^s_n}{\sigma_n}$. Then $\plim_{n \rightarrow \infty} \hat{\theta}_n = 0$.
    \end{lem}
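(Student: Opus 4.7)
The plan is to combine the first-order condition for the smooth objective defining $\hat{t}^s_n$ with the uniform approximation of the score established in Lemma \ref{lemma3}, bootstrapping from the ordinary consistency given by Theorem \ref{thm:cons_sm}. Since $k$ is differentiable by Assumption \ref{ass:extraonk} and $\hat{t}^s_n$ is an interior argmax of the smooth map $\hat{S}_n(\cdot,\sigma_n)$ (eventually, by Assumption \ref{ass:uniq} and consistency), it satisfies the FOC $\hat{S}_n^1(\hat{t}^s_n,\sigma_n)=0$. The substitution $t=t^*-\sigma_n\theta$ in the definition of $\hat{S}_n^\theta$ gives the identity $\hat{S}_n^\theta(\theta)=\sigma_n^{-1}\hat{S}_n^1(t^*-\sigma_n\theta,\sigma_n)$, so the FOC evaluated at $\theta=\hat{\theta}_n$ reads simply $\hat{S}_n^\theta(\hat{\theta}_n)=0$.

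By Theorem \ref{thm:cons_sm}, $\hat{t}^s_n\to^{a.s.} t^*$, so with probability approaching one $|\hat{t}^s_n-t^*|\leq \eta/2$, i.e.\ $\hat{\theta}_n\in\Theta_n$. On this event Lemma \ref{lemma3} supplies both
\begin{align*}
\bigl|\hat{S}_n^\theta(\hat{\theta}_n) - \E[\hat{S}_n^\theta(\hat{\theta}_n)]\bigr| &\leq \sup_{\theta\in\Theta_n}\bigl|\hat{S}_n^\theta(\theta)-\E[\hat{S}_n^\theta(\theta)]\bigr| = o_p(1), \\
\bigl|\E[\hat{S}_n^\theta(\hat{\theta}_n)] - H\hat{\theta}_n\bigr| &\leq o(1) + \alpha_1\sigma_n|\hat{\theta}_n| + \alpha_2\sigma_n\hat{\theta}_n^2.
\end{align*}
Combining these with $\hat{S}_n^\theta(\hat{\theta}_n)=0$ via the triangle inequality gives $|H|\,|\hat{\theta}_n|\leq o_p(1)+\alpha_1\sigma_n|\hat{\theta}_n|+\alpha_2\sigma_n\hat{\theta}_n^2$.

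I would then convert the remainders on the right into $o_p(1)+o_p(|\hat{\theta}_n|)$ via a second appeal to consistency: because $\sigma_n|\hat{\theta}_n|=|\hat{t}^s_n-t^*|=o_p(1)$, the first remainder is $o_p(1)$, while $\alpha_2\sigma_n\hat{\theta}_n^2=|\hat{\theta}_n|\cdot\alpha_2\sigma_n|\hat{\theta}_n|=|\hat{\theta}_n|\cdot o_p(1)$. Rearranging yields $|\hat{\theta}_n|\bigl(|H|-o_p(1)\bigr)\leq o_p(1)$. Under Assumption \ref{ass:uniq} the CATE crosses zero transversally at $t^*$, which combined with $f_x(t^*)>0$ from Assumption \ref{ass:smoot} makes $H=f_x(t^*)\,\partial_x\E[Y_1-Y_0\mid X=t^*]$ strictly positive; dividing then gives $\hat{\theta}_n\to^p 0$.

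The step I expect to be the main obstacle is the handling of the quadratic remainder $\alpha_2\sigma_n\hat{\theta}_n^2$. Because $\hat{\theta}_n$ can a priori be as large as $\eta/(2\sigma_n)$ on $\Theta_n$, this term looks potentially of order $\sigma_n^{-1}$ and could dominate the left-hand side; the trick is to factor it as $|\hat{\theta}_n|\cdot(\sigma_n|\hat{\theta}_n|)$ and exploit that $\sigma_n|\hat{\theta}_n|$ is already $o_p(1)$ by Theorem \ref{thm:cons_sm}, so it is absorbed into the left-hand side rather than dominating.
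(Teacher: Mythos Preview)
Your proposal is correct and follows essentially the same route as the paper: use the first-order condition $\hat{S}_n^\theta(\hat{\theta}_n)=0$ (valid once $\hat{t}^s_n$ is interior, by consistency), feed $\hat{\theta}_n\in\Theta_n$ into both conclusions of Lemma \ref{lemma3}, and then exploit $\sigma_n|\hat{\theta}_n|=|\hat{t}^s_n-t^*|=o_p(1)$ to kill the remainders. The only cosmetic difference is that the paper closes with a two-case contradiction argument (finite nonzero limit versus unbounded), whereas you directly factor $\alpha_2\sigma_n\hat{\theta}_n^2=|\hat{\theta}_n|\cdot o_p(1)$ and rearrange to $|\hat{\theta}_n|(|H|-o_p(1))\leq o_p(1)$; your version is slightly cleaner but equivalent.
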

    \begin{proof}
        Consider $\hat{S}^\theta_n(\hat{\theta}_n)$:
        \begin{align}
            \hat{S}^\theta_n(\hat{\theta}_n) =& - (n \sigma_n^2)^{-1} \sum_{i=1}^n \left[\left(\frac{D_i Y_i}{p(X_i)} - \frac{(1-D_i) Y_i}{(1-p(X_i))} \right) k'\left(\frac{X_i - t^*}{\sigma_n} + \hat{\theta}_n \right) \right] \\
            =& -(n \sigma_n^2)^{-1} \sum_{i=1}^n \left[\left(\frac{D_i Y_i}{p(X_i)} - \frac{(1-D_i) Y_i}{(1-p(X_i))} \right) k'\left(\frac{X_i - \hat{t}^s_n}{\sigma_n} \right) \right].
        \end{align}
        By Theorem \ref{thm:cons_sm}, $\hat{t}^s_n \rightarrow^{a.s.} t^*$. With probability approaching 1, then, $\hat{t}^s_n$ is an interior point of $\mathcal{T}$. It means that, with probability approaching 1, $\hat{S}^\theta_n(\hat{\theta}_n) = \hat{S}^1_n(\hat{t}^s_n,\sigma_n) = 0$. Hence lemma \ref{lemma3} gives
        \begin{gather}
            |H\hat{\theta}_n| \leq o(1) + a_1 \sigma_n |\hat{\theta}_n| + a_2 \sigma_n \hat{\theta}_n^2
        \end{gather}
        with $H \neq 0$ by Assumptions \ref{ass:uniq} and \ref{ass:smoot}.
        
        I will hence prove $\plim_{n \rightarrow \infty} \hat{\theta}_n = 0$ by contradiction. First, assume that $\hat{\theta}_n$ has finite limit different from 0. The left-hand side of the previous inequality would be positive, while the right-hand side converges to 0. This contradicts the inequality. Then assume the limit is unbounded. By Theorem \ref{thm:cons_sm}, $\plim_{n \rightarrow \infty} \sigma_n \hat{\theta}_n = 0$. This gives the contradiction
        \begin{gather}
            \underbrace{\frac{|H\hat{\theta}_n|}{|\hat{\theta}_n|}}_{=|H|>0} \leq \underbrace{\frac{ o(1)}{|\hat{\theta}_n|}}_{\rightarrow 0} + \underbrace{\frac{a_1 \sigma_n |\hat{\theta}_n|}{|\hat{\theta}_n|}}_{=a_1 \sigma_n\rightarrow 0} + \underbrace{\frac{a_2 \sigma_n \hat{\theta}_n^2}{|\hat{\theta}_n|}}_{=a_2 \sigma_n|\hat{\theta}_n|\rightarrow 0}
        \end{gather}
        and proves that $\plim_{n \rightarrow \infty} \hat{\theta}_n = 0$.
    \end{proof}
    
\subsubsection*{Lemma 5}
    \begin{lem}\label{lemma5}
        Let Assumptions \ref{ass:identification}, \ref{ass:consistency} (with $s=h + 1$ for some $h\geq 2$), and \ref{ass:asymptotic_sm} hold. Consider a sequence $t_n$ such that $\frac{t_n - t^*}{\sigma_n} \rightarrow 0$. Then
        \begin{gather}
            \plim_{n \rightarrow \infty} \hat{S}^2_n(t_n,\sigma_n) = -H.
        \end{gather}
    \end{lem}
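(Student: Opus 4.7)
The strategy is the standard kernel-smoothed M-estimator argument: I would decompose $\hat{S}_n^2(t_n,\sigma_n)$ into its expectation plus a mean-zero stochastic deviation, and handle each piece separately.

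For the bias, unconfoundedness and the propensity weights reduce the expectation to
\[
\E[\hat{S}_n^2(t_n,\sigma_n)] \;=\; \frac{1}{\sigma_n^2}\,\E\!\left[(Y_1-Y_0)\,k''\!\left(\tfrac{X-t_n}{\sigma_n}\right)\right] \;=\; \frac{1}{\sigma_n}\int k''(\zeta)\,g(t_n+\zeta\sigma_n)\,d\zeta,
\]
where $g(x)=f_x(x)\tau(x)$ and the second equality follows from the change of variables $\zeta=(X-t_n)/\sigma_n$. Since $\tau(t^*)=0$, I have $g(t^*)=0$ and $g'(t^*)=f_x(t^*)\tau'(t^*)=H$, so Assumption \ref{ass:smoot} with $s=h+1\geq 3$ permits a second-order Taylor expansion of $g$ about $t^*$:
\[
g(t_n+\zeta\sigma_n)\;=\;H\bigl[(t_n-t^*)+\zeta\sigma_n\bigr]+\tfrac{1}{2}\,g''(\tilde t)\bigl[(t_n-t^*)+\zeta\sigma_n\bigr]^{2}.
\]
The term $\frac{1}{\sigma_n}H(t_n-t^*)\int k''(\zeta)\,d\zeta$ vanishes because $\int k''(\zeta)\,d\zeta=0$ (as $k'(\pm\infty)=0$), while the term $H\int\zeta\,k''(\zeta)\,d\zeta$ contributes the constant $-H$ by the moment identity in Assumption \ref{ass:extraonk} (equivalently, by integration by parts). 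The three remainders arising from the squared bracket are of orders $(t_n-t^*)^2/\sigma_n$, $(t_n-t^*)$, and $\sigma_n$; each vanishes under the hypothesis $(t_n-t^*)/\sigma_n\to 0$ — which also kills the first of these via the factorization $(t_n-t^*)^2/\sigma_n=(t_n-t^*)\cdot(t_n-t^*)/\sigma_n$ — combined with the integrability conditions $\int|\zeta^i k''(\zeta)|\,d\zeta<\infty$ for $i=0,1,2$ supplied by Assumption \ref{ass:extraonk} and the local boundedness of $g''$ near $t^*$.

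For the stochastic deviation I would apply Chebyshev's inequality. Since the summands are i.i.d.,
\[
\var\!\bigl(\hat{S}_n^2(t_n,\sigma_n)\bigr)\;\leq\;\frac{1}{n\sigma_n^4}\,\E\!\left[\!\left(\tfrac{DY}{p(\mathbf{X})}-\tfrac{(1-D)Y}{1-p(\mathbf{X})}\right)^{\!2}\!k''\!\left(\tfrac{X-t_n}{\sigma_n}\right)^{\!2}\right].
\]
The same change of variables introduces a Jacobian $\sigma_n$; Assumptions \ref{ass:overlap} and \ref{ass:sqintegr} bound the conditional second moment of the propensity-weighted outcome near $t^*$, and $\int k''(\zeta)^2\,d\zeta<\infty$ (Assumption \ref{ass:extraonk}) bounds the $\zeta$-integral. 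The variance is therefore $O\bigl(1/(n\sigma_n^3)\bigr)$. Assumption \ref{ass:rateofs} yields $n\sigma_n^4/\log n\to\infty$, hence a fortiori $n\sigma_n^3\to\infty$, and Chebyshev delivers $\hat{S}_n^2(t_n,\sigma_n)-\E[\hat{S}_n^2(t_n,\sigma_n)]\to^p 0$, completing the proof.

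The main technical obstacle is the $1/\sigma_n$ factor sitting outside the bias integral, which threatens to magnify the Taylor remainder $(t_n-t^*)^2/\sigma_n$ into a non-vanishing drift. This is precisely what the hypothesis $(t_n-t^*)/\sigma_n\to 0$ is designed to defuse, and it explains why the lemma is stated for sequences that approach $t^*$ strictly faster than the bandwidth $\sigma_n$ — a regime consistent with $t_n=\hat t^s_n$ whose rate is governed by Lemma \ref{lemma4}.
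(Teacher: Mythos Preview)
Your argument is correct and follows essentially the same route as the paper: establish $\E[\hat S^2_n(t_n,\sigma_n)]\to -H$ via a Taylor expansion around $t^*$ together with the kernel moment identities $\int k''=0$ and $\int\zeta k''=-1$, then kill the stochastic fluctuation with Chebyshev using $\var=O\bigl((n\sigma_n^3)^{-1}\bigr)$ and $n\sigma_n^3\to\infty$. The only differences are cosmetic: you integrate out $y$ upfront and work with $g(x)=f_x(x)\tau(x)$, whereas the paper Taylor-expands the joint density $\varphi(y,x)$ directly, and the paper explicitly splits off a tail region $|X-t^*|>\eta$ (its term $I_{n4}$) so that only the \emph{local} bound on the second derivative is needed---a routine step you allude to (``local boundedness of $g''$'') but do not spell out.
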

    \begin{proof}
        To prove the lemma it is sufficient to show that $\E[\hat{S}^2_n(t_n,\sigma_n)] \rightarrow H$ and $\var(\hat{S}^2_n(t_n,\sigma_n)) \rightarrow 0$. Recall that $$\hat{S}_n^2(t,\sigma_n) = \frac{1}{\sigma_n^2} \frac{1}{n} \sum_{i=1}^n \left[\left(\frac{D_i Y_i}{p(X_i)} - \frac{(1-D_i) Y_i}{(1-p(X_i))} \right) k''\left(\frac{X_i - t}{\sigma_n} \right) \right]$$ and hence 
        \begin{align}
            \E[\hat{S}^2_n(t_n,\sigma_n)] =& \frac{1}{\sigma_n^2} \E\left[ \left(\frac{D Y}{p(X)} - \frac{(1-D) Y}{1-p(X)} \right) k''\left(\frac{X - t_n}{\sigma_n} \right) \right] \\
            =& \frac{1}{\sigma_n^2} \int_x \int_y \left(\frac{D Y}{p(X)} - \frac{(1-D) Y}{1-p(X)} \right) k''\left(\frac{X - t_n}{\sigma_n} \right) \varphi(y,x) dy dx.
        \end{align}
        Consider a Taylor expansion of $\varphi(y,x)$ about $x=t^*$: $$ \varphi(y,x) = \varphi(y,t^*) + \varphi'(y,t^*)(x - t^*) + \frac{1}{2} \varphi''(y,\tilde{t})(x - t^*)^2$$ with $|\tilde{t}-t^*| \leq |x - t^*|$. Let $\eta > 0$ be such that $\varphi_{y,x}(y,x)$ has second derivative uniformly bounded for almost every $X$ if $|X-t^*|<\eta$, and write $\E[\hat{S}^2_n(t_n,\sigma_n)]$ as $I_{n1}+I_{n2}+I_{n3}+I_{n4}$, where:
        \begin{align}
            I_{n1} =& \sigma_n^{-2} \int_{|X-t^*| \leq \eta} \int_y \left(\frac{D Y}{p(X)} - \frac{(1-D) Y}{(1-p(X))} \right) k''\left(\frac{X - t_n}{\sigma_n} \right) \varphi(y,t^*) dy dx \\
            =& \sigma_n^{-2} \underbrace{\E[Y_1 -Y_0|X=t^*]}_{=0} \int_{|X-t^*| \leq \eta} k''\left(\frac{X - t_n}{\sigma_n} \right) f_x(t^*) dx = 0 \\
            I_{n2} =& \sigma_n^{-2} \int_{|X-t^*| \leq \eta} \int_y \left(\frac{D Y}{p(X)} - \frac{(1-D) Y}{(1-p(X))} \right) k''\left(\frac{X - t_n}{\sigma_n} \right) \varphi'(y,t^*)(x - t^*) dy dx \\
            I_{n3} =& \sigma_n^{-2} \int_{|X-t^*| \leq \eta} \int_y \left(\frac{D Y}{p(X)} - \frac{(1-D) Y}{(1-p(X))} \right) k''\left(\frac{X - t_n}{\sigma_n} \right) \frac{1}{2} \varphi''(y,\tilde{t})(x - t^*)^2 dy dx \\
            I_{n4} =& \frac{1}{\sigma_n^2} \int_{|X-t^*| > \eta} \int_y \left(\frac{D Y}{p(X)} - \frac{(1-D) Y}{1-p(X)} \right) k''\left(\frac{X - t_n}{\sigma_n} \right) \varphi(y,x) dy dx.
        \end{align}
        Consider the substitution $\zeta = \frac{X - t^*}{\sigma_n} + \theta_n = \frac{X - t_n}{\sigma_n}$:
        \begin{align}
            I_{n2} =& \int_{|\zeta - \theta_n| \leq \eta/\sigma_n} \int_y \left(\frac{D Y}{p(X)} - \frac{(1-D) Y}{(1-p(X))} \right)
            k''\left(\zeta \right) \varphi'(y,t^*)(\zeta - \theta_n) dy d\zeta \\
            =& \int_{|\zeta - \theta_n| \leq \eta/\sigma_n} k''\left(\zeta \right) (\zeta - \theta_n) d\zeta \underbrace{\int_y \left(\frac{D Y}{p(X)} - \frac{(1-D) Y}{(1-p(X))} \right) \varphi'(y,t^*) dy}_{=H} \\
            =& H \int_{|\zeta - \theta_n| \leq \eta/\sigma_n} \zeta k''\left(\zeta \right)d\zeta - \theta_n H \int_{|\zeta - \theta_n| \leq \eta/\sigma_n} k''\left(\zeta \right) d\zeta\\
            =& H \left( \int \zeta k''\left(\zeta \right)d\zeta - \int_{|\zeta - \theta_n| > \eta/\sigma_n} \zeta k''\left(\zeta \right)d\zeta\right) - \theta_n H \int_{|\zeta - \theta_n| \leq \eta/\sigma_n} k''\left(\zeta \right) d\zeta
        \end{align}
        Under Assumption \ref{ass:extraonk}, $\int \zeta k''\left(\zeta \right)d\zeta = -1$, $\int_{|\zeta - \theta_n| > \eta/\sigma_n} \zeta k''\left(\zeta \right)d\zeta \rightarrow^p 0$ and $\int_{|\zeta - \theta_n| \leq \eta/\sigma_n} k''\left(\zeta \right) d\zeta$ is bounded. Since $\theta_n \rightarrow^p 0$, $I_{n2}\rightarrow^p -H$.

        Consider $I_{n3}$ and the substitution $\zeta = \frac{X - t^*}{\sigma_n} + \theta_n = \frac{X - t_n}{\sigma_n}$:
        \begin{align}
            I_{n3} =& \frac{1}{2} \sigma_n \int_y \left(\frac{D Y}{p(X)} - \frac{(1-D) Y}{(1-p(X))} \right) \varphi''(y,\tilde{t}) dy \int_{|\zeta - \theta_n| \leq \eta/\sigma_n} k''\left(\zeta \right)(\zeta - \theta_n)^2 d\zeta.
        \end{align}
        Integrals are bounded by Assumptions \ref{ass:smoot} with $s=h+1$ and \ref{ass:extraonk}, and hence $I_{n3}\rightarrow 0$.

        Finally, consider $I_{n4}$ and the substitution $\zeta = \frac{X - t^*}{\sigma_n} + \theta_n = \frac{X - t_n}{\sigma_n}$:
        \begin{align}
            I_{n4} =& \frac{1}{\sigma_n} \int_{|\zeta - \theta_n| > \eta/\sigma_n} \int_y \left(\frac{D Y}{p(X)} - \frac{(1-D) Y}{1-p(X)} \right) k''\left(\zeta \right) \varphi(y,t^* + \sigma_n(\zeta - \theta_n)) dy d\zeta \\
            \rightarrow^p& \int_y \left(\frac{D Y}{p(X)} - \frac{(1-D) Y}{1-p(X)} \right) \varphi(y,t^* ) dy \frac{1}{\sigma_n} \int_{|\zeta - \theta_n| > \eta/\sigma_n} k''\left(\zeta \right) d\zeta.
        \end{align}
        $\int_y \left(\frac{D Y}{p(X)} - \frac{(1-D) Y}{1-p(X)} \right) \varphi(y,t^* ) dy$ is bounded by Assumptions \ref{ass:sqintegr} and $\frac{1}{\sigma_n} \int_{|\zeta - \theta_n| > \eta/\sigma_n} k''\left(\zeta \right) d\zeta \rightarrow^p 0$ by Assumptions \ref{ass:extraonk}, and hence $I_{n4} \rightarrow^p 0$.

        Combine results on $I_{n1}$, $I_{n2}$, $I_{n3}$, and $I_{n4}$ to conclude $\E[\hat{S}^2_n(t_n,\sigma_n)] \rightarrow H$. Consider now the variance:
        \begin{align}
            \var(\hat{S}^2_n(t_n,\sigma_n)) =& \var\left( \frac{1}{\sigma_n^2} \frac{1}{n} \sum_{i=1}^n \left[\left(\frac{D_i Y_i}{p(X_i)} - \frac{(1-D_i) Y_i}{(1-p(X_i))} \right) k''\left(\frac{X_i - t_n}{\sigma_n} \right) \right] \right) \\
            =& \frac{1}{n \sigma_n^4} \var\left( \left(\frac{D Y}{p(X)} - \frac{(1-D) Y}{(1-p(X))} \right) k''\left(\frac{X - t_n}{\sigma_n} \right) \right) \\
            =& \frac{1}{n \sigma_n^4} \E\left[ \left(\frac{D Y}{p(X)} - \frac{(1-D) Y}{(1-p(X))} \right)^2 k''\left(\frac{X - t_n}{\sigma_n} \right)^2 \right] -\\
            & \frac{1}{n} \left(\underbrace{\frac{1}{\sigma_n}\E\left[ \left(\frac{D Y}{p(X)} - \frac{(1-D) Y}{(1-p(X))} \right) k''\left(\frac{X - t_n}{\sigma_n} \right) \right]}_{\rightarrow^p 0} \right)^2
        \end{align}
        and the substitution $\zeta = \frac{X - t^*}{\sigma_n} + \theta_n = \frac{X - t_n}{\sigma_n}$:
        \begin{gather}
            \frac{1}{n \sigma_n^4} \E\left[ \left(\frac{D Y}{p(X)} - \frac{(1-D) Y}{(1-p(X))} \right)^2 k''\left(\frac{X - t_n}{\sigma_n} \right)^2 \right] = \\
            \frac{1}{n \sigma_n^3} \int_\zeta \int_y \left(\frac{D Y}{p(X)} - \frac{(1-D) Y}{(1-p(X))} \right)^2 k''\left(\zeta \right)^2 \varphi(y,t^* + \sigma_n(\zeta - \theta_n)) dy d\zeta \rightarrow^p \\
            \frac{1}{n \sigma_n^3} \int_y \left(\frac{D Y}{p(X)} - \frac{(1-D) Y}{(1-p(X))} \right)^2 \varphi(y,t^*) dy \int_\zeta k''\left(\zeta \right)^2 d\zeta.
        \end{gather}
        $\int_y \left(\frac{D Y}{p(X)} - \frac{(1-D) Y}{1-p(X)} \right) \varphi(y,t^* ) dy$ and $\int_\zeta k''\left(\zeta \right)^2 d\zeta$ are bounded by Assumptions \ref{ass:sqintegr} and \ref{ass:extraonk}. Since by Assumption \ref{ass:rateofs} $n \sigma_n^3 \rightarrow \infty$, conclude that $\var(\hat{S}^2_n(t_n,\sigma_n)) \rightarrow 0$.
    \end{proof}

\subsection*{Theorem \ref{thm:asynormsmooth}}
\setcounter{thm}{3}
\begin{thm} 
    Consider the SWM policy $\hat{t}^s_n$ defined in equation \eqref{eq:estimator_sm} and the optimal policy $t^*$ defined in equation \eqref{eq:estimand}. Under Assumptions \ref{ass:identification}, \ref{ass:consistency} (with $s=h + 1$ for some $h\geq 2$), \ref{ass:nonflattau}, and \ref{ass:asymptotic_sm}, as $n \rightarrow \infty$:
    \begin{enumerate}
        \item if $n \sigma_n^{2h + 1} \rightarrow \infty$, $$\sigma_n^{-h}(\hat{t}^s_n - t^*) \rightarrow^p H^{-1}A;$$
        \item  if $n \sigma_n^{2h + 1} \rightarrow \lambda < \infty$, $$(n\sigma_n)^{\frac{1}{2}}(\hat{t}^s_n - t^*) \rightarrow^d \mathcal{N}(\lambda^{\frac{1}{2}}H^{-1}A, H^{-2}\alpha_2 K);$$
    \end{enumerate}
    where $A$, $\alpha_1$, and $\alpha_2$ are:
    \begin{align}
    A =& -\frac{1}{h!} \alpha_1 \int_y \left(Y_1 - Y_0 \right) \varphi^{h}_x(y,t^*) d y \\
    \alpha_1 =& \int_\zeta \zeta^h k'\left(\zeta \right) d \zeta \\
    \alpha_2 =& \int_\zeta k'\left(\zeta \right)^2 d \zeta.
    \end{align}
\end{thm}
\begin{proof}
    Consider a Taylor expansion of $\hat{S}^1_n(t,\sigma_n)$ about $t=t^*$:
    \begin{gather}
        \hat{S}^1_n(\hat{t}^s_n,\sigma_n) = \hat{S}^1_n(t^*,\sigma_n) + \hat{S}^2_n(\tilde{t},\sigma_n) (\hat{t}^s_n - t^*)
    \end{gather}
    with $|\tilde{t}-t^*| \leq |\hat{t}^s_n - t^*|$. By Theorem \ref{thm:cons_sm}, $\hat{t}^s_n \rightarrow^{a.s.} t^*$, and hence with probability approaching 1 $\hat{t}^s_n$ is an interior point of $\mathcal{T}$. It means that, with probability approaching 1, $\hat{S}^1_n(\hat{t}^s_n,\sigma_n) = 0$.

    To prove the first result of the theorem, note that with probability approaching one as $n \rightarrow \infty$
    \begin{gather}
        \sigma_n^{-h} \hat{S}^1_n(t^*,\sigma_n) + \sigma_n^{-h} \hat{S}^2_n(\tilde{t},\sigma_n) (\hat{t}^s_n - t^*) = 0.
    \end{gather}
    By lemmas \ref{lemma4} and \ref{lemma5}, $\plim_{n \rightarrow \infty} \hat{S}^2_n(t_n,\sigma_n) =- H$, and $H \neq 0$ by Assumptions \ref{ass:uniq} and \ref{ass:smoot}. Hence
    \begin{gather}
        \sigma_n^{-h}(\hat{t}^s_n - t^*) = H^{-1} \sigma_n^{-h} \hat{S}^1_n(t^*,\sigma_n) + o_p(1)
    \end{gather}
    and since $\sigma_n^{-h} \hat{S}^1_n(t^*,\sigma_n) \rightarrow^p A$ by lemma \ref{lemma2}, $$\sigma_n^{-h}(\hat{t}^s_n - t^*) \rightarrow^p H^{-1}A.$$

    Analogously, to prove the second result note that
    \begin{gather}
        (n\sigma_n)^{\frac{1}{2}} \hat{S}^1_n(t^*,\sigma_n) + (n\sigma_n)^{\frac{1}{2}} \hat{S}^2_n(\tilde{t},\sigma_n) (\hat{t}^s_n - t^*) = 0.
    \end{gather}
    with probability approaching 1, and hence
    \begin{gather}
        (n\sigma_n)^{\frac{1}{2}} (\hat{t}^s_n - t^*) = H^{-1} (n\sigma_n)^{\frac{1}{2}} \hat{S}^1_n(t^*,\sigma_n).
    \end{gather}
    Since $(n\sigma_n)^{\frac{1}{2}} \hat{S}^1_n(t^*,\sigma_n) \rightarrow^d \mathcal{N}(\lambda^{\frac{1}{2}}A,\alpha_2 K)$ by lemma \ref{lemma2}, $$(n\sigma_n)^{\frac{1}{2}}(\hat{t}^s_n - t^*) \rightarrow^d \mathcal{N}(\lambda^{\frac{1}{2}}H^{-1}A, H^{-2}\alpha_2 K).$$

    For the third result, first compute the asymptotic bias and the asymptotic variance of $\hat{t}^s_n - t^*$:
    \begin{align}
        \E[\hat{t}^s_n - t^*] =& -\lambda^{\frac{1}{2}}\frac{A}{H}(n\sigma_n)^{-\frac{1}{2}} =-\lambda^{\frac{1}{2}}\frac{A}{H} \lambda^{-\frac{1}{2(2h+1)}} n^{-\frac{h}{2h+1}} \\
        \var(\hat{t}^s_n - t^*) =& \frac{\alpha_2 K}{H^2} \lambda^{-\frac{1}{2h+1}} n^{-\frac{2h}{2h+1}} 
    \end{align}
    and then the MSE:
    \begin{gather}
        MSE = \frac{1}{H^2} n^{-\frac{2h}{2h+1}} \left[ \alpha_2 K \lambda^{-\frac{1}{2h+1}} + A^2 \lambda^{\frac{2h}{2h+1}} \right]
    \end{gather}
    which is minimize setting $$\lambda = \lambda^* = \frac{\alpha_2 K}{2hA^2 }.$$
\end{proof}

\subsection*{Corollary \ref{cor:smoot_regret}}
\begin{corollary} \label{cor:smoot_regret}
    Asymptotic distribution of regret $\mathcal{R}(\hat{t}^s_n)$ is:
    \begin{gather*}
        n \sigma_n \mathcal{R}(\hat{t}^s_n) \rightarrow^d  \frac{1}{2} \frac{\alpha_2 K}{H} \chi^2\left(1,\frac{\lambda A^2}{\alpha_2 K}\right)
    \end{gather*}
    where $\chi^2\left(1,\frac{\lambda A^2}{\alpha_2 K}\right)$ is a non-centered chi-squared distribution with 1 degree of freedom and non-central parameter $\frac{\lambda A^2}{\alpha_2 K}$.
    The expected value of the asymptotic distribution is:
    \begin{align} \label{eq:expregretswm}
        \frac{1}{2} \frac{\alpha_2 K}{H} \left(1 + \frac{\lambda A^2}{\alpha_2 K} \right) =  \frac{\alpha_2}{2} \frac{ K}{H} + \frac{1}{2} \frac{\lambda A^2}{H}.
    \end{align}
    Let $\sigma_n = (\lambda/n)^{1/(2h +1)}$ with $\lambda \in (0, \infty)$. The expectation of the asymptotic regret is minimized by setting $\lambda = \lambda^* = \frac{\alpha_2 K}{2hA^2 }$: in this case the expectation of the asymptotic distribution scaled by $n^\frac{2h}{2h+1}$ is $ A^{\frac{2}{2h+1}} K^{\frac{2h}{2h+1}} H^{-1} C^s$, where $C^s = \frac{2h+1}{2} \left( \frac{\alpha_2}{2h} \right)^\frac{2h}{2h+1}$ is a constant not dependent on $P$.
\end{corollary}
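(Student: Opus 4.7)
My plan is to combine equation \eqref{eq:asympt_regret_d}, which represents the regret as the quadratic Taylor remainder in $\hat{t}^s_n - t^*$, with the asymptotic distribution of $\hat{t}^s_n$ established in Theorem \ref{thm:asynormsmooth}, mirroring the structure of the proof of Corollary \ref{cor:ewm_regret}. Consistency of $\hat{t}^s_n$ (Theorem \ref{thm:cons_sm}) together with continuity of $W''$ around $t^*$ under Assumption \ref{ass:smoot} with $s\geq 2$ gives $W''(\tilde t)\rightarrow^p H$ by the continuous mapping theorem, so multiplying equation \eqref{eq:asympt_regret_d} by $n\sigma_n$ yields
\begin{gather*}
n\sigma_n\,\mathcal{R}(\hat{t}^s_n) \;=\; \tfrac{1}{2} H\,\left[(n\sigma_n)^{1/2}(\hat{t}^s_n - t^*)\right]^2 + o_p(1).
\end{gather*}

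The limiting distribution then follows by continuous mapping and Slutsky's lemma applied to Theorem \ref{thm:asynormsmooth} part 2. Since $(n\sigma_n)^{1/2}(\hat{t}^s_n - t^*)\rightarrow^d \mathcal{N}(\lambda^{1/2}H^{-1}A,\,H^{-2}\alpha_2 K)$, squaring yields $H^{-2}\alpha_2 K$ times a non-central chi-squared random variable with one degree of freedom whose non-centrality parameter is the squared standardized mean $(\lambda^{1/2}H^{-1}A)^2 / (H^{-2}\alpha_2 K) = \lambda A^2/(\alpha_2 K)$. Multiplying by $H/2$ delivers the stated limit $\tfrac{1}{2}(\alpha_2 K/H)\chi^2(1,\lambda A^2/(\alpha_2 K))$. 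The expectation formula is then immediate from $\E[\chi^2(1,\delta)] = 1+\delta$.

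For the optimal-bandwidth claim, I would substitute $\sigma_n = (\lambda/n)^{1/(2h+1)}$, so that $n\sigma_n = n^{2h/(2h+1)}\lambda^{1/(2h+1)}$ and equation \eqref{eq:expregretswm} rescales to
\begin{gather*}
n^{2h/(2h+1)}\,\E[\mathcal{R}(\hat{t}^s_n)] \;\longrightarrow\; \frac{\alpha_2 K + \lambda A^2}{2H\,\lambda^{1/(2h+1)}}.
\end{gather*}
Differentiating the right-hand side in $\lambda$ and setting the derivative to zero gives $2hA^2\lambda^{-1/(2h+1)} = \alpha_2 K\,\lambda^{-(2h+2)/(2h+1)}$, whose unique interior solution is $\lambda^* = \alpha_2 K/(2hA^2)$. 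At this minimizer one has $\alpha_2 K + \lambda^*A^2 = \alpha_2 K\,(2h+1)/(2h)$ and $(\lambda^*)^{-1/(2h+1)} = (2hA^2/(\alpha_2 K))^{1/(2h+1)}$; substituting and collecting powers of $K$, $A$, $H$ produces $A^{2/(2h+1)}K^{2h/(2h+1)}H^{-1}C^s$ with $C^s = \frac{2h+1}{2}(\alpha_2/(2h))^{2h/(2h+1)}$.

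The main obstacle is not probabilistic — Theorem \ref{thm:asynormsmooth} does all of the hard analytic work — but purely algebraic: correctly identifying the non-central chi-squared parameter through the ratio of squared Gaussian mean to variance, and then keeping the exponents $1/(2h+1)$ and $2h/(2h+1)$ straight when simplifying the value at $\lambda^*$ into the constant $C^s$.
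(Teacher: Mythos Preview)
Your proposal is correct and follows essentially the same route as the paper: equation \eqref{eq:asympt_regret_d} plus $W''(\tilde t)\to^p H$ via continuous mapping, then Slutsky combined with Theorem \ref{thm:asynormsmooth} to obtain the non-central $\chi^2$ limit, followed by the same optimization in $\lambda$. The only cosmetic difference is that you differentiate $(\alpha_2 K+\lambda A^2)/(2H\lambda^{1/(2h+1)})$ directly while the paper minimizes the equivalent expression $\alpha_2 K\lambda^{-1/(2h+1)}+A^2\lambda^{2h/(2h+1)}$; both yield $\lambda^*=\alpha_2 K/(2hA^2)$ and the stated constant $C^s$.
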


\begin{proof}
    Result in equation \eqref{eq:asympt_regret_d} for $\hat{t}^s_n$ implies
    \begin{gather*}
        n\sigma_n \mathcal{R}(\hat{t}^s_n) \rightarrow^d \frac{1}{2} W''(\tilde{t}) \left( (n\sigma_n)^{\frac{1}{2}} \left(\hat{t}^s_n-t^*\right) \right)^2.
    \end{gather*}
    where $|\tilde{t}-t^*| \leq |\hat{t}^s_n - t^*|$. By continuous mapping theorem
    \begin{gather*}
        W''(\tilde{t}) \rightarrow^p W''(t^*) = H
    \end{gather*}
    and hence by Slutsky's theorem
    \begin{gather*}
    n\sigma_n \mathcal{R}(\hat{t}^s_n) \rightarrow^d \frac{1}{2} H \left(\mathcal{N}(\lambda^{\frac{1}{2}}H^{-1}A, H^{-2}\alpha_2 K) \right)^2 =^d \frac{1}{2} \frac{\alpha_2 K}{H} \left(\mathcal{N}(\lambda^{\frac{1}{2}}A/\sqrt{D}, 1) \right)^2.
     \end{gather*}
     By definition, $\chi^2\left(1,\frac{\lambda A^2}{\alpha_2 K}\right) =^d \left(\mathcal{N}(\lambda^{\frac{1}{2}}A/\sqrt{D}, 1) \right)^2$, and $\E\left[\chi^2\left(1,\frac{\lambda A^2}{\alpha_2 K}\right)\right] = \left(1 + \frac{\lambda A^2}{\alpha_2 K} \right)$.

     When $\sigma_n = (\lambda/n)^{1/(2h +1)}$, the expectation of asymptotic regret is minimized by
     \begin{gather*}
         \lambda^* = \argmin_{\lambda} (n \sigma_n)^{-1} \left(\frac{\alpha_2}{2} \frac{ K}{H} + \frac{1}{2} \frac{\lambda A^2}{H} \right) = \argmin_{\lambda} \alpha_2 K \lambda^{-\frac{1}{2h+1}} + A^2 \lambda^{\frac{2h}{2h+1}}
     \end{gather*}
     which is solved by $\lambda^* = \frac{\alpha_2 K}{2hA^2 }$.

     By substituting $\sigma_n$ by $(\lambda/n)^{1/(2h +1)}$, and $\lambda$ by $ \frac{\alpha_2 K}{2hA^2 }$, the expectation of the asymptotic regret multiplied by $(n \sigma_n)^{-1}$ is
     \begin{gather*}
         (n \sigma_n)^{-1} \left(\frac{\alpha_2}{2} \frac{ K}{H} + \frac{1}{2} \frac{\lambda A^2}{H} \right) = n^{-\frac{2h}{2h+1}} A^{\frac{2}{2h+1}} K^{\frac{2h}{2h+1}} H^{-1} \frac{2h+1}{2} \left( \frac{\alpha_2}{2h} \right)^\frac{2h}{2h+1}
     \end{gather*}
     and the expectation of the asymptotic distribution scaled by $n^\frac{2h}{2h+1}$ is $A^{\frac{2}{2h+1}} K^{\frac{2h}{2h+1}} H^{-1} C^s$, where $C^s = \frac{2h+1}{2} \left( \frac{\alpha_2}{2h} \right)^\frac{2h}{2h+1}$ is a constant not dependent on $P$.
\end{proof}

\subsection*{Theorem \ref{thm:boots}}
\begin{thm}
Consider estimators $\hat{t}^e_n$ defined in equation \eqref{eq:estimator} and $\hat{t}^b_n$ defined in equation \eqref{eq:estboot} and the estimand $t^*$ defined in equation \eqref{eq:estimand}. Under Assumptions \ref{ass:identification}, \ref{ass:consistency} (with $s=2$), \ref{ass:asymptotic}, and \ref{ass:4mom}, as $\hat{H}_n \rightarrow^p H$ and $n\rightarrow \infty$,
\begin{gather}
    n^{1 / 3}\left(\hat{t}^b_n - \hat{t}_n\right) \rightarrow^d (2\sqrt{K}/H)^{\frac{2}{3}}\argmax_r \left(B(r) - r^2 \right)
\end{gather}
where the limiting distribution is the same as in Theorem \ref{thm:asyd}.
\end{thm}
\begin{proof}
The result follows from the main theorem in \cite{cattaneo2020bootstrap}. I will show that the assumptions for their results hold. My case is the benchmark case with $m_n = m_0$ (in my notation, $m$), $M_n = M_0$ (in my notation, $P m$) and $q_n =1$. Hence, in my case, class $\mathcal{M}_n$ coincides with $m$. I will verify the five conditions CRA:
\begin{enumerate}
    \item Consider envelope $F = 2\left|\frac{D Y}{p(X)} - \frac{(1-D) Y}{(1-p(X))}\right|$: Assumption \ref{ass:sqintegr} guarantees it is square integrable. \\
    Since $M_n = M_0$, $\sup_t |M_n(t) - M_0(t)| = 0$. Under Assumption \ref{ass:uniq}, $t^*$ is the unique maximizer of $m$, and hence $\sup_{t \neq t^*} m(t) < m(t^*)$.
    \item $t^*$ is an interior point of $\mathcal{T}$ by Assumption \ref{ass:uniq}. Assumption \ref{ass:smoot} with $s=2$ guarantees that $Pm$ is twice continuously differentiable in a neighborhood of $t^*$.

    \item I proved that this condition is satisfied in the proof of Theorem \ref{thm:asyd}, under Assumption \ref{ass:sqintegr}. In my notation, $\delta = R$, $\mathcal{D}_n^{\delta '} =\mathcal{G}_R$, $\bar d_n^{\delta '} = G_R$.

    \item Note that:
    \begin{align}
        \E[ G_R^4] & =\E \left[ \left(\frac{DY}{p(X)} - \frac{(1-D) Y}{(1-p(X))}\right)^4 \ind\{\left|X-t^*\right| < R\} \right]\\
        & = 2 R\E \left[ \left(\frac{DY}{p(X)} - \frac{(1-D) Y}{(1-p(X))}\right)^4 \Bigg| X =t^* \right] + o(1).
    \end{align}
    and that
    \begin{gather}
        \E \left[ \left(\frac{DY}{p(X)} - \frac{(1-D) Y}{(1-p(X))}\right)^4 \right] = 
        \frac{1}{p(X)^3} \E \left[ Y_1^4 \right] + \frac{1}{(1-p(X))^3} \E \left[ Y_0^4 \right].
    \end{gather}
    Let $R=O(n^{-\frac{1}{3}})$. It follows from Assumption \ref{ass:4mom} that
    \begin{align}
        n^{-\frac{1}{3}}\E[ G_R^4] =& 2  n^{-\frac{1}{3}} R \left( \frac{1}{p(X)^3} \E \left[ Y_1^4 \right] + \frac{1}{(1-p(X))^3} \E \left[ Y_0^4 \right] \right) + o(1) \\
        =& o(1).
    \end{align}
    The second part of assumption 4 is the same as the first part of assumption 5 in Theorem 1.1 in \cite{kim1990cube}. The only difference is that it must be valid for $t=t^*$ and $t$ in a neighborhood of $t^*$. Since Assumptions \ref{ass:sqintegr} and \ref{ass:smoot} with $s=2$ are valid also in a neighborhood of $t^*$, the argument provided before holds also here.
    \item The first part of this assumption is the same as assumption 6 in \cite{kim1990cube}, and the second part is the same as assumption 7 in  \cite{kim1990cube}. Under Assumptions \ref{ass:smoot} with $s=2$, and \ref{ass:asymptotic}, arguments provided before hold.
\end{enumerate}
CRA assumptions 1-5 by \cite{cattaneo2020bootstrap} are satisfied; hence, their results in Theorem 1 hold. It implies
\begin{gather*}
    n^{1 / 3}\left(\hat{t}^b_n - \hat{t}_n\right) \rightarrow^d Q(r)
\end{gather*}
where $Q(r) = Q_1(r) + Q_0(r)$, and $Q_1$ is a non degenerate zero-mean Gaussian process, while $Q_0(r) = -\frac{1}{2}r^2H$. Process $Q(r)$ is the same as in Theorem \ref{thm:asyd}.
\end{proof}

\end{document}